\pgfplotsset{compat=1.15}
\newcommand{\citet}[1]{\cite{#1}}
\newcommandx{\unsure}[2][1=]{\todo[linecolor=green,backgroundcolor=green!25,bordercolor=green,#1]{\normalsize #2}}
\newcommandx{\improvement}[2][1=]{\todo[inline,linecolor=blue,backgroundcolor=blue!05,bordercolor=blue,#1]{\normalsize #2}}
\newcommandx{\infodone}[2][1=]{}
\newcommandx{\note}[2][1=]{\todo[linecolor=yellow,backgroundcolor=yellow!8,bordercolor=yellow,#1]{{\tiny (Just a note for us)\\}#2}}
\newcommandx{\floatmodel}[2][1=]{\todo[inline,linecolor=red,backgroundcolor=yellow!25,bordercolor=yellow,#1]{#2}}
\newcommandx{\thiswillnotshow}[2][1=]{\todo[disable,#1]{#2}}
\newcommandx{\karol}[2][1=]{\todo[inline,linecolor=blue,backgroundcolor=blue!25,bordercolor=blue,caption={\normalsize \textbf{Karol}},#1]{\normalsize #2}}
\newcommandx{\importanttodo}[2][1=]{\todo[linecolor=red,backgroundcolor=red!25,bordercolor=red,#1]{{\tiny (Important Todo!)\\}#2}}
\newcommandx{\importantquestion}[2][1=]{\todo[linecolor=red,backgroundcolor=red!50,bordercolor=red,#1]{{\tiny \textcolor{white}{(Important Question!)}\\}#2}}
\newcommandx{\resolvedtodoNotDoneYet}[3][1=]{\todo[linecolor=black,backgroundcolor=gray,bordercolor=black,#1]{\tiny #2\\\textbf{Answer:}#3}}
\newcommandx{\conflict}[1]{\todo[linecolor=red,backgroundcolor=red!55,bordercolor=red,inline]{{\tiny \textbf{(GIT CONFLICT/MERGE)}\\}\normalsize #1}}
\newcommand{\reviewertodo}[1]{\todo[linecolor=blue,backgroundcolor=blue!50,bordercolor=blue,inline]{{\tiny \textcolor{red}{(Reviewer's Comment!)}\\}#1}}
\newcommand{\kt}[1]{\todo{KF: #1}}
\newcommand{\ktp}[1]{\importanttodo{KF: #1}}
\newcommand{\kint}[1]{\todo[inline]{KF: #1}}
\newcommand{\kintp}[1]{\importanttodo[inline]{KF: #1}}
\newcommand{\soda}[1]{\todo[inline,linecolor=lime,bordercolor=lime,backgroundcolor=lime!30]{\scriptsize \textbf{SODA Reviewer:}\\\normalsize #1}}
\newcommand{\cpy}{\mathrm{cpy}}
\newtheorem{theorem}{Theorem}[section]
\newtheorem{definition}[theorem]{Definition}
\newtheorem{lemma}[theorem]{Lemma}
\newtheorem{claim}[theorem]{Claim}
\newtheorem*{remark}{Remark}
\newcommand{\thistheoremname}{}
\newtheorem*{genericthm}{\thistheoremname}
\newcommand{\opt}{\mathrm{OPT}}
\newcommand{\eps}{\varepsilon}
\newcommand{\Oh}{\mathcal{O}}
\newcommand{\nat}{\mathbb{N}}
\newcommand{\Tt}{\mathcal{T}}
\newcommand{\Pp}{\mathcal{P}}
\newcommand{\Ss}{\mathcal{S}}
\newcommand{\Cc}{\mathcal{C}}
\newcommand{\Reals}{\mathbb{R}}
\newcommand{\poly}{\mathrm{poly}}
\newcommand{\polylog}{\,\textup{polylog}}
\newcommand{\red}{\mathtt{R}}
\newcommand{\blue}{\mathtt{B}}
\newcommand{\is}{\mathrm{I}}
\newcommand{\segm}{\ensuremath{\mathrm{S}}}
\newcommand{\segmalt}{\ensuremath{\mathrm{R}}}
\newcommand{\tour}[1]{\ensuremath{\pi_{#1}}}
\newcommand{\touralt}[1]{\ensuremath{\pi'_{#1}}}
\newcommand{\touralthat}[1]{\ensuremath{\hat{\pi}'_{#1}}}
\newcommand{\tourim}[1]{\ensuremath{\overline{\pi_{#1}}}}
\newcommand{\col}[1]{\ensuremath{c(#1)}}
\newcommand{\rest}[3]{\ensuremath{{#1}[{#2},{#3}]}}
\newcommand{\costapprox}{$(1+\eps)$-cost-approximation} 
\newcommand{\wt}{\mathrm{wt}}
\newcommand{\dist}{\mathrm{dist}}
\newcommand{\pro}{\mathrm{pro}}
\newcommand{\grid}{\mathrm{grid}}
\newcommand{\bd}{\partial}
\renewcommand{\leq}{\leqslant}
\renewcommand{\le}{\leqslant}
\renewcommand{\ge}{\geqslant}
\newcommand\BST{{\sc Bicolored Noncrossing Spanning Trees}\xspace}
\newcommand{\BTSP}{{\sc Bicolored Noncrossing Traveling Salesman Tours}\xspace}
\newcommand\MSP{{\sc Multicolored Noncrossing Paths}\xspace}
\newcommand{\RBGS}{{\sc Red-Blue-Green Separation}\xspace}
\newcommand{\kCS}{{\sc $k$-Colored Points Polygonal Separation}\xspace}
\newcounter{openquestion}
\newenvironment{insight}
{\mdfsetup{%
    nobreak=true,
	middlelinecolor=gray,
	middlelinewidth=1pt,
	backgroundcolor=gray!10,
    innertopmargin=7pt,
	roundcorner=5pt}
\begin{mdframed}}
{\end{mdframed}}
\newcommand{\defproblem}[3]{
	\vspace{2mm}
	\vspace{1mm}
	\noindent\fbox{
		\begin{minipage}{0.95\textwidth}
			#1 \\
			{\bf{Input:}} #2  \\
			{\bf{Task:}} #3
		\end{minipage}
	}
	\vspace{2mm}
}
\title{Gap-ETH-Tight Approximation Schemes for Red-Green-Blue Separation and Bicolored Noncrossing Euclidean Travelling Salesman Tours}
\date{}
\author{
    François Dross\footnote{Univ.~Bordeaux, CNRS, Bordeaux INP, LaBRI, UMR 5800, F-33400 Talence, France, \texttt{francois.dross@u-bordeaux.fr}
	This work is part of the project CUTACOMBS that
    received funding from the European Research Council (ERC) under the European Unions Horizon
    2020 research and innovation programme (grant agreement No. 714704).}
    \and
    Krzysztof Fleszar\footnote{Institute of Informatics, University of Warsaw,
    	Poland, \texttt{kfleszar@mimuw.edu.pl}.
    	This work is part of the project TUgbOAT that has
    	received funding from the European Research Council (ERC) under the European Unions Horizon
    	2020 research and innovation programme (grant agreement No. 772346).}
    \and
    Karol W\k{e}grzycki\footnote{Saarland University and Max Planck Institute for Informatics,
        Saarbr\"ucken, Germany, \texttt{wegrzycki@cs.uni-saarland.de}. 
    This work is part of the project TIPEA that has
    received funding from the European Research Council (ERC) under the European Unions Horizon
    2020 research and innovation programme (grant agreement No. 850979).
    }
    \and
    Anna Zych-Pawlewicz\footnote{Institute of Informatics, University of Warsaw,
        Poland, \texttt{anka@mimuw.edu.pl}.
    This work is part of the project CUTACOMBS that has
    received funding from the European Research Council (ERC) under the European Unions Horizon
    2020 research and innovation programme (grant agreement No. 714704).
}
}
\begin{document}

\hypersetup{pageanchor=false}
\begin{titlepage}
\maketitle
\thispagestyle{empty}

\begin{abstract}
    In this paper, we study problems of connecting classes of points via noncrossing structures. Given a set of colored terminal points, we want to find a graph for each color that connects all terminals of its color with the restriction that no two graphs cross each other. We consider these problems both on the Euclidean plane and in planar graphs.

On the algorithmic side, we give a Gap-ETH-tight EPTAS for the two-colored traveling salesman problem as well as for the red-blue-green separation problem (in which we want to separate terminals of three colors with two noncrossing polygons of minimum length), both on the Euclidean plane. This improves the work of Arora and Chang (ICALP 2003) who gave a slower PTAS for the simpler red-blue separation problem.  For the case of unweighted plane graphs, we also show a PTAS for the two-colored traveling salesman problem.  All these results are based on our new patching procedure that might be of independent interest.

On the negative side, we show that the problem of connecting terminal pairs with noncrossing paths is NP-hard on the Euclidean plane, and that the problem of finding two noncrossing spanning trees is NP-hard in plane graphs.

\end{abstract}

\begin{picture}(0,0)
\put(-70,-270)
{\hbox{\includegraphics[width=40px]{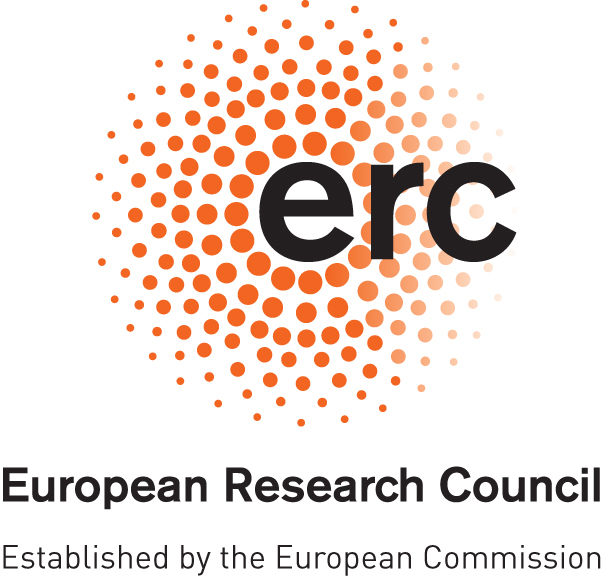}}}
\put(-80,-330)
{\hbox{\includegraphics[width=60px]{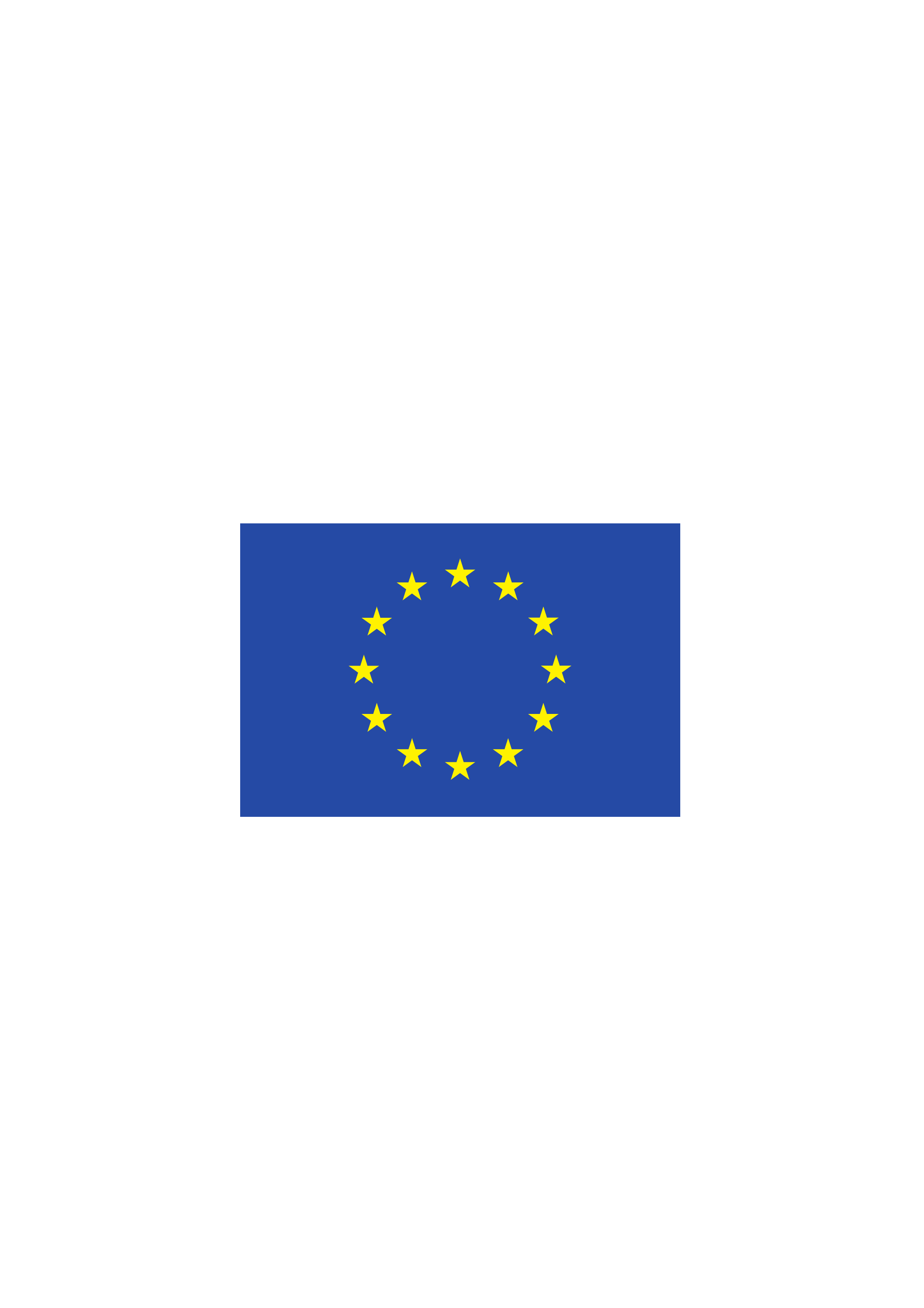}}}
\end{picture}

\end{titlepage}

\hypersetup{pageanchor=true}

\reviewertodo{variables S and R are sometimes slanted and sometimes not}
\reviewertodo{paragraph headings sometimes use title case and sometimes not}
\reviewertodo{X and Y as the axes of the plane are inconsistently capitalized}
\reviewertodo{add figures regarding allowed/disallowed solutions in plane and especially the graph setting.}
\reviewertodo{if quadtree cells can have no points, than update fig 5 accordingly; otherwise make it clear that they always have points.}
\reviewertodo{Make notation consistent! ("That description is hard to verify, due to changing notation.")}
\reviewertodo{define first(!) variables before using them. Bad/good example(?): "if a graph of n vertices has at most two terminals per color, and terminals are adjacent to h faces in total, then the problem can be solved in $2^{O(h^2)}n \log k$ time. [13]"\\ 309: "context first" is a generalization of “define before using":  “For a simple curve $\pi\in\mathbb{R}^2$ that contains two points $y$,$z$, we define…  (Otherwise the reader thinks of two points, then afterward learns they were supposed to be “on $\pi$”.)}
\reviewertodo{Our "technical tool" really only for 1 of 4 problems needed?? (... keeping the technical description of one tool ... that is needed for one of the four problems.)}
\reviewertodo{Using $\mathcal{S}$, $\ell$, and $S$ in close proximity makes this a little hard for your readers as well, by the way.}
\section{Introduction}
\kintp{I would emphasize very much that our results are Gap-ETH-tight! Even in abstract.}
\todo{KF: Make problem names consistent. Two or two-colored or bicolored?}
Imagine that you are given a set of cities on a map belonging to different communities. 
Your goal is to separate the communities from each other with fences such that each community remains connected. 
The total length of the fences should be minimized. 

 \note[inline]{KF: (For the future) One could study a variant of the problem where we do not want to minimize the fence length but where we want that the areas of the communities have the same size (where e.g. we always have a fixed third cycle that is the convex hull).}

\begin{figure}[ht!]
    \centering
    \includegraphics[width=0.6\linewidth]{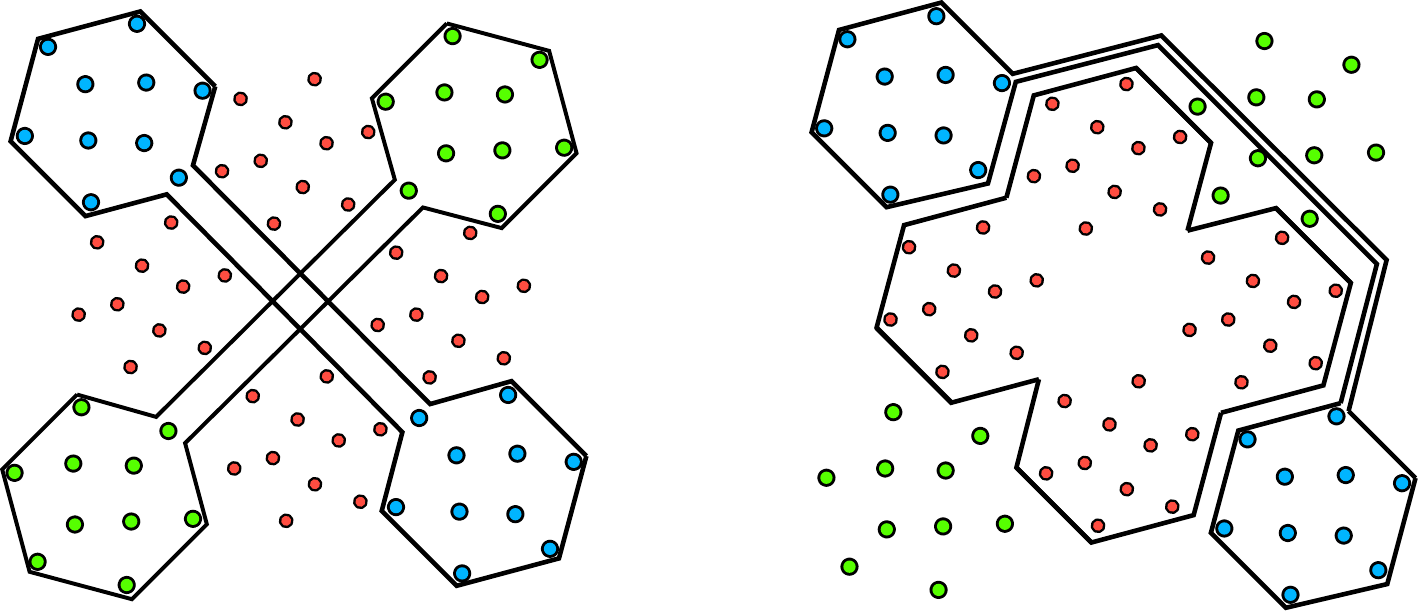}
    \caption{Points on the plane are labeled into three groups (red, blue and
        green) and the task is to find two polygons of total minimum length that separate
        the three classes. The solution on the left is not feasible because two
    polygons are crossing. The solution on the right is feasible because
polygons are noncrossing and points in different classes are separated.}
    \label{fig:rgb}
\end{figure}

In this paper, we study the computational complexity of a very general class of problems that can model such and similar 
questions and we provide efficient algorithms to solve them. 
Given~$n$ colored terminal
points (on the Euclidean plane or in a plane graph), the goal is to find a set of pairwise noncrossing geometric graphs of minimum total length that satisfy some requirement for each color.
For instance, in the problem mentioned above, we look for noncrossing cycles that separate terminals of different colors.

The study of such noncrossing problems is not only motivated by the fact that these problems are natural generalizations of well-examined fundamental problems (e.g. the spanning tree problem and the traveling salesman problem) that lead to the development of new techniques\kt{; e.g. \enquote{portal reduction}~\cite{isaac15}\\But not sure if this really counts as a (new) technique}.
Such problems are also motivated by their rich application in different fields, e.g., to VLSI design~\cite{takahashi1992planeGraphsPaths,takahashi1993rectPaths,KMN01,EN11} and set visualization of spatial data~\cite{ARRC11, HKKLS, EHKP15, HKKLS, JGAA-499, reinbacher2008delineating}. For instance, in a method for visualizing embedded and clustered graphs~\cite{EHKP15}, 
clusters are visualized by non-overlapping regions that are constructed on top of noncrossing trees connecting the vertices of each cluster.
In the following, we introduce the problems and discuss our contribution and related work. 
At the end, we give an overview of our paper.

\paragraph*{Red-Blue-Green Separation} 
In the \RBGS{} problem, 
we are given a set of
$n$ terminal points on the (Euclidean) \todo{KF: added Euclidean in parenthesis. Later we might drop the word Euclidean i think. In the preliminaries, we can say that we always mean the Euclidean plane.} plane. Each point is assigned to one of three colors (red,
blue, or green). The goal is to find two noncrossing Jordan curves that separate
terminals of different classes (i.e., every possible path connecting two terminals of different color must cross at least one of the curves) of minimum total length; see
Fig.~\ref{fig:rgb} for an illustration.

\importanttodo{Add a remark that we look at polygons and not lines}

Various forms of this problems have been studied in connection to computer
vision and collision
avoidance~\cite{np-red-blue,toussaint1986optimal,segmentation,retsinas2016imageSegmentation}, geographic
information retrieval~\cite{reinbacher2008delineating} or even finding pandemic
mitigation strategies~\cite{covid}.
To the best of our knowledge, the problem has been studied only for two colors which is already NP-hard~\cite{np-red-blue}.
Mata and Mitchel~\cite{mata-michell} obtained an $\Oh(\log n)$-approximation scheme which was subsequently strengthened by an~$\Oh(\log m)$~approximation algorithm by Gudmundsson and Levcopoulos~\cite{gudmundsson99RedBlue} where~$m$ is an output-sensitive parameter bounded by~$n$.
Finally, Arora and Chang~\cite{red-blue} proposed an $n (\log n)^{\Oh(1/\eps)}$-time algorithm that returns a $(1+\eps)$-approximation for any $\eps > 0$. 
The aforementioned approximation schemes crucially rely on
\emph{patching schemes} 
\infodone{KF: added "and other techniques" as I don't know whether there is a patching scheme in the algorithm of Gudmundsson.} 
and other techniques that seem to work only for two colors, where we look only for a single curve. 
Due to the many new technical challenges  
\todo{KF: I think we should discuss this issue somewhere (and say here \enquote{see section X for a discussion.}) in order to sell our paper better.} 
arising from the noncrossing constraint,
it is not obvious how
to generalize these results to more colors, where we want to find pairwise noncrossing curves. 

Our contribution is twofold.
First, we succeed in generalizing the result of Arora and
Chang~\cite{red-blue} to three colors by designing, among others, a new patching procedure of independent interest for noncrossing curves. 
Second, we improve the running time of Arora and Chang~\cite{red-blue} to\ktp{near-optimal?}~$2^{\Oh(1/\eps)}n \polylog(n)$ and thus obtain an EPTAS for three colors.

\begin{restatable}[Red-Blue-Green Separation]{theorem}{thmPTASrbg}
    \label{thm:rbg-sep}
    Euclidean Noncrossing \RBGS{} admits a randomized~$(1+\eps)$-approximation scheme \todo{KF: Shouldn't the word noncrossing be a part of the problem's name?}
    with $2^{\Oh(1/\eps)} n \polylog(n)$ running time.
\kintp{near-optimal?}
\end{restatable}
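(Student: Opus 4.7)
The plan is to extend Arora's randomized-dissection PTAS framework to the setting of two simultaneously noncrossing separating polygons, while driving the running time down to $2^{\Oh(1/\eps)} n \polylog(n)$. After standard preprocessing (scaling the bounding box to side length $L = \Oh(n/\eps)$ and rounding terminals to the integer grid, losing only a $(1+\eps)$ factor), I would build a randomly shifted quadtree of depth $\Oh(\log L)$. The PTAS is then obtained in two stages: a structure theorem showing that some near-optimal pair of polygons interacts simply with the quadtree, and a dynamic program that searches over such structured solutions.

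\textbf{Structure theorem via a new patching lemma.} The core step is to prove that, with constant probability over the random shift, there exists a pair of noncrossing separating polygons of length at most $(1+\eps)\opt$ that is \emph{$m$-light}: each polygon crosses every quadtree cell boundary at most $\Oh(1/\eps)$ times, and only at one of $m = \Oh(1/\eps)$ equally spaced portals. The proof would go by starting from OPT and patching the crossings on each cell boundary segment. The classical Arora--Mitchell patching replaces many crossings of a single curve by a detour of length $\Oh(\text{segment length})$, but here one must patch \emph{both} curves while maintaining the noncrossing property of the pair. I would view the interleaved sequence of red-blue and blue-green crossings along a segment as a chord pattern on the segment and reroute both curves along carefully chosen offset ``lanes'' on the two sides of the boundary, so that the new detours of the two polygons lie on opposite sides whenever their patches would otherwise meet. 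Snapping crossings to portals afterwards is a second, cheaper patching step. The total added length on each segment is charged to a constant times the length of OPT using that segment, and, as in Arora, the standard shift-expectation argument then bounds the total excess by $\eps \cdot \opt$.

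\textbf{Dynamic program.} Given the structure theorem, I would enumerate near-optimal $m$-light solutions bottom-up over the quadtree. The state at a cell records (i) the set of used portals on the cell boundary, (ii) for each used portal, which of the two polygons it belongs to, (iii) a noncrossing matching that specifies how same-polygon portals are connected inside the cell, and (iv) for each arc of the boundary between consecutive portals, which color region it belongs to. With $\Oh(1/\eps)$ portals and two polygons, the number of such states per cell is $2^{\Oh(1/\eps)}$. Merging the states of four children into a valid parent state is a local compatibility check, again in $2^{\Oh(1/\eps)}$ time, and summing over $\Oh(n \log n)$ cells yields the claimed $2^{\Oh(1/\eps)} n \polylog(n)$ running time.

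\textbf{Main obstacle.} The heart of the argument is the patching lemma for a noncrossing pair of curves. Patching each curve in isolation does not work: a detour that reduces crossings of the red-blue polygon can create new crossings with the blue-green polygon, which in turn may force further patches that cascade, both in length cost and in crossing count. The key difficulty is therefore to design a \emph{simultaneous} patching operation whose output is guaranteed to be noncrossing, whose length cost is charged locally to the boundary it operates on, and whose effect on any other boundary is strictly non-increasing in crossings. I expect this ``lane assignment'' argument based on the interleaving pattern of the two curves along each cell boundary to be the main technical hurdle, with the DP and the shift-expectation analysis being comparatively routine once the patching lemma is in place.
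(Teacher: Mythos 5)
Your high-level strategy (randomly shifted quadtree, a pair-patching lemma, then bottom-up DP over cells with noncrossing-matching states plus region coloring on boundary arcs) is aligned with the paper's plan, but there are two substantive gaps.

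The first concerns the claimed portal structure and the running time. You assert that a near-optimal pair of polygons can be made to cross each quadtree boundary only at one of $m=\Oh(1/\eps)$ \emph{equally spaced} portals, and that this yields $2^{\Oh(1/\eps)}n\polylog(n)$. That statement, as written, is not achievable: if a boundary $F$ is crossed $t$ times, snapping every crossing to a grid of $\Oh(1/\eps)$ equispaced portals on $F$ incurs cost $\Omega(t\cdot\eps\cdot\wt(F))$, and summing over the dissection this does not bound by $\eps\cdot\opt$ unless $t=\Oh(1)$ on every boundary, which the patching lemma alone cannot guarantee globally. Arora's original scheme avoids this by using $\Oh((\log L)/\eps)$ equispaced portals per boundary, but then the number of portal-subset choices is $\binom{\Oh((\log n)/\eps)}{k}$, giving only $n^{\Oh(1/\eps)}$ time. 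The paper instead imports the \emph{sparsity-sensitive} portal scheme of Kisfaludi-Bak et al.\ (their $r$-simple notion, adapted here to pairs): either a boundary uses at most two portals from a dense $\Oh(r\log L)$-grid, or it uses portals from a grid whose granularity $g\le r^2/m$ degrades with the number $m$ of crossings. Only with this crossing-count-dependent tradeoff does the structure theorem hold with $\Oh(1/r)=\Oh(\eps)$ expected relative error while the number of possible portal sets per boundary sums to $2^{\Oh(1/\eps)}\polylog n$. Your proposal does not identify this mechanism, and without it the running time claim does not follow.

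The second gap is the patching lemma for the noncrossing pair. You correctly identify it as the crux and sketch a ``lane assignment'' idea (offset corridors on both sides of the boundary). This is underspecified precisely at the point where it must work: after you detour one curve, you must guarantee that the other curve, whose own detour lives on the opposite side, remains a \emph{single} closed curve and that the pair stays noncrossing in a way that does not cascade. The paper's construction is structurally different and more involved: first the crossing sequence is simplified (via Claim 4.2 and Arora patching within monochromatic runs) so that colors appear in alternating \emph{pairs} along the segment; then a \emph{split} operation deliberately cuts each closed curve into several closed pieces forming a laminar family with local precise interfaces near the segment; and finally a \emph{merge} step re-glues these pieces through corridors along a covering segment $\overline{\segmalt}$ so that connectivity of both colors is restored simultaneously, yielding at most $10$ crossings at cost $\Oh(\wt(\segm))$. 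Nothing in your sketch addresses why the simultaneous reconnection of both colors must succeed without re-introducing crossings, which is exactly what the split/merge structure (and the parallel/laminar bookkeeping) is designed to handle. Absent that, the patching argument as you state it would fail on interleavings where greedy local lane choices force later conflicts.
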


\todo{KF: Problem names consistent in this paragraph with rest of paper?}
We note that our algorithm is near-optimal. Namely, Eades and
Rappaport~\cite{np-red-blue} show a reduction of the red-blue separation problem to
Euclidean TSP with an arbitrary small gap. This fact combined with a recent lower bound
on Euclidean TSP~\cite{focs21} means that an $2^{o(1/\eps)} \poly(n)$-time
approximation for Euclidean Noncrossing \RBGS would contradict Gap-ETH.

\paragraph*{Noncrossing Tours}
In the \BTSP{} problem, we are given $n$ terminal points on the plane. Each point is
colored either red or blue.
The task is to find two noncrossing
round-trip tours of minimum total length such that all red points are visited by
one tour and all blue points are visited by the other one. The problem is
a generalization of the classical Euclidean traveling salesman
problem~\cite{Papadimitriou77} and is therefore NP-hard.  
To the best
of our knowledge, this problem has not been considered before in the literature, although it nicely fits into the recent trend of finding noncrossing geometric
structures (see, e.g., the works of Polishchuk and Mitchell~\cite{noncrossing-paths-mitchell}, Bereg et al.~\cite{isaac15} and Kostitsyna et al.~\cite{kostitsyna2017SteinerArborescences}). 
By using the same patching procedure as for Theorem~\ref{thm:rbg-sep}, we obtain an EPTAS for this problem.


\kt{Say \emph{Euclidean problem name} or \emph{problem name on the Euclidean plane}?}
\begin{restatable}{theorem}{thmPTAStsp}\label{thm:eptas-noncrossingtsp}
    Euclidean \BTSP{} admits a randomized~$(1+\eps)$-approximation scheme with $2^{\Oh(1/\eps)} n \polylog(n)$ running time. 
    \kintp{near-optimal?}
\end{restatable}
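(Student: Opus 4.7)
My plan is to follow Arora's randomized dissection framework for Euclidean TSP, but with two crucial modifications needed to handle the second tour and the noncrossing constraint between them. First, I would preprocess the instance in the standard way: translate and scale the point set so that all terminals sit on an integer grid inside a bounding box of side $\Oh(n/\eps)$, losing at most an $\eps$ fraction of~$\opt$ in this rounding step. Then I would pick a uniformly random shift of a quadtree dissection of depth $\Oh(\log n)$ over the box, and place $m = \Oh(1/\eps)$ equally spaced portals on the boundary of every quadtree cell. By the standard shifting argument the expected cost charged to the dissection boundaries (the sum over tour edges of their length times the number of dissection levels they cross) is $\Oh(\log n / n) \cdot \opt$ per unit of length.

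Second, I would invoke the new patching procedure developed for Theorem~\ref{thm:rbg-sep} to turn a near-optimal pair of noncrossing tours into a \emph{portal-respecting, $m$-light} pair: for every cell~$C$ and every side~$s$ of~$\partial C$, the red tour crosses~$s$ at most a constant number of times and similarly for the blue tour, and all crossings occur at portals. Applying this patching simultaneously to both tours on every cell boundary of the random dissection, the expected total extra length is $\eps \cdot \opt$, and by construction the modified pair of curves remains pairwise noncrossing. This is the decisive step: a naive application of Arora's single-curve patching would patch each color separately and could introduce new red-blue crossings that no subsequent repair can remove without inflating the cost, which is precisely why the joint patching of Theorem~\ref{thm:rbg-sep} is needed.

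Third, I would run dynamic programming over the quadtree bottom-up. For a cell $C$, a DP state records: (i) which portals on $\partial C$ are used by the red tour and which by the blue tour (at most $\Oh(1/\eps)$ of each), (ii) the cyclic order in which all selected red and blue portals appear along $\partial C$, and (iii) for each color, a noncrossing pairing of its portal-visits specifying which visits are connected inside~$C$ by a subpath of the respective tour. The noncrossing constraint between the two tours forces the red pairing and the blue pairing, drawn together inside~$C$, to form a pair of mutually noncrossing matchings along $\partial C$, and only such pairs need be enumerated; this gives $2^{\Oh(1/\eps)}$ valid states per cell. At each internal cell the optimal merge of the four children's tables can be computed in $2^{\Oh(1/\eps)}$ time by enumerating compatible pairings on the internal segments (again exploiting the noncrossing structure to keep the enumeration polynomial in the number of states). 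With $\Oh(n \log n)$ cells, the total running time is $2^{\Oh(1/\eps)} n \polylog(n)$, and at the root one reads off a pair of closed tours, one red and one blue, that are portal-respecting, $m$-light, and pairwise noncrossing, of cost at most $(1+\eps)\opt$ in expectation. Standard derandomization or repetition arguments turn this into a Las Vegas/Monte Carlo $(1+\eps)$-approximation.

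The main obstacle is not the DP but the patching: I need to ensure that when a cell boundary is crossed many times by the red tour and many times by the blue tour, both can be reduced to constantly many crossings without creating any red-blue crossing, and while preserving the global ``one red cycle, one blue cycle'' topology. This is exactly the content of the patching lemma proved for Theorem~\ref{thm:rbg-sep}, whose correctness for a pair of noncrossing closed curves transfers verbatim to the TSP setting. The secondary subtlety is bookkeeping the interleaving of colors along $\partial C$ in the DP, which I would handle by encoding states as pairs of noncrossing matchings on a common cyclic sequence of slots; this keeps the state space at $2^{\Oh(1/\eps)}$ and fits cleanly into Arora's merge step.
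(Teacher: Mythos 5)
Your high-level plan is correct in identifying the two genuinely new ingredients: (i) a joint patching lemma that reduces the number of crossings of \emph{both} tours through a boundary simultaneously without introducing red–blue crossings (Lemma~\ref{lem:patch2}, which the paper indeed develops and which you correctly describe as decisive), and (ii) a dynamic program whose states encode pairs of mutually noncrossing matchings over colored portal slots. Both of these match the paper's approach.

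However, there is a genuine gap in your running-time analysis. You propose placing ``$m = \Oh(1/\eps)$ equally spaced portals on the boundary of every quadtree cell'' and then invoking ``the standard shifting argument.'' That combination does not give a $(1+\eps)$-approximation: in Arora's framework the portal spacing must be $\Oh(\eps L / (2^i\log L))$ on a level-$i$ boundary, i.e., $\Theta((\log L)/\eps)$ equispaced portals per boundary, precisely so that the expected snapping cost over a random shift telescopes to $\Oh(\eps)\cdot\opt$. With only $\Oh(1/\eps)$ equispaced portals, the snapping cost at each level inflates by a $\Theta(\log n)$ factor, and you lose the approximation guarantee. Conversely, if you keep $\Oh((\log n)/\eps)$ portals, then the DP with noncrossing matchings over $\partial C$ has $(\log n)^{\Oh(1/\eps)}$ states per cell, giving the running time of Arora and Chang's PTAS for red–blue separation rather than the claimed $2^{\Oh(1/\eps)} n\polylog(n)$. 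The paper closes this gap with the \emph{sparsity-sensitive patching} of Kisfaludi-Bak et al.~\cite{focs21}, adapted to pairs of noncrossing tours (Theorem~\ref{thm:struct} and Definitions~\ref{def:simple},~\ref{def:fine}): a boundary crossed at most twice gets the dense Arora grid $\grid(F,\lfloor (\log L)/\eps\rfloor)$, while a boundary crossed $m$ times gets the coarser grid $\grid(F, \Oh(1/(\eps^2 m)))$. The union over all admissible ``fine portal sets'' has only $2^{\Oh(1/\eps)}\polylog(n)$ possibilities per cell (Claim~3.4 of~\cite{focs21}), which is what yields the stated EPTAS bound. This is the second new ingredient your plan is missing; the joint patching lemma alone is not enough.

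A secondary point: the paper observes that even the initial perturbation to the integer grid is nontrivial for colored instances — one must snap terminals of different colors to disjoint grid positions (e.g., red to even and blue to odd $x$-coordinates) and then invoke Lemma~\ref{lem:patch2} again to repair any crossings introduced along the snapping segments. Your plan treats this as a ``standard'' rounding loss, which undersells a step the paper highlights as a new technical issue.
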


Note that the aforementioned recent lower bound on Euclidean TSP~\cite{focs21} implies that our result for \BTSP{} is Gap-ETH-tight: there is no~$2^{o(1/\eps)} \poly(n)$-time~$(1+\eps)$-approximation scheme unless Gap-ETH fails.

We also consider the \BTSP problem in plane graphs (that is, in planar graphs with a given planar embedding). There, the task is to draw the tours 
within a \enquote{thick} drawing of the given embedding without any crossings; see
Section~\ref{sec:prelim} for a formal definition. 
We show that arguments for this problem in the
geometric setting seamlessly transfer to the combinatorial setting of planar graphs. 
\todo{KF: Then why don't we obtain an EPTAS for the planar graphs setting?}
\todo{KF: We should say somewhere that we were not able to transfer the arguments of our other(!) Euclidean problems to the graph setting.}
This fact allows us to obtain a PTAS also in plane graphs.

\begin{restatable}{theorem}{thmPTAStsp}
\BTSP in plane unweighted graphs admits an~$(1+\eps)$-approximation scheme with $f(\eps) n^{\Oh(1/\eps)}$ running time for some function~$f$.
\end{restatable}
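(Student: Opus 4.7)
The plan is to reduce the plane-graph setting to a geometric-like one via a thick drawing of the input embedding: replace every vertex by a small disk and every edge by a thin rectangular strip of unit width, so that a walk in the graph corresponds to a curve in the union of these regions whose length (in the planar metric) equals, up to a negligible additive term, the number of edge traversals. Two tours are noncrossing in the plane-graph sense exactly when their realizations as curves in the thick drawing can be taken to be noncrossing. Once the problem is rephrased this way, we reuse the same high-level approach that underlies Theorem~\ref{thm:eptas-noncrossingtsp}: impose a randomly shifted hierarchical dissection of the bounding box of the thick drawing, insert $\Oh(1/\eps)$ portals per dissection edge, apply the new patching procedure to the optimal pair of tours so that on the shifted dissection each tour crosses every dissection edge only $\Oh(1/\eps)$ times and only through portals, incurring an expected additive cost of $\eps\cdot\opt$.

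After patching, the algorithm is a bottom-up dynamic program on the dissection tree. For each cell~$C$ the DP state records, for each tour, the multiset of portals used on $\partial C$ together with the noncrossing pairing of these portal visits into subtours/substrings (essentially a pair of balanced parenthesizations interleaved in a noncrossing way), and the state must be consistent with the graph edges of the thick drawing that lie inside~$C$. Any two admissible substates of the four child cells are combined in $f(\eps)$ time. Because in the graph version we cannot afford Arora's structural shortcuts that reduce the portal alphabet to $2^{\Oh(1/\eps)}$, we enumerate all configurations naively: the number of states per cell is $f(\eps)$, there are $\Oh(n^{\Oh(1/\eps)})$ cells once the dissection depth is truncated appropriately, and the overall running time is $f(\eps)\cdot n^{\Oh(1/\eps)}$.

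The main obstacle is that, unlike in the Euclidean case, the tours in a plane graph must follow actual graph edges, whereas the patching procedure of Theorem~\ref{thm:eptas-noncrossingtsp} re-routes short segments along arbitrary curves. We handle this by aligning the dissection with faces of the embedding (placing dissection cuts so that every dissection edge crosses only strips of the thick drawing, never vertex disks) and by implementing each re-routing as a detour along the boundary of the faces adjacent to the cut. A straightforward charging argument shows that such graph-based re-routings are longer than their Euclidean counterparts by at most a constant factor, which is absorbed into $\eps$ by rescaling. With this alignment, the structural theorem, the patching analysis, and the DP correctness proof all carry over essentially verbatim from the Euclidean argument, yielding the claimed PTAS.
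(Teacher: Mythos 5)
The paper's proof proceeds along a completely different, graph-theoretic route: it uses the hierarchical planar-separator decomposition of Grigni, Koutsoupias and Papadimitriou rather than Arora's geometric dissection. Each node of a binary decomposition tree stores a subgraph together with $\Oh(f)$ vertex-disjoint \emph{portal paths}; removing them splits the subgraph into pieces of size at most $\tfrac{5}{6}$ of the original, and the \emph{graph-metric} total length of all portal paths in the tree is $\Oh(\tfrac{d}{f}|G|)$ with $f=\Theta((\log n)/\eps)$, $d=\Theta(\log n)$. Lemma~\ref{lem:patch2} is then applied along these portal paths (and along planar separators inside leaf subgraphs) to bound the number of crossings, the DP state records a noncrossing pairing on $\Oh((\log n)/\eps)$ portal paths (whence the factor $n^{\Oh(1/\eps)}$), and the key normalization is that $\opt \ge n/2$ because the graph is unweighted and every vertex must be visited, so the patching cost $\Oh(\eps n)$ is an $\Oh(\eps)$ fraction of $\opt$.

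Your proposal has a genuine gap that the paper's choice of decomposition is specifically designed to avoid. You claim that re-routing the tour in a thick drawing, by detouring along faces adjacent to a geometric dissection cut, costs only a constant factor more than the Euclidean re-routing; this is false in general. Two crossing points that are Euclidean-close on a dissection edge can be far apart in the graph metric (a long path with a single chord, a fine spiral, etc.), and a patching segment of tiny Euclidean length has no bounded-cost realization as a walk in the graph. Separately, Arora's expected patching cost is an $\eps$-fraction of the bounding-box scale $L$, and in the Euclidean setting one preprocesses so that $L=\Theta(\opt)$; for a planar graph the geometry of a thick drawing is decoupled from the graph-metric $\opt$, so this charging does not carry over. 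These are not technical nuisances that can be absorbed by rescaling — they are precisely why the paper replaces the quadtree by a combinatorial separator hierarchy whose portal paths have small \emph{graph} length by construction. Also note that your running-time accounting is off: the exponent $\Oh(1/\eps)$ should arise from enumerating the $2^{\Oh(|\Pp|)}$ noncrossing pairings on $|\Pp|=\Oh((\log n)/\eps)$ portal paths per node, not from the number of cells.
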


\kt{I would create an extra paragraph somewhere in the beginning where we discuss our patching lemma and include the text here below.}
At this point, we want to remark on the similarities and differences to the most related paper to our work: 
Bereg et al.~\cite{isaac15} 
consider the problem of connecting same-colored terminals with noncrossing Steiner trees
of total minimum length. 
Apart from a~$\min(k(1+\eps), \sqrt{n}\log k)$-approximation algorithm for~$k$ colors (based on ideas of previous papers~\cite{LMMPS95, CHKL13, EHKP15})
and a~$(5/3+\eps)$-approximation algorithm for three colors, they 
obtain also a PTAS for the case of two colors.
Similarly, as we do in our approximation schemes for \RBGS{} and \BTSP{}, they use a plane dissection technique of Arora~\cite{Arora98} to give their algorithmic
result\todo{KF: What do we mean by this (algorithmic result)? The PTAS?}. 
Moreover, they also design a patching procedure that allows them to
limit the number of times that the trees cross a boundary cell. Given the
nature of Steiner tree problems, their patching procedure can place additional Steiner points in
portals, which is not possible in our case of tours. As a consequence, their patching procedure is substantially simpler and
less surprising. 

\paragraph*{Noncrossing Spanning Trees}
We initiate the study of \BST{} in plane graphs. 
The problem of finding a minimum spanning tree of a graph is well known to
be solvable in polynomial time. 
Here, we study a natural generalization into two
colors. In the \BST{} problem, \emph{every} vertex is colored either red or blue (and called terminal). 
The task is
to find two \emph{noncrossing} trees, that we call \emph{spanning trees}, of minimum total length such that the first
tree visits all red vertices and the second one visits all blue vertices. 
Two trees are noncrossing if they can be drawn in a \enquote{thick} drawing of the given embedding without crossing (see
Section~\ref{sec:prelim} for a formal definition). 

\kintp{Define $n$ somewhere!}
\begin{figure}[htpb]
    \centering
    \includegraphics[width=0.4\linewidth]{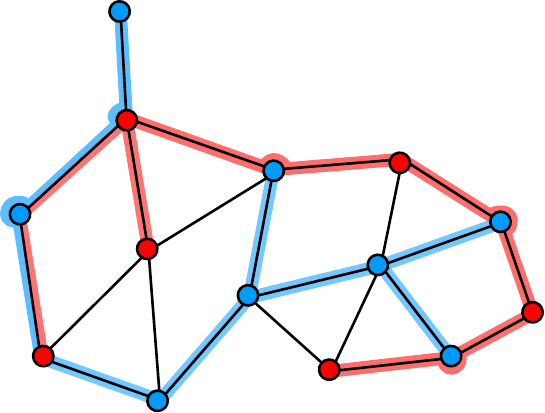}
    \caption{Illustration of \BST{} problem. Note that we can use vertices of
    \todo[inline]{KF: We have to reference this figure!}
    \todo[inline]{KF: We can think of whether drawing really a thick graph and within the thick graph our trees.}
        opposite colors as a connector. Moreover, an edge may be used by both
        trees multiple times. We require, however, that the drawings of the spanning trees
        do not intersect.}%
    \label{fig:}
\end{figure}

Contrasting the result for one color, we show
that our two-colored version of the spanning tree problem is NP-hard in plane graphs.

\begin{restatable}{theorem}{thmNPbst}
\BST in plane \kintp{unweighted?} graphs is NP-hard.\kt{Or even NP-complete?}
\end{restatable}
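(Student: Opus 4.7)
The plan is to establish NP-hardness via a polynomial-time reduction from a suitable NP-hard problem on plane graphs. My first candidate would be \emph{Planar Steiner Tree}, which is NP-hard even on planar graphs with unit-weight edges; an equally natural alternative is \emph{Two Disjoint Connected Subgraphs} on planar graphs. I prefer the Steiner-tree route, because the terminal/non-terminal split maps cleanly onto the two-color structure, and the objective (minimizing edge count) matches \BST{} in the unweighted setting.

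Given a Planar Steiner Tree instance $(G,R,k)$ with $G$ a plane graph, I would build a plane graph $G'$ by keeping $G$, coloring every vertex of $R$ red, every vertex of $V(G)\setminus R$ blue, and attaching a modest \emph{blue scaffold} in the outer face: one extra blue hub $b^\star$ sitting outside $G$, connected through the outer face to each vertex of $V(G)\setminus R$ along edges that lie entirely in the unbounded region of the current embedding. The intent is that (i) the blue spanning tree in any solution has a forced, easily computable minimum cost $C_B$, essentially realized by the scaffold plus a fixed matching to the blue vertices of $G$, and (ii) no red Steiner tree drawn inside $G$ can cross the scaffold, so the two color problems decouple. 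The red spanning tree then must be, in the combinatorial sense, a tree connecting all vertices of $R$ inside $G$ possibly using blue vertices as Steiner nodes, which is exactly a Steiner tree for $R$ in $G$. I would then argue that $G$ admits a Steiner tree of size at most $k$ if and only if $G'$ admits noncrossing spanning trees of total edge count at most $k+C_B$.

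The main obstacle I expect is the noncrossing condition interacting with the scaffold in both directions. For completeness, given an optimal Steiner tree $F^\star$ for $R$ in $G$, I must exhibit a blue tree that does not cross $F^\star$ in the thick drawing; routing the blue tree through the outer-face scaffold and hugging the blue side of each interior blue vertex seems workable, but care is required for blue vertices that are \emph{cut vertices} of $F^\star$ or lie inside a bounded red face, since the blue tree must still reach them without crossing. For soundness, I must rule out the possibility that a clever noncrossing pair cheats by using blue-scaffold edges to shortcut red connectivity; this is why I want the scaffold to be attached only to pure blue vertices and to lie outside $G$, so that the red tree can never shortcut through it without forcing a crossing. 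If even a carefully designed scaffold turns out to leak in this way, I would switch to a Planar Hamiltonian Path reduction, using a Hamilton-path gadget to force the shape of the red tree and relegating the blue class to a passive obstacle around each gadget.
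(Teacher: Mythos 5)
The proposal correctly identifies \textsc{Planar Steiner Tree} as the source problem (this matches the paper), and the high-level intuition that non-terminal vertices should become the second color with a ``forced'' cheap tree is also on the right track. However, the scaffold construction has a fundamental topological flaw that you flag only obliquely and do not resolve: you want to add, for every non-terminal vertex $v$ of $G$, an edge from a new hub $b^\star$ in the outer face to $v$ ``along edges that lie entirely in the unbounded region of the current embedding.'' This is impossible whenever $v$ is an interior vertex of the plane graph $G$ (which is the generic case): any curve from the outer face to $v$ must cross some edge of $G$, so the scaffold cannot be drawn in the plane and $G'$ is simply not a plane graph. The difficulty you raise --- ``blue vertices that lie inside a bounded red face'' --- is a symptom, but the actual obstruction arises before the red tree even enters the picture, at the level of defining the instance. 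Switching to a Hamiltonian-path gadget does not cure this; you would still need a mechanism for the second color to reach interior vertices without crossing the first.

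The paper's construction is designed exactly to overcome this. Rather than trying to reach interior non-terminals from outside, it subdivides every edge of $G$ once and then installs, in \emph{each face} of the subdivided graph, a cycle on new vertices together with pendants to the face's boundary vertices. The second tree is then built from the planar dual's tree/cotree structure: a spanning tree $\Ss$ of the dual $G^\star$ that avoids the Steiner tree $\Tt$ yields a tree in the modified graph that traverses the face cycles, passes between adjacent faces only through subdivision vertices of edges not in $\Tt$, and therefore never crosses $\Tt$. This gives the second-color tree a canonical, locally available ``corridor'' inside every face, replacing the global outer-face scaffold with a per-face gadget. The budget $\ell = 2k + |G''| - |P| - 1$ then cleanly splits into the (doubled, due to subdivision) Steiner-tree cost and the forced cost $|R|-1$ of the second tree. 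If you want to preserve the spirit of your approach, the takeaway is that the ``scaffold'' must live inside the faces of $G$ (via the dual), not outside $G$.
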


 We complement our hardness result with a PTAS.
 
\begin{restatable}{theorem}{thmPTASbst}
\BST in plane unweighted graphs admits an~$(1+\eps)$-approximation scheme with $f(\eps) n^{\Oh(1/\eps)}$ running time for some function~$f$.
\end{restatable}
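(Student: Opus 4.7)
The plan is to adapt Baker's layer-decomposition framework to the bicolored non-crossing setting and then solve each $k$-outerplanar subproblem via dynamic programming on a tree decomposition of bounded width, in the same spirit as our PTAS for \BTSP{} on plane graphs from the preceding theorem.

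Fix $k = \Theta(1/\eps)$ and root a BFS at an arbitrary vertex to produce layers $L_0, L_1, \ldots$. For a shift $s \in \{0, \ldots, k-1\}$, declare the layers $L_s, L_{s+k}, L_{s+2k}, \ldots$ as \emph{separator layers}. A standard charging argument (each edge belongs to at most two consecutive layers, so it is hit by a random shift with probability $\Oh(1/k)$), combined with a patching lemma in the spirit of the one we develop for \BTSP{}, shows that we can re-route both spanning trees so that they cross each separator layer only through $\Oh(1/\eps)$ predetermined portals, at an additional cost of at most $\eps\cdot\opt$. We try all $k$ shifts and all $n^{\Oh(1/\eps)}$ portal placements per separator layer and keep the best outcome.

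Between consecutive separator layers the induced subgraph is $k$-outerplanar and therefore admits a tree decomposition of width $\Oh(k)=\Oh(1/\eps)$. On each slice we run a dynamic program whose state for a bag $B$ records, for every vertex of $B$, which of the partial trees contains it (red, blue, both as a Steiner vertex, or neither), the connected-component partition of the vertices of $B$ inside each partial tree, and a non-crossing chord diagram describing how the two partial trees intersect $\bd B$ in the embedding. Because $|B|=\Oh(1/\eps)$ and the number of such enriched states is bounded by some function $f(\eps)$, the standard introduce/forget/join transitions run in $f(\eps)\cdot n$ time per slice. Combined with the portal enumeration, the total running time is $f(\eps)\,n^{\Oh(1/\eps)}$.

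The main obstacle is the correct handling of the non-crossing constraint inside the DP: unlike a conventional Steiner-forest DP, each bag must remember not only the connectivity of both partial trees but also the \emph{cyclic order} in which their edges cross $\bd B$ in the embedding, and every transition must reject extensions that would force a red and a blue edge to interleave. A secondary difficulty is that the patching argument, originally designed for geometric curves, must be re-formulated for the \enquote{thick} drawing model used for plane graphs, where the same edge may carry fragments of both trees.
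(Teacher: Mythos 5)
Your proposal takes a genuinely different route from the paper. The paper proves this theorem as a corollary of its \BTSP{} PTAS on plane graphs, which is built on the Grigni--Koutsoupias--Papadimitriou recursive balanced-separator decomposition: a binary decomposition tree whose nodes are split by $\Oh(f)$ vertex-disjoint \emph{portal paths} of total length $\Oh(\eps n)$, with a dynamic program whose state at each decomposition node is a noncrossing pairing over those $\Oh((\log n)/\eps)$ portal paths. For \BST{} specifically, the paper replaces Lemma~\ref{lem:patch2} by the (significantly simpler) patching lemma of Bereg et al.~\cite{isaac15}, which is allowed to place Steiner vertices at portals. By contrast you propose a Baker-style BFS-layer decomposition with a treewidth DP on each $k$-outerplanar slice. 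Both live in the general portal-based planar-PTAS universe, but the decompositions and the DP structures are different.

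However, as written the proposal has two concrete gaps. First, the running-time accounting does not close: you write that you \enquote{try all $n^{\Oh(1/\eps)}$ portal placements per separator layer,} but there may be $\Theta(n/k)$ separator layers, and enumerating the placements jointly over all of them is $n^{\Oh(n/(k\eps))}$, not $n^{\Oh(1/\eps)}$. To avoid this you would need a sweep DP across slices whose state carries the portal configuration on the current layer boundary, and you would have to show how this interleaves with the introduce/forget/join transitions of the per-slice treewidth DP; the proposal asserts the $f(\eps)n^{\Oh(1/\eps)}$ bound without describing this. Second, the patching-cost argument is not complete: the observation that each edge lands in a separator layer with probability $\Oh(1/k)$ bounds the expected total \emph{length of the separator layers} by $\Oh(\eps n)=\Oh(\eps\cdot\opt)$ (the analog of the Grigni et al.\ bound on total portal-path length), but you then need to show that re-routing all crossings to $\Oh(1/\eps)$ portals \emph{per layer} costs only a constant factor of that layer length, and you must do it without introducing genuinely new crossings of the other colored tree. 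That is precisely the nontrivial step that Lemma~\ref{lem:patch2} (or, for trees, the Steiner-point patching lemma of Bereg et al.) supplies in the paper; asserting a \enquote{patching lemma in the spirit of} it is not a proof. Finally, a small inaccuracy: the preceding theorem's PTAS for \BTSP{} in plane graphs does \emph{not} use Baker's layers, so your claim that your proof is \enquote{in the same spirit as the preceding theorem} mischaracterizes that proof.
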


We note that the trees are allowed to visit vertices of the other color. Thus,
they can be also viewed as Steiner trees with the constraint that every Steiner
point is a colored vertex. However, because of this constraint, \BST is not a
generalization of the Steiner tree problem in planar graphs and therefore its
hardness status is not directly determined by that problem.

In the literature, only restricted variants of \BST have been considered
(without a bound on the number~$k$ of colors, though): if there are at most two
terminals per color, the problem can be solved in~${2^{\Oh(h^2)}n \log k}$ time
where~$h$ is the number of face boundaries containing all terminals, and~$n$ is
the number of vertices~\cite{EN11}.  If there are only constantly many terminals
per color and all terminals lie on~$h=2$ face boundaries, the problem is
solvable even in~${\Oh(n \log n)}$ time~\cite{KMN01}.  On the plane, only a
slightly different variant of the problem has been
studied~\cite{kindermann2018partition} where the spanning trees are not allowed
to visit terminals of other colors.  The authors obtained NP-hardness proofs and
polynomial-time algorithms for various special cases but their results have not
yet been formally published\todo{KF: Check (and ask the authors about the status/possible issues.}. 
There is also an approximation result for a
noncrossing problem called colored spanning trees~\cite{EHKP15}. However,
despite its name, this problem is a generalization of the Steiner tree problem
allowing arbitrary Steiner points. 

\paragraph*{Noncrossing Paths}
We also revisit the rather classic problem that we call \MSP. 
Given a set of \importanttodo{KF: use the word terminal? We have to be consistent throughout the whole paper!}
terminal point pairs in the plane, the task is to connect each pair by a path such that
the paths are pairwise noncrossing and their total length is minimized. 
One can
think about this problem as an extension of our results
\todo{KF: Which of our results? For the separation,tours, or spanning tree problems? And do we mean results or rather extension of the problem. In any case, I don't see it.} 
to a large number of
colors. 

The best-known result is a randomized~${\Oh(\sqrt n \log n)}$-approximation algorithm 
by Chan et al.~\cite{CHKL13}. 
It is based on the heuristic by Liebling et al.~\cite{LMMPS95}
to connect the terminal pairs along a single tour through all the points.
As a special case of the colored noncrossing Steiner forest problem, 
the problem also admits a (deterministic)~$(1+\eps)n/2$-approximation scheme~\cite{isaac15}. 

The problem has been also studied in the presence of obstacle polygons whose boundaries contain all the terminal pairs. 
For the case of a single obstacle,
Papadopoulou~\cite{Pap99} gave
a linear-time algorithm, whereas for the general case, Erickson and Nayyeri~\cite{EN11} obtained an algorithm exponential in the number of obstacles.
A practical extension to thick paths has been considered by Polishchuk and Mitchell~\cite{PM07}.

In this paper, we complement these algorithmic results and demonstrate that the
problem is NP-hard\footnote{Note that a technical report~\cite{bastert1998geometric} claims
    NP-hardness, but 
    it seems to be missing a subtle detail in
the analysis~\cite{PrivFekete}.}.

\begin{restatable}{theorem}{thmNPpaths}  \label{thm:MSPhard}
Euclidean \MSP{} is NP-hard\kt{Or even NP-complete?}.
\end{restatable}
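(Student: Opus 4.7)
The plan is to reduce from a suitable planar constraint satisfaction problem, most naturally from Planar Rectilinear Monotone 3-SAT, which is known to be NP-hard. Given a formula $\varphi$ together with a planar orthogonal embedding, I would construct in polynomial time a set of terminal pairs in the plane and a length threshold $L$ such that $\varphi$ is satisfiable if and only if the resulting \MSP{} instance admits noncrossing paths of total length at most $L$. The overall strategy mirrors the reductions used in~\cite{np-red-blue}: encode Boolean information in the \emph{choice of homotopy class} that a minimum-length path is forced into, and use the noncrossing constraint as the only coupling between gadgets.

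The construction uses three gadgets, all laid out on a rectangular grid so that the relevant distances are rational (or can be handled by the standard Euclidean discretization argument). A \emph{variable gadget} is a terminal pair placed inside a narrow rectangular region with exactly two shortest connecting routes---a top route and a bottom route of equal length---encoding the two truth values. A \emph{wire gadget} is a chain of analogous narrow rectangles; the noncrossing condition forces consecutive segments to carry consistent routings, thereby propagating the truth value along the planar embedding. A \emph{clause gadget} attaches three incident wires to an additional terminal pair whose short route exists only when at least one incident wire arrives in its satisfying configuration; the other configurations geometrically block the clause path and force a detour that exceeds the length budget.

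The analysis splits into the two usual directions. The easy direction takes a satisfying assignment, routes every pair through the corridor prescribed by the assignment, and checks that the total length matches a precomputed target. The hard direction is to argue that any solution within the length budget (i) routes each variable pair through one of the two prescribed corridors, and (ii) routes each clause pair through a homotopy class witnessed by a satisfying literal. This is exactly where the footnote concerning~\cite{bastert1998geometric} warns of a gap: one must rule out \enquote{cheating} solutions that save length in one place to pay for an inconsistency in another. I would address this by assigning every local deviation from a corridor a positive penalty bounded away from zero, calibrating the slack in $L$ so the total penalty budget is strictly smaller than a single penalty, and accounting carefully for the effect of the noncrossing constraint on which detours are even available.

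The main obstacle is precisely this second direction: designing the gadgets so that length minimization rigidly enforces the combinatorial behavior, in spite of the global flexibility that a continuous Euclidean setting offers. Beyond the penalty-accounting scheme above, I expect to need a careful homotopy argument showing that any path either lies in its intended corridor or has a quantifiable detour, together with a separation condition between wires so that crossings cannot be \enquote{amortized} across neighboring gadgets. If a direct penalty argument proves unwieldy, a fallback is to thicken the gadgets with dense clusters of auxiliary terminal pairs whose own noncrossing paths form effective obstacles, reducing the Euclidean optimization to a genuinely combinatorial routing problem that can be analyzed by standard planar-disjoint-paths arguments.
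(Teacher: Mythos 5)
Your proposal is a plan, not a proof, and the plan leaves unresolved precisely the step the paper identifies as the subtle one. You correctly flag the ``hard direction''---ruling out cheating solutions that amortize savings across gadgets---as the main obstacle, and you are right that this is where the earlier report of Bastert and Fekete went astray. But your penalty-accounting scheme is not worked out, and as stated it cannot work: in the continuous Euclidean setting a deviation from a corridor is not automatically ``bounded away from zero,'' so the phrase ``calibrating the slack in $L$ so the total penalty budget is strictly smaller than a single penalty'' does not by itself give you a lower bound. You need two further ideas. First, the observation that an optimal solution only bends at terminals (otherwise shifting the inflexion point inward strictly shortens some path); this discretizes the candidate solutions and is what makes ``a positive constant'' penalty statements meaningful. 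Second---and this is the crux---you need a mechanism that prevents a saving at one scale from paying for a penalty at another scale. The paper achieves this with a strict four-level hierarchy and copy amplification: pairs at level $i$ are replicated so many times that the total saving any level-$(>\!i)$ path could ever realize by perturbing a level-$i$ path is strictly dominated by the cost of perturbing even one copy. Your fallback of ``thickening gadgets with dense clusters of auxiliary terminal pairs'' gestures at this, but you treat it as a last resort and as a way to ``reduce to a combinatorial routing problem,'' which is not how the amplification is actually used.

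Your choice of source problem is also a genuine departure. You reduce from a planar rectilinear SAT variant to avoid crossings; the paper reduces from (non-planar) \textsc{Max 2-SAT} and handles crossings with a dedicated crossing gadget that lets a vertical signal chain pass a horizontal variable chain without coupling them. Neither choice is wrong, but the paper's choice of an \emph{optimization} problem pays off in the analysis: with everything else rigid, the total solution length is an affine function of the number of unsatisfied clauses, so the forward and backward directions reduce to a single length computation. Your decision reduction would instead need the clause gadget to create an infeasible-within-budget configuration under any unsatisfying assignment, which adds a second rigidity burden. Until you supply a concrete mechanism that enforces rigidity (the analogue of the level hierarchy plus copy amplification, or a demonstrably equivalent obstacle construction), the proposal does not constitute a proof.
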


Thus, similar as \BST{} in plane graphs, \MSP{} is a nice example of a problem whose hardness
comes from the noncrossing constraint.

\paragraph*{More Related Work}
\todo{KF: Fill his paragraph with more content?}
Very recently, Abrahamsen et al.~\cite{abrahamsen2020geometric} showed that the 
red-blue separation problem for geometric objects is polynomially time solvable (for two colors) 
if we allow any number of separating polygons
and relax the connectivity requirement by allowing objects of the same color to lie 
in different regions.
Interestingly,
they showed that the red-blue-green separation problem for arbitrary objects remains
NP-hard (for three colors) even if objects within each polygon do not need to stay
connected.

We remark that the red-blue-green separation problem is related to the painter's
problem~\cite{goethem2017painter}, where a rectangular grid and a
set of colors $\chi = \{\text{red},\text{blue}\}$ is given.  Each cell $s$ in
the grid is assigned a subset of colors $\chi_S \subseteq \chi$ and should be
partitioned such that, for each color $c \in \chi_s$, at least one piece in the
cell is identified with $c$. The question is to decide if there is a partition
of each cell in the grid such that the unions of the resulting red and blue
pieces form two connected polygons. 
Van Goethem et al.~\cite{goethem2017painter} introduce a patching procedure that is similar to
ours and used it to show that if the partition exists, then there exists one
with a bounded complexity per cell. 
In a sense the problems are incomparable: in the red-blue-green separation
problem a feasible solution always exists and the hard part is to find
a minimum length solution. Additionally, our patching needs to be significantly more robust: we have to execute the
patching procedure in portals (not in the whole cell) and our polygons can contain each
other. Because of these technical issues, the number of crossings in our patching
procedure is slightly higher (but it is still a constant).

\paragraph*{Overview of the Paper}
We start with a condensed overview of our techniques (Section~\ref{sec:techniques}) followed by a formal definition of our problems and a short preliminaries (Section~\ref{sec:prelim}).
Then, in Section~\ref{sec:patching}, we present our patching procedure which is a key ingredient to all our positive results:
an EPTAS for Euclidean \BTSP{} (Section~\ref{sec:noncrossing-tsp}) and for  Euclidean \RBGS{} (Section~\ref{sec:rgbseparation}), and
a PTAS for \BTSP{} and \BST{}, both in planar graphs (Section~\ref{sec:ptas-graphs}).
%
At the end, we show NP-hardness of \BST{} (Section~\ref{sec:bst-np}) as well of \MSP{} (Section~\ref{sec:msp-np}).

\importantquestion[inline]{KF: Maybe we should write about natural results that we didn't obtain. (Plane versus graph results.). And about more colors (see SDOA rebuttal).}

\section{Our Techniques}\label{sec:techniques}

Our algorithmic results are based on the plane dissection technique
proposed by Arora~\cite{Arora98}. In this framework, the space is recursively
dissected into squares in order to determine a \emph{quadtree} of
$\Oh(\log(n/\eps))$ levels. The idea is to look for the solution that
traverses neighboring cells via preselected \emph{portals} on the boundary of
the cells of the quadtree. Arora~\cite{Arora98}
defines portals as a set of $\Oh(\log(n)/\eps)$ equidistant points and uses dynamic
programming to efficiently find a portal-respecting solution. 
He also shows that the expected cost of the portal respecting-solution is only
$\eps$ times longer than the optimum one. See Section~\ref{sec:prelim} for a
detailed introduction to this framework~\cite{Arora98}.

This technique will not allow us to get a near-linear running time. 
One approach to
reduce the running time of Arora's algorithm~\cite{Arora98} is to use Euclidean
spanners~\cite{RaoS98}. However, it is not clear how to ensure that our solution is noncrossing 
when we restrict ourselves to solutions that respect only the edges
of the spanner. To overcome this obstacle, we use a recently developed
\emph{sparsity-sensitive patching} procedure~\cite{focs21}.  Roughly speaking, this
allows us to reduce the number of portals from $\Oh(\log(n)/\eps)$ to
$\Oh(1/\eps^2)$ many portals without a need for spanners. 

The sparsity-sensitive patching procedure was designed explicitly to solve 
the Euclidean traveling salesman problem. We
modify the procedure and show it can also be used for
the red-blue-green separation problem. In Section~\ref{sec:noncrossing-tsp}, we show
that the analysis behind sparsity-sensitive patching can be adapted to also work
for the noncrossing (traveling salesman) tours problem. This is quite subtle and we need to
resolve multiple technical issues. For example, even the initial perturbation
step used in previous work~\cite{Arora98,focs21} needs to be modified as we cannot simply
place the points of the same color in the same point. We need to snap to the
grid in such a way, that after looking at the original positions of points, the
solution is still noncrossing. We overcome all of these technical issues in
Section~\ref{sec:noncrossing-tsp}.

\subparagraph*{Patching for noncrossing polygons}

The main contribution of our work lies in the analysis. For example, in order
to bound the number of states in our dynamic programming of our framework, we need to bound the
number of times the portal respecting solution is intersecting the quadtree
cells. Our insight is a novel \emph{patching technique} that works for noncrossing
polygons. 
\begin{insight}
    \textbf{Our Insight:}
    Two noncrossing polygons can be modified (with low cost) in such a way that
    they cross the boundaries of a random quadtree only a constant number of times.
\end{insight}
Similarly to the patching scheme of Arora, our patching technique allows us to limit the number of
times that two noncrossing polygons cross a boundary of the quadtree. 
However, our situation is significantly more complicated. We
need to guarantee that, after the patching is done, the tours remain noncrossing
and contain the same set of points as the original ones. 

\begin{figure}[ht!]
    \centering
    \includegraphics[width=0.8\linewidth]{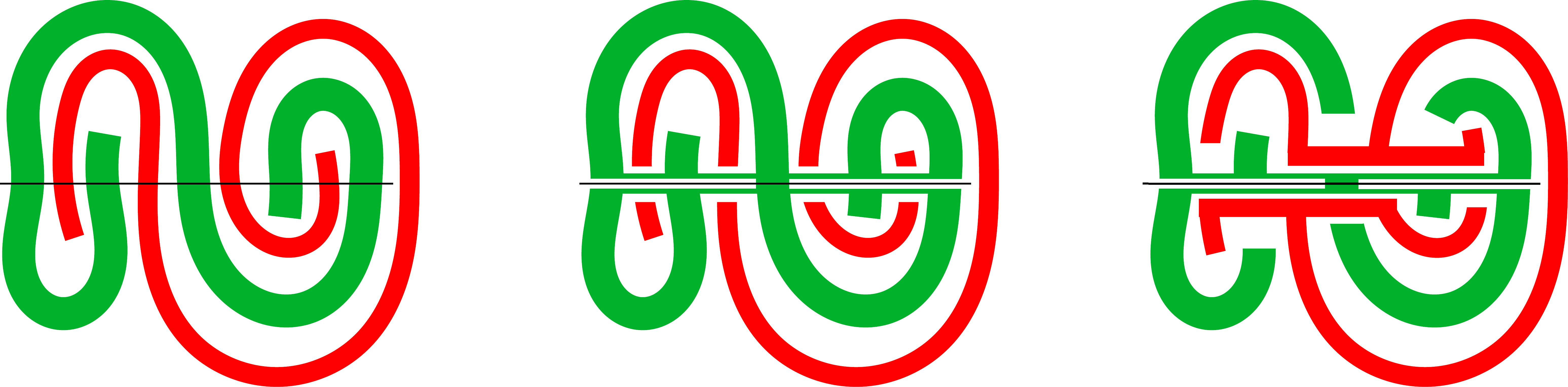}
    \caption{An example of our patching procedure for a case of disjoint objects that cross a horizontal line segment. 
    	The left figure
    presents the original objects (marked red and green). 
    In the intermediate
    step, we patch the green object by possibly disconnecting the red object. 
    Then we carefully cut the green object without disconnecting it to make \enquote{room} to reconnect the red object and guarantee connectivity. 
    Note that the boundaries of the objects cross the horizontal segment
    only four times in the final picture. The whole construction is presented in
    Section~\ref{sec:patching}}%
    \label{fig:patching}
\end{figure}

Now we briefly sketch the idea behind our patching scheme (see
Section~\ref{sec:patching} for a rigorous proof and Figure~\ref{fig:patching}
for a schematic overview). First, we apply several simplification rules to
our polygons. We want to ensure that two colors are repeating consecutively in
pairs. For example, in Figure~\ref{fig:patching}, the borders of the green and the red
polygon are pairwise intertwined in left-to-right order. Next, we introduce a
\emph{split} operation. After this operation is completed, the number of crossings is
bounded, but it is not necessarily true that the polygons of the same color are
connected (see the center of the Figure~\ref{fig:patching}). At this point, the polygons
form a \emph{laminar family}. As the next step, we \emph{merge} the
polygons from the laminar families at so-called \emph{precise interfaces} and combine
them based on how the original polygons were connected. We refer the reader to
Section~\ref{sec:patching} for a formal argument. In the end, we can bound the
number of crossings by $10$. 

\subparagraph*{Insight behind the lower bounds}

Next, we describe our techniques behind the lower bounds. For brevity, we focus
on the lower bound for \MSP. We reduce from Max-2SAT. Recall that the input
to the problem is just a set of terminal points. We need these points to have some ``rough
structure'' on top of which the paths from higher levels will bend. This
inspires the idea to partition the terminals into four levels.  These levels
are needed to devise appropriate gadgets.  We want to place the terminals in
such a way that gadgets from higher levels do not interfere with gadgets
from lower levels.  The idea is to place terminal pairs from lower levels more
densely than from the higher ones. This allows us to enforce the optimum
solution to be exact on the pairs from level one. See Section~\ref{sec:msp-np}
for a formal proof.

The NP-hardness proof of \BST is more standard. We reduce from the Steiner tree problem
and show that the vertices of the second tree can be used as Steiner
vertices in the first tree. We include the proof for completeness in
Section~\ref{sec:bst-np}.

\section{Preliminaries}
\label{sec:prelim}
\label{sec:formalDef}
    \reviewertodo{"make it clear that a tour does not have to consist of straight line segments and not necessarily between sites. But we can assume that in some sense."}
    \reviewertodo{define "open" in "open curves" (the endpoints belong to the curves but they are different)}
    \reviewertodo{TODO? add info for planar graphs "(most easily described by rotation-system-based data structures like the quad-edge or DCEL that would allow to split the cycle of vertices and faces around a vertex where the color changes.)"?}
    \reviewertodo{Shorten formal problem definition as intuition already given (but extend it for journal version)}

\importanttodo{KF: We should define \enquote{noncrossing} and \emph{to cross} (also/only in Arora's sense)}\todo{KF: We should make clear that by saying \enquote{two noncrossing curves}, we do not mean two curves that are simple but might cross each other.}
We start by formally defining the setting of our problems and discussing technical subtleties and differences to related work; see Section~\ref{sec:formalDef}. Then we review some known tools that we later use to prove\todo{KF: construct/design?} our approximation schemes; see Section~\ref{sec:arorasTools}.

\infodone{KF: I rephrased the whole section (after our submission) as there were some inconsistencies.}
\todo{KF: Shorten the whole section (for the conference version)}

In the geometric setting, the input to our problem consists of $n$ points in
$\mathbb{R}^2$, called \emph{terminals}. Each terminal is colored with one of
$k$ available colors.
In the combinatorial setting, the input consists of an edge-weighted planar graph
together with a planar \todo{KF: combinatorial? Or topological?\\In the geometric setting paragraph, we assume the combinatorial embedding.} embedding. Some of the vertices (called terminals) are
colored with one of the~$k$ colors. 

Intuitively speaking, the goal in both settings is to draw pairwise-disjoint geometric graphs of minimum total length.
Depending on the problem, either the graphs separate any two terminals of different colors (but not of the same color), or each graph connects all the terminals of one color. 
In the first case, we thus have~$k-1$ graphs, in the second one,~$k$ many.
For the geometric setting, we
consider the Euclidean length, whereas for the combinatorial setting, the length depends
on the edge weights. Additionally, for the combinatorial setting, we require that the
solution 
lies within the point set of 
a \enquote{thick} planar drawing of the input graph (realizing the embedding given in the input) 
\todo{KF: We can also say: \enquote{in the given aforementioned planar drawing of the input graph}, but then we need to specify that the input gives a topological embedding, and then we have the problem how to say that we mean a \enquote{thick} variant of that problem.} 
with all Steiner points residing on the vertices.

\paragraph*{Geometric Setting.}
Before we define the goal for the geometric setting, we first precise what we mean by a \emph{feasible} and an \emph{optimal} \emph{solution}.
    \reviewertodo{	187-190: You don’t have conditions that require TSP to be tours.  (e.g., 3 open curves could share an endpoint).
    	You insist that TSP be connected; you also want that (and tour conditions) for each colored curve in RGB Separation.}
A solution consists of several \emph{drawings} \todo{KF: The word drawings might be confusing. Maybe it's better to call components?} 
where each drawing is the point set of finitely many simple open \todo{KF: I think \enquote{open} was confusing some referee, so we could define it.} 
curves in
the plane where two curves may intersect only at their endpoints and where all the curves are connected together (that is, any two curves either share an endpoint or there is a third curve such that both are connected with it). 
Thus, a drawing can also contain closed curves built up by two or more open curves.\todo{KF: then why not to allow closed curves directly? (The reason seems to be that we want that each curve has one or two special points that are used to connect to other curves.\\KF: However, I'm not so happy anymore about this definition; maybe we can make it simpler, e.g., we have a finite set of curves, closed or open, s.t. any two curves intersect at most at one point, and all curves are connected.}
The \emph{length of a drawing} is the total Euclidean length of the curves it consists
of. The \emph{length of a solution} is the total length of its drawings. 
The (Euclidean) length of an object $\pi$ is denoted by $\wt(\pi)$.
Instead of length, we also use the term \emph{cost}.

For the \kCS{} problem, 
a feasible solution consists of~$k-1$ pairwise disjoint drawings 
where each drawing is a closed curve and where 
any two terminals of different colors are \emph{separated} by at least one of the drawings, that is, 
any path connecting these two terminals 
contains (intersects) at least one point from the solution. 
In other words, the solution consists of~$k-1$ closed curves 
that subdivide the plane into~$k$ regions 
each one containing all the terminals of one of the~$k$ colors\todo{KF: in its interior?}.
We examined the case for~$k=3$ called the \RBGS{} problem.

For all the other problems studied in this paper, 
a feasible solution consists of~$k$ pairwise disjoint \emph{feasible drawings}, one for each color.
A drawing is feasible if it \emph{visits} (intersects) all terminals of its color. 
For the \BTSP problem, we additionally require that the feasible drawings are closed curves that we also call \emph{tours} 
\todo{KF: or \emph{traveling salesman tours}? Also \emph{paths}? We should also somewhere discuss the path variant (at least why our results do not carry over to TSP path.}.

\begin{figure}[ht!]
	\centering
	\begin{subfigure}[t]{.22\textwidth}
		\centering
		\includegraphics[page=1]{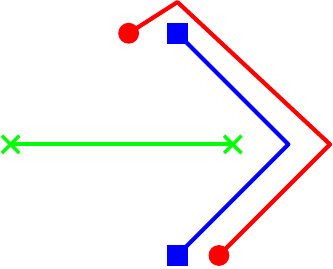} 
		\caption{A feasible solution.}
		\label{fig:no-min-sol-a}
	\end{subfigure}
    \hspace{0.7cm}
	\begin{subfigure}[t]{.22\textwidth}
		\centering
		\includegraphics[page=2]{img/no-min-sol.pdf} 
		\caption{A cheaper feasible solution.}
		\label{fig:no-min-sol-b}
	\end{subfigure}
    \hspace{0.5cm}
\begin{subfigure}[t]{.22\textwidth}
	\centering
	\includegraphics[page=3]{img/no-min-sol.pdf} 
	\caption{An infeasible but noncrossing solution in the limit.}
	\label{fig:no-min-sol-limit}
\end{subfigure}
	\caption{Three pairs of terminals (red disks, green crosses and blue squares) are connected via noncrossing paths.
	\reviewertodo{Fig 4 caption: identify the problem that this is illustrating (it is neither of the two just mentioned: Bicolored noncrossing TSP nor Red-Blue-Green Separation).  The cycles->circles typo had me confused.  
		In general, I like the biochemist convention that the caption should contain all the context needed to understand a figure. (See, e.g., https://journals.plos.org/ploscompbiol/article?id=10.1371/journal.pcbi.1003833 Rule 4.)}}
\label{fig:no-min-sol}
\end{figure}

Informally, the goal is to find a solution of minimum length.
However, note that for
some input instances there does not exist a feasible solution of minimum length as there might
be always a cheaper one obtained by drawing the curves closer to each other; see
Fig.~\ref{fig:no-min-sol} for an example. 
In the limit, we obtain
curves that may intersect, but that do not cross in the sense introduced by
Arora~\cite{Arora98}. 
In this sense, we define \emph{optimum solutions} to be minimum-length solutions infinitesimally close to the limit (but still disjoint).
However, for simplicity, we assume that the length of an optimum solution equals the length~$\opt$ in the limit.
This assumption is also motivated by the known open problem whether it is possible to efficiently compute the Euclidean length of two sets of line segments up to a precision sufficient to distinguish their lengths.
%
%
    

Let us now formally define the goal of our $(1+\eps)$-approximation schemes (\kt{Move the definition of~$\eps$ somewhere before! (Recall reviewers' comments)} for constant~$\eps>0$) for the geometric setting in this paper. The goal is not to find a feasible almost-optimal solution but a so-called \emph{\costapprox}.
\begin{definition}\label{def:costapprox}
Let~$X$ be a problem in our geometric setting and let $\mathcal{F}$ be the (possibly infinite) set of
all feasible solutions to the problem. 
A number~$x$ is a \costapprox{} of~$X$
if
\begin{enumerate}
    \item for every solution $S \in \mathcal{F}$, it holds that $x \le (1+\eps) \wt(S)$, and
    \item there exists a solution $S' \in \mathcal{F}$ such that $(1-\eps) \wt(S') \le x$.
\end{enumerate}
\end{definition}

The two conditions guarantee that we return a number that is sufficiently
close to the lengths of the cheapest feasible solutions (and thus to $\opt$). 
\todo{KF: We should add some comment about it. E.g., say, why we do it and whether our algorithms can be extended to also output the solution?} 
The factor~$(1-\eps)$ in the second condition 
is only introduced to simplify our analysis: if there are two points in our solution 
that are infinitesimally close, we will count their distance to be $0$ (and
therefore, in principle we may return a number $x$ that is smaller than $\wt(S)$
for every $S \in \mathcal{F}$). 
However, the second condition can be easily lifted to the more natural condition $\wt(S') \le X$ 
by multiplying $x$ by $1/(1-\eps)$ and using a different constant in the first condition (on which~$\eps$ will depend). 

\paragraph*{Combinatorial Setting.}
Next, we consider our problems in plane edge-weighted graphs. 
Formally, we fix any planar straight-line drawing of the given graph that
realizes the given embedding. Without loss of generality, all vertices are drawn
as closed discs with positive radius such that no two vertices intersect, and
all edges have some positive width (in the sense of taking the Minkowski sum of
the line segment corresponding to the edge and a disc of a sufficiently small
radius). We require that an edge intersects only the vertices corresponding to its
endpoints and no other edge. 
Terminals are drawn as points in the interior of their corresponding vertices.

A feasible solution is defined in the same way as in the geometric setting \todo{KF: recall in parentheses the definition} with one
additional constraint. Namely, 
every curve is a straight-line segment drawn either within (i) the disc of a
single vertex, or (ii) 
\todo{KF: complicated definition; simpler(?) proposal: within a single edge such that the endpoints of the curve lie in opposite endpoints of the edge (by our definition, every edge slightly overlaps with the vertices corresponding to its endpoints)}
within a single edge and the discs of the edge's endpoints
such that the endpoints of the curve lie in opposite endpoints of the edge. If
the curve lies within a vertex, we define its length to be $0$, otherwise its
length is defined to be the weight of the corresponding edge. \todo{KF: Add a figure!}

In contrast to the geometric setting, there exist feasible solutions of
minimum length $\opt$ that we will call optimal solutions. Our goal is to compute the value $\opt$. \importantquestion{KF: Right? Or do we compute an actual optimal solution?}  

%

\paragraph*{Difference to Previous Definitions.}

Note that our definition of a feasible solution slightly differs from the one in
most previous works~\cite{EN11}. 
There the drawings of different
colors are allowed to touch each other but must not cross in an intuitive sense.
(Hence, such drawings are the limit case of our drawings when the length is
locally minimal). The advantage of the previous definition is the existence of
a minimum solution. The disadvantage is that crossings are formally
defined via an untangling argument~\cite{EN11} 
which is hard to formally define
\todo{KF: In a journal version, we could discuss this further showing a picture that the old formal definitions have some formal flaws} 
and which essentially reduces to the statement that
a solution is noncrossing if it can be made disjoint by an arbitrary small increase in the costs.
Furthermore, the classical definition allows solutions with possibly undesirable properties;
for example, a feasible solution can contain arbitrarily complicated networks of zero total length residing on a
single point in the plane.

\subsection{Useful Tools from Arora's Approach}
\label{sec:arorasTools}

In this section, we describe the tools that Arora~\cite{Arora98,Arora2007} introduced to design his approximation scheme \infodone{KF: changed schemes to scheme} for geometric problems and that we further use in this paper. 
As defined above, the input of our geometric problems consists of 
a set $P$ of terminal points in~$\Reals^2$ colored with some number of colors. It is fairly standard 
to preprocess the input points so that $P \subseteq \{0,\ldots,L\}^2$ for some integer $L = \Theta(n / \eps )$ that is a power of $2$, even with the guarantee that points of different colors do not end up with the same coordinates. 
\infodone{KF: changed it to~$\Theta$ as othewise our analysis in the end will not work where we bound the perturbation.} 
\kintp{$\eps$ undefined. Define it in the beginning of the whole section?(We us it also in subsection 1.)}
\todo{KF: $n$ also undefined.}
For noncolored problems, such a peturbation costs only a $(1+\eps)$ factor in the approximation ratio, but it is not obvious whether the bound holds also for multicolored problems due to the noncrossing constraint. 
Our contribution is to show the same cost bound for mutlicolored problems considered in this paper.\ktp{Mention this in our contribution in the introduction?}
%
%
Since the perturbation differs from problem to problem, we separately describe this preprocessing in detail in the respective sections.
%


\todo{KF: Add: \emph{Recall that\dots}} A \emph{salesman path} \importanttodo{KF: But a path is not closed!?} or a \emph{tour} for a terminal set $P$ is a closed path
that visits all points from $P$. Let us point out that in a tour the terminals are
not necessarily connected with straight line segments. \todo{KF: Why do we point this out? Do we keep in mind that Lemma 2.2 assumes line segments?\\KF: If we really want to say it, let's extend the preceding sentence by saying \dots and that does not necessarily consist of straight-line segments.} 
For a line segment
$\segm$ and a tour $\pi$, we let $\is(\pi,\segm)$ denote the (finite) set of points where
$\pi$ crosses~$\segm$ (following Arora's intuitive definition of
crossing~\cite{Arora98}\todo{KF: Check also~\cite{Arora2007}})\importanttodo[inline]{KF: We should repeat Arora's definition of noncrossing here (or even better: somewhere above) if we use it!}. The set~$\is(\pi,\segm)$ can also be interpreted as the
set of intersection points of $\pi$ with $\segm$\importantquestion{KF: But then we might have infinite many intersection points?}, noting that infinitesimally close curves do not intersect. 
The following folklore lemma is typically used to
reduce the number of times a tour crosses a given segment.

\infodone[inline]{KF: Since we don't seem to use this lemma anywhere, I rephrased it in order\\\quad1) to give it the same structure and the same (sufficiently low) level of detail as our patching lemma,\\\quad2) to make it more consistent with what Arora actually claimed (he did not speak of an arbitrary point $p$ (though this variant can be proven in the same way as Arora did),\\\quad3) to make it less misleading and precise (as the old variant had some small mistakes and inconsistencies because of the high level of detail).}
\begin{lemma}[Patching Lemma~\cite{Arora98,Arora2007}]\label{lem:patch}
	Let $\segm$ be a line segment and let~$\pi$ be a simple closed curve. \todo{KF: To be more precise, we could say: \emph{For any infinitesimal neighborhood of $\segm$, there exists \dots}}
	There exists a simple closed curve~$\pi'$ 
	such that~$|\is(\pi',\segm)|\leq 2$ and~$\wt(\pi') \leq
	\wt(\pi)+ 3\cdot \wt(\segm)$. Moreover, $\pi$ and
	$\pi'$ differ only within an infinitesimal neighborhood of $\segm$. 
\end{lemma}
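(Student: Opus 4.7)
The plan is to follow Arora's original MST-based patching argument. Let $p_1, p_2, \ldots, p_M$ denote the points of $\is(\pi, \segm)$ listed in the order they appear along $\segm$. Since $\pi$ is a simple closed curve crossing $\segm$ transversally, $M$ must be even; if $M \leq 2$ we simply take $\pi' := \pi$, so we may assume $M \geq 4$. Let $\bar\segm := [p_1, p_M] \subseteq \segm$, so that $\wt(\bar\segm) \leq \wt(\segm)$.

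The first step is to augment $\pi$ by two infinitesimal copies of $\bar\segm$, one pushed slightly above $\segm$ and one pushed slightly below. Viewing the result as a multigraph in the strip around $\segm$, every interior crossing $p_i$ has degree $4$ (two from $\pi$, one from each copy of $\bar\segm$), the endpoints $p_1, p_M$ also have even degree, and all other points retain the degree they had in $\pi$. Hence the augmented object admits an Eulerian closed walk, whose total weight is at most $\wt(\pi) + 2\wt(\segm)$. The second step is to turn this Eulerian walk into a simple closed curve $\pi'$ by shortcutting. The key is to choose the traversal so that the crossings $p_1, \ldots, p_M$ are paired into consecutive pairs $(p_{2i-1}, p_{2i})$ and that each pair is connected via the copy of $\bar\segm$ that lies on the side of $\segm$ where the corresponding arc of $\pi$ between them runs. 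Performing this rerouting eliminates every crossing except possibly one pair used to reconcile global connectivity, so that $|\is(\pi',\segm)| \leq 2$. A standard shortcutting accounting shows the extra cost is at most one further copy of $\bar\segm$, giving $\wt(\pi') \leq \wt(\pi) + 3\wt(\segm)$. By construction, all changes happen within an infinitesimal strip around $\segm$.

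The main obstacle is in the shortcutting step: the Eulerian walk must be shortcut into a \emph{simple} closed curve crossing $\segm$ at most twice, and the pairing of the $p_i$ must be consistent with the cyclic order in which $\pi$ visits them. This is handled by the classical observation that consecutive crossings along $\segm$ can always be matched so that the induced pairing is laminar (non-crossing) with respect to the cyclic order on $\pi$, allowing the rerouting along the two copies of $\bar\segm$ to be realized without introducing self-intersections. This combinatorial matching is precisely the feature our new patching procedure in Section~\ref{sec:patching} generalizes to the two-colored setting, where the pairing must additionally respect the colors of the arcs.
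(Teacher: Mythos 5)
The paper does not actually prove Lemma~\ref{lem:patch}; it states it and cites Arora~\cite{Arora98,Arora2007}, so there is no in-paper proof to compare against. Your sketch does follow the general shape of Arora's original argument (route along two infinitesimal copies of a sub-segment of $\segm$, appeal to Eulerian-ness, then shortcut), but a few steps are incorrect or too thin as written.

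The most concrete error is the parity claim: it is \emph{not} true that $M = |\is(\pi,\segm)|$ must be even. By the Jordan curve theorem, the parity of $M$ equals the parity of the number of endpoints of $\segm$ lying inside $\pi$ — so $M$ is odd whenever exactly one endpoint of $\segm$ is enclosed by $\pi$ (take $\pi$ a unit circle centred at the origin and $\segm$ the segment from $(0,0)$ to $(2,0)$: $M=1$). Nothing in the lemma statement excludes this case, so the case $M=3$ (and more generally odd $M\ge 3$) must be handled; your pairing $(p_{2i-1},p_{2i})$ and the subsequent analysis silently assume $M$ even. There is also a small bookkeeping slip: after subdividing both copies of $\bar\segm$ at the interior crossings $p_2,\dots,p_{M-1}$, each interior $p_i$ has degree $6$ (two from $\pi$, two from the upper copy passing through, two from the lower copy), not $4$. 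This does not break Eulerian-ness, but it shows the degree accounting was not actually carried out.

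The more serious gap is the final step. Having an Eulerian tour of the augmented multigraph does not by itself bound $|\is(\pi',\segm)|$: the arcs of $\pi$ still switch sides at the $p_i$'s, and the tour traverses all of them, so without further argument the tour still crosses $\segm$ roughly $M$ times. You then assert that one can \emph{choose} a traversal and a shortcutting that pairs crossings laminarly and eliminates all but one pair, at an extra cost of ``one further copy of $\bar\segm$.'' This is exactly the crux of Arora's lemma, and it is asserted rather than proved: you do not explain why a consecutive-along-$\segm$ pairing is compatible with the cyclic order in which $\pi$ visits the $p_i$'s, nor why the result remains a single simple closed curve rather than splitting into several components (this is precisely why Arora adds a spanning tree \emph{and} a matching, the tree to restore connectivity and the matching to fix parities), nor where the third $\wt(\segm)$ in the budget actually goes — shortcutting by the triangle inequality only shortens, so the cost accounting as phrased cannot produce the stated bound. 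Your closing remark — that the hard part is a noncrossing pairing compatible with the cyclic structure of the tour, and that this is what Section~\ref{sec:patching} generalizes to two colors — is a correct and useful observation, but for the single-color lemma itself the pairing/connectivity/cost argument needs to be spelled out (or, as the paper does, simply delegated to~\cite{Arora98,Arora2007}).
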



\paragraph*{Dissection and Quadtree.} Now we introduce a commonly used hierarchy
to decompose (subspaces of) $\mathbb{R}^2$ 
that will be instrumental to guide our algorithm.
Pick $a_1,a_2 \in \{1,\ldots,L\}$ independently and uniformly at random
and define the \emph{random shift vector} as $\mathbf{a}:=(a_1,a_2)$. Consider the square
\[
C(\mathbf{a}) := [-a_1 + 1/2,\,2L-a_1+1/2] \times [-a_2 + 1/2,\,2L-a_2+1/2]
\]
Note that $C(\mathbf{a})$ has side length $2L$ and each point from $P$ is contained in $C(\mathbf{a})$ by the assumption $P \subseteq \{0,\ldots,L\}^2$.

Let the \emph{dissection} $D(\mathbf{a})$ of $C(\mathbf{a})$ be the tree $T$
that is recursively defined as follows. With each vertex of $T$, we associate an axis-aligned
square in $\mathbb{R}^2$ that we call a \emph{cell} of the dissection. 
For the root of $T$, this is $C(\mathbf{a})$. 
If a vertex~$v$ of~$T$ is associated with a square of unit length, we make it a leaf of~$T$.
Otherwise,~$v$ has four children whose cells partition the cell of~$v$. Formally, if~$[l_1,u_1] \times [l_2,u_2]$ is the square associated with~$v$, then each of its four children is associated with a different square~$I_1 \times I_2$ where $I_i$ is either $[l_i,(l_i+u_i)/2]$
or $[(l_i+u_i)/2,u_i]$ for~$i \in \{1,2\}$.

The \emph{quadtree} $QT(P,\mathbf{a})$ is obtained from $D(\mathbf{a})$ by
stopping the subdivision whenever a cell has at most one point from the input
terminal set $P$. This way, every vertex is either a leaf whose cell (not necessarily a unit square) contains at most one terminal,  
or it is an internal vertex of the tree with four children whose cell contains at least two terminals.

\begin{figure}[ht!]
    \centering
    \includegraphics[width=0.4\textwidth]{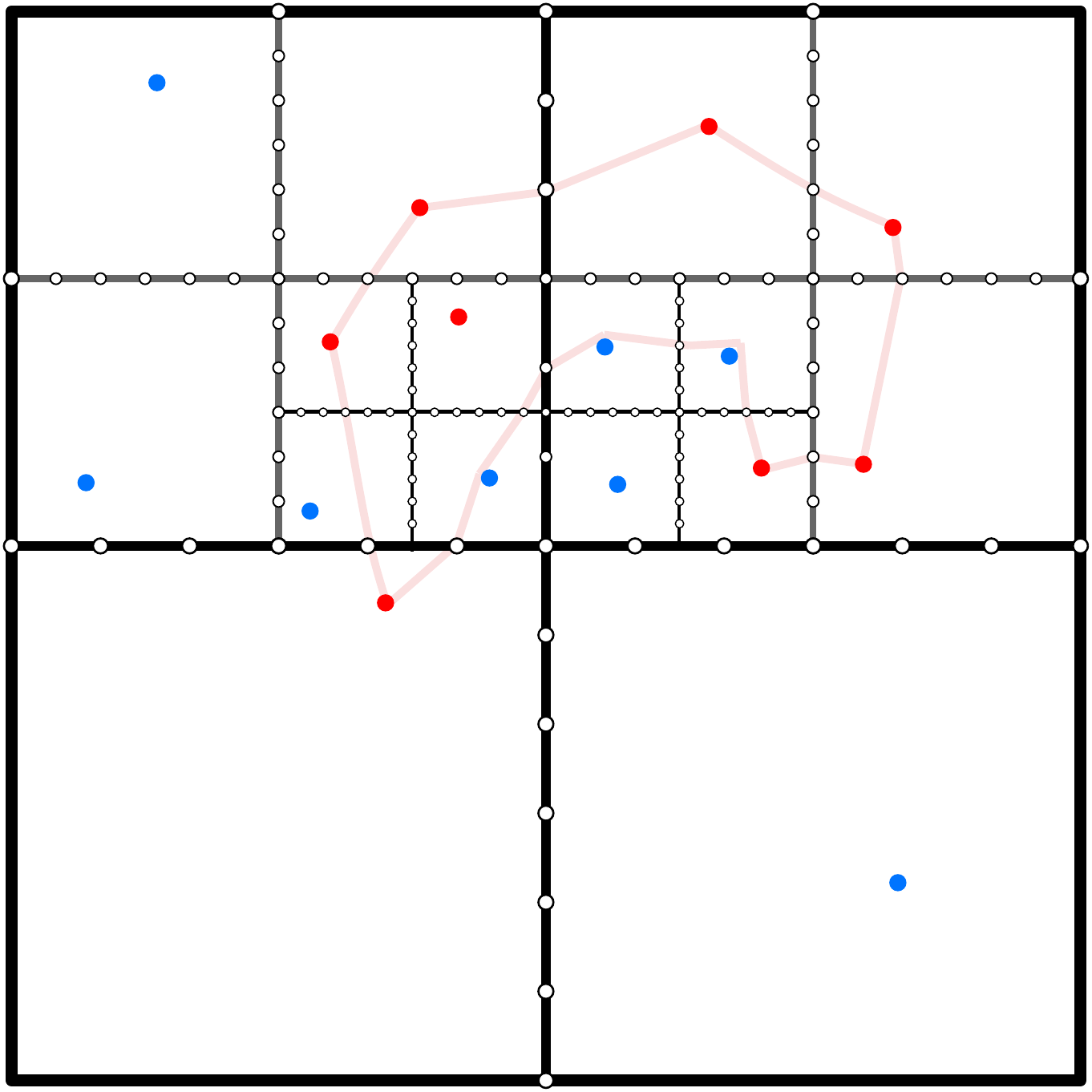}
    \caption{\todo[inline]{KF: missing reference to this figure?}\todo[inline]{KF: polish the caption.}
    \todo[inline]{KF: portals of deepest level too small.}
    \todo[inline]{KF: red tour barely visible.}
    	An illustration of the construction of the quadtree for Red-Blue
    Separation problem. Input points are represented with red/blue circles. The
    white circles are representation of portals. Note that each boundary
    boundary of quadtree has exactly the same number of portals. The pink polygon is a 
    portal-respecting\todo[inline]{KF: we haven't defined portal-respecting!} polygon based on the technique by Arora and Chang~\cite{red-blue}.}
    \label{fig:fig_ref}
\end{figure}

Let $V(x)$ \infodone{KF: dropped for our conference version the formal definition $V(x) =\{ (x,y) \}_{y \in \Reals}$, same for~$H(y)$ below.} be the vertical line crossing the point $(x,0)$ and $H(y)$ 
be the horizontal line crossing the point $(0,y)$. 
A grid line is either a horizontal line $H(y)$ for $y=\frac{1}{2}+i$ where $i$ is integer, or a vertical line $V(x)$ for $x=\frac{1}{2}+i$ where~$i$ is integer.
For a line \todo{KF: It seems that we want that~$\ell$ can be a (finitely long) line segment as well as an (infinitely long) straight line. Should we extend our definition to the line segment case?} $\ell$ and a set~$\mathcal{S}$ of line segments, we define~$\is(\mathcal{S},\ell)$ 
as the set of all points through which the segments of $\mathcal{S}$ cross $\ell$.  \importanttodo{KF: We should either add that the segments are not parallel to~$\ell$ (and then also update all lemmas using this notation, eg., Lemma~\ref{lem:crossingsvslength}) or that we use the definition of crossing of Arora and emphasize that in the case of infinite many intersections points, Arora's definition allows to select a finite number of crossings points.}
Note that for every border edge~$F$
of every cell in $D(\mathbf{a})$, there is a unique grid line that contains $F$.
The following simple lemma relates the number of crossings between a set of line segments and the grid lines with the total length of the line segments; note that we assume that all endpoints are integer.
\begin{lemma}[{\cite[Lemma 19.4.1]{geomspannet}}]\label{lem:crossingsvslength} 
	If $\mathcal{S}$ is a set of line segments in the a infinitesimal neighborhood of $\mathbb{Z}^2$, then
	\[
	\sum_{\ell \text{ is a grid line}} |\is(\mathcal{S},\ell)| \le \sqrt{2} \cdot \wt(\mathcal{S})~.
	\]
	\infodone[inline]{KF: Added \enquote{infinitesimal neighborhood}; see our structure theorem!}
\end{lemma}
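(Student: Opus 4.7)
The plan is to prove the inequality segment by segment and then sum, with the per-segment bound coming from a direct count of grid-line crossings combined with an elementary norm inequality. Since every segment in~$\mathcal{S}$ has endpoints in the infinitesimal neighborhood of~$\mathbb{Z}^2$, I may assume (up to an arbitrarily small perturbation of the counting argument) that both endpoints of any segment~$s\in\mathcal{S}$ lie at integer points $(x_1,y_1),(x_2,y_2)\in\mathbb{Z}^2$, while all grid lines lie at half-integer coordinates $x=\tfrac{1}{2}+i$ or $y=\tfrac{1}{2}+i$ for $i\in\mathbb{Z}$. Under this assumption, integer points never lie on a grid line, so the crossings of~$s$ with the grid are well defined and finite.

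\emph{Step 1 (single-segment crossing count).} First I would show that a segment $s$ with integer endpoints $(x_1,y_1),(x_2,y_2)$ crosses at most $|x_1-x_2|$ vertical grid lines and at most $|y_1-y_2|$ horizontal grid lines. For the vertical case, the $x$-coordinates of the vertical grid lines strictly between $x_1$ and $x_2$ are precisely the half-integers in the open interval with endpoints $x_1,x_2$, and there are exactly $|x_1-x_2|$ such half-integers. Any vertical grid line that $s$ crosses must have its $x$-coordinate in this open interval, so the count is at most $|x_1-x_2|$. The horizontal count is symmetric.

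\emph{Step 2 (per-segment inequality).} Setting $\Delta x := |x_1-x_2|$ and $\Delta y := |y_1-y_2|$, the total number of grid-line crossings of $s$ is at most $\Delta x + \Delta y$. By the Cauchy--Schwarz inequality (applied to the vectors $(1,1)$ and $(\Delta x, \Delta y)$), we obtain
\[
\Delta x + \Delta y \;\le\; \sqrt{2}\cdot\sqrt{\Delta x^2 + \Delta y^2} \;=\; \sqrt{2}\cdot\wt(s).
\]

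\emph{Step 3 (summation).} Summing the previous inequality over all $s\in\mathcal{S}$ and exchanging the order of summation with the sum over grid lines yields
\[
\sum_{\ell \text{ grid line}} |\is(\mathcal{S},\ell)|
\;=\; \sum_{s\in\mathcal{S}}\;\sum_{\ell \text{ grid line}} |\is(\{s\},\ell)|
\;\le\; \sum_{s\in\mathcal{S}} \sqrt{2}\cdot \wt(s)
\;=\; \sqrt{2}\cdot\wt(\mathcal{S}),
\]
which is the claimed bound.

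The only delicate point is the treatment of the infinitesimal-neighborhood assumption: one must make sure that the crossings with the grid are still well-defined (i.e.\ that segments are transverse to the grid lines) after infinitesimal perturbations, and that segments whose endpoints coincide with a grid line in the limit are not accidentally undercounted or overcounted. Since the grid lines sit at half-integer coordinates while the endpoints sit infinitesimally close to integers, this is automatic, and the proof reduces to the clean combinatorial argument above.
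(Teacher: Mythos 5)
Your proof is correct. The paper does not give its own proof of this lemma — it cites it directly from Narasimhan and Smid's textbook (Lemma 19.4.1) — and your argument is essentially the standard one used there: count crossings per segment via half-integer coordinates, observe that the count is bounded by $\Delta x + \Delta y$, apply the norm inequality $\|\cdot\|_1 \le \sqrt{2}\,\|\cdot\|_2$ (you phrase it via Cauchy--Schwarz, which is equivalent), and sum over segments. Your handling of the infinitesimal-neighborhood hypothesis (endpoints near integers, grid lines at half-integers, so transversality is automatic) is also the right way to read the statement in context.

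One small point worth tightening if you were to include this in a formal write-up: in Step~1 you only need the \emph{upper} bound, but in fact for a non-degenerate segment the count of vertical-line crossings equals exactly the number of half-integers strictly between $x_1$ and $x_2$, namely $|x_1-x_2|$; it may be cleaner to state it as an equality and then pass to the inequality in Step~2. Also note that the degenerate case where a segment is axis-parallel (so that it might be feared to lie along a grid line, producing infinitely many intersection points) cannot occur, precisely because endpoints sit at integer coordinates while grid lines sit at half-integer coordinates — you mention this at the end, and it is the only place the hypothesis on $\mathcal{S}$ is actually used.
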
 

A \emph{portal} is an infinitesimal short subsegment; in later sections, we define restricted type of solutions that cross the grid lines only through well-defined portals.
For a segment $S$, we define~$\grid(S,m)$ as the set of $m$ equispaced portals lying on~$S$ (subsets of~$S$) with the first and last portal lying infinitesimally close the endpoints of~$S$ 
(thus, the distance between consecutive portals is bounded by~$|S|/(m-1)$).
\resolvedtodoNotDoneYet{KF: In our proofs,~$m$ is not necessarily integer.}{We should say that we round it up. But recheck if~$m$ is really not an integer. Maybe by our changes in the oter section,~$m$ is always an integer now.}
\resolvedtodoNotDoneYet{KF: 1) Neighboring boundaries share portals which causes extra care. 2) Patching a boundary might cause new crossings at another boundary.}{Shift the first and last points such that they lie in the interior of the segment. See comment in other section why this helps.} 
A border edge of a cell from~$D(\mathbf{a})$ is called a \emph{boundary} if there is no longer border edge on the grid line containing it. 
\importanttodo{KF: Instead, make the border edges disjoint in the same flavor. We need the disjointness of the border edges for the DP.}
Due to a technical subtelty, we treat most boundaries as intervals that are open at one of their endpoints: if a point is contained in two boundaries, then we remove that point from the point set of the shorter of the two boundaries, or from an arbitrary one of the two if both have equal length; note that the removed point was always an endpoint of the affected boundary.
If~$F$ is a boundary, then its length is $2L/2^i$~for some integer~$i$, and we define the \emph{level} of the gridline containing~$F$ as~$i$.
(\todo{KF: Do we need this?} Gridlines not containing any boundaries, that is, not intersecting~$C(\mathbf{a})$, have level~$\infty$.)
Intuitively, there are more grid lines with higher level than lower level; this fact is mirrored in the following lemma.
%
%
%
%
%
\begin{lemma}[{\cite[Lemma 19.4.3]{geomspannet}}]\label{lem:levelprob}
	Let~$\ell$ be a grid line and let~$i$ be an integer satisfying~${0 \leq i \le 1+ \log L}$.
	The probability that the level of $\ell$ is equal to $i$ is at most $2^i/L$.
	\infodone[inline]{KF: I use a different probability than in the reference~\cite{geomspannet}! In the reference, the probability is~$2^{i-1}/L$ which is (I think) only true for level~$i\ge 1$ even though they claim that it holds for~$i=0$. However, I got a different result. According to my calculations, even~$2^{i-2}/L$ should be a valid upper bound for~$i\ge 1$. However, for~$i=0$, the probability is twice as for~$i=1$ since we have twice as many grid lines of level~$0$ than~$1$. Thus, the smallest (simple enough) formula that is true for all~$i\ge 0$ is as I wrote it.\\Also note that in the structure theorem we actually already use the probability that I have written.}
\end{lemma}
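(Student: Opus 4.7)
The plan is to turn the probability statement into a direct counting problem by tracking where the grid line $\ell$ sits relative to the shifted dissection. By the symmetry between horizontal and vertical splits, it suffices to handle a vertical grid line $\ell = V(1/2 + j)$ for some integer $j$, with the randomness supplied by the uniform choice of $a_1 \in \{1, \ldots, L\}$; the horizontal case is identical after swapping the roles of $a_1$ and $a_2$. I write $p := j + a_1$ for the horizontal displacement of $\ell$ from the left edge of $C(\mathbf{a})$; as $a_1$ varies uniformly, $p$ is uniform over the $L$ consecutive integers in $\{j+1, \ldots, j+L\}$.

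The next step is to translate the event ``$\ell$ has level $i$'' into an arithmetic condition on $p$. Unwinding the recursive definition of $D(\mathbf{a})$, a vertical grid line carries a boundary of length exactly $2L/2^i$ precisely when $p \in [0, 2L]$ is a multiple of $2L/2^i$ but not of $2L/2^{i-1}$. For $i \geq 1$ these are exactly the odd multiples of $2L/2^i$ in $[0, 2L]$, of which there are $2^{i-1}$; for $i = 0$ the only candidates are the two outer-boundary positions $p \in \{0, 2L\}$, coming from the border of the root cell $C(\mathbf{a})$ itself.

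Finally I would count how many candidates fall in the interval $\{j+1, \ldots, j+L\}$. For $i \geq 1$ there are only $2^{i-1}$ candidates in total over $[0, 2L]$, and each candidate $p$ corresponds to at most one $a_1 \in \{1, \ldots, L\}$, so the probability is at most $2^{i-1}/L \leq 2^i/L$. For $i = 0$ the two candidates $0$ and $2L$ are $2L$ apart, while the interval has length $L - 1 < 2L$, so at most one of them can be hit by some $a_1$ for the given $j$; thus the probability is at most $1/L = 2^0/L$. The only subtle point is the handling of $i = 0$: although there are two level-$0$ positions globally, they cannot both be attained by a single choice of $a_1$, so the naive bound $2/L$ collapses to $1/L$; once this is observed, the unified bound $2^i/L$ follows by a routine case check.
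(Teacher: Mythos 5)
Your proof is correct. The paper itself does not re-derive this lemma---it is cited verbatim from the Geometric Spanner Networks text, with only an informal authors' note suggesting the tight constant is in fact smaller than $2^i/L$---so there is no in-paper argument to compare against. Your counting argument (level-$i$ positions are the odd multiples of $2L/2^i$ for $i\geq 1$, plus the two endpoints for $i=0$; each position is hit by at most one choice of $a_1$; the two level-$0$ positions are $2L$ apart and hence mutually exclusive for a fixed $j$) is the standard and correct elementary derivation, and it even yields the sharper bounds $2^{i-1}/L$ for $i\geq 1$ and $1/L$ for $i=0$, both of which imply the stated $2^i/L$.
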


\section{Patching Procedure}
\label{sec:patching}
\importanttodo[inline]{KF: We need to extend the lemma in the sense, that (by the same costs), we can place the ten crossings points (in the given order) arbitrarily along the segment, or at least, that we can place all the ten crossing points on any given point of the segment (using Arora's definition of noncrossingness). Alternatively, we could say that we place it infinitesimally close to any given point. Then we probably don't need to use Arora's definition. We just say, crossing at portals means, crossing in an infinitesimal neighborhood close to the portals. We could define infinitesimal by some arbitrary small function on the input size. AZP: I do not think there is time for drastic change of definitions.}

\importanttodo[inline]{KF: We should make it clear whether in this section we look at noncrossing solutions in Arora's sense or in our disjoint sense. The order relation~$<$ suggest that we mean our disjoint sense. Arora's sense seems to appear only in context of portals. AZP: the idea was that if the curves touch, we just say that they are infinitesimal close to each other, but we consider them disjoint. The same with the curve and the segment - as long as they touch but do not cross we consider them disjoint but infinitesimally close. This implies that the crossing points are really the points and not the segments. I agree that this is sloppy but we inherit this sloppyness from Arora.}
In this section, we prove the following important generalization of Arora's patching lemma~\cite{Arora98,Arora2007} (see Lemma~\ref{lem:patch} in Section~\ref{sec:prelim}) to noncrossing tours. 
The generalized patching lemma is a key
ingredient for proving  our approximation schemes for the Euclidean \BTSP{} problem (Section~\ref{sec:noncrossing-tsp}) and the
Euclidean \RBGS{} problem (Section~\ref{sec:rgbseparation}).
It allows us to reduce the number of times that a tour crosses a cell of
the quadtree. 

\begin{lemma}[Patching of noncrossing tours] \label{lem:patch2}
Let $\segm$ be a line segment, and let $\tour{\red}$ and $\tour{\blue}$ be two
simple noncrossing closed curves. There exist two simple noncrossing curves
$\touralt{\red}$ and $\touralt{\blue}$ such that $|\is(\touralt{\red},\segm)
\cup \is(\touralt{\blue},\segm)| \leq 10$, and
$\wt(\touralt{\red})+\wt(\touralt{\blue}) \leq
\wt(\tour{\red})+\wt(\tour{\blue})+ 20 \cdot \wt(\segm)$ . Moreover, for $c \in \{\red,\blue\}$, $\tour{c}$ and
$\touralt{c}$ differ only within an infinitesimal neighborhood of $\segm$. 
Furthermore, we can move all crossing points~$\is(\touralt{\red},\segm)
\cup \is(\touralt{\blue},\segm)$ to any given portal through~$\segm$ by increasing the cost of $\touralt{\red}$ and $\touralt{\blue}$ by only $\Oh(\wt(\segm))$.
\importantquestion[inline]{KF: Can we assume this without cost increase? AZP: probably some already resolved TODO.}
\end{lemma}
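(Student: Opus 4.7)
My plan is to extend Arora's single-curve patching to two noncrossing curves simultaneously via a three-stage \emph{normalize--split--merge} procedure, rather than applying Lemma~\ref{lem:patch} to each tour independently (which would destroy the noncrossing property). Throughout, I work in an infinitesimal neighborhood of $\segm$ so that all modifications are local and nothing outside that neighborhood is touched. Let $c_1,c_2,\dots,c_t$ be the crossing points of $\tour{\red}\cup\tour{\blue}$ with $\segm$, listed in their order along $\segm$ and labeled by color.

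\emph{Stage 1 (normalize).} I repeatedly search for two consecutive crossings $c_i,c_{i+1}$ of the same color, say both of $\tour{\red}$. No blue crossing lies on the sub-segment $s\subset\segm$ between them, so by the Jordan curve theorem and the noncrossing hypothesis, at least one side of $s$ is free of $\tour{\blue}$ in an infinitesimal neighborhood. Reroute the portion of $\tour{\red}$ between $c_i$ and $c_{i+1}$ along a parallel copy of $s$ on that side; this eliminates both crossings at additive cost $2\wt(s)$ and preserves both connectedness and the noncrossing property. Iterating for both colors, the crossing sequence becomes a strictly alternating run of same-color pairs of the form $\red\red\,\blue\blue\,\red\red\,\blue\blue\cdots$.

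\emph{Stage 2 (split).} I now short-circuit each same-color pair by adding two parallel copies of $\segm$ (one per color) close to $\segm$, and using them to reroute, for each pair, the part of the tour that traversed $\segm$. After this step $\tour{\red}$ and $\tour{\blue}$ each decompose into a \emph{laminar family} of simple closed curves entirely on one side of $\segm$, so only two crossings per color survive, giving $4$ crossings in total at cost~$4\wt(\segm)$; however, connectedness within each color is lost.

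\emph{Stage 3 (merge).} To restore connectedness, I reconnect the pieces of each color at carefully chosen \emph{precise interfaces} on the two copies of $\segm$ introduced in Stage~2, mirroring the way $\tour{\red}$ and $\tour{\blue}$ were originally linked across $\segm$. The laminar structure produced by Stage~2 means the pieces of each color can be arranged on a tree, and reconnecting them amounts to introducing a few additional detours along $\segm$; a careful case analysis on the cyclic order in which the two tours originally traversed $\segm$ shows that a constant number (bounded by~$6$) of extra crossings suffice, and the additive cost per reconnection is at most $O(\wt(\segm))$. Summing the costs over all stages yields the bound $20\wt(\segm)$ and at most $10$ crossings. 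For the ``moreover'' clause, after the procedure the $10$ crossings lie on $\segm$; to move them to a prescribed portal I run each surviving crossing along a further copy of $\segm$ (merged pairwise where possible), which adds only~$O(\wt(\segm))$ to the cost.

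The main obstacle is Stage~3: proving that some ordering of the reconnections simultaneously (i)~reconnects each color, (ii)~keeps the two colors noncrossing, and (iii)~introduces only a constant number of additional crossings with $\segm$. This is the part that has no analogue in Arora's one-curve patching and is substantially more delicate than the patching procedure of Bereg et al.~\cite{isaac15}, since we cannot place auxiliary Steiner points on $\segm$; the argument rests entirely on exploiting the laminar family of Stage~2 together with the original noncrossing order of $\tour{\red}$ and $\tour{\blue}$ along $\segm$.
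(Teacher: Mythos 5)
Your overall plan (normalize, then split, then merge) mirrors the paper's three-phase architecture—simplification via Arora's patching lemma, a split operation producing a laminar family, then merging precise interfaces—so the intuition is sound, but there are two genuine gaps.

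\textbf{Stage 1 is topologically unjustified.} You claim that rerouting $\tour{\red}$ past two consecutive same-color crossings $c_i,c_{i+1}$ eliminates both of them \emph{and} preserves connectedness, all while staying inside an infinitesimal neighborhood of $\segm$. That cannot be true in general: if you cut $\tour{\red}$ at $c_i$ and $c_{i+1}$ and then reconnect the four stubs so that both reconnections stay on a single side of $\segm$, you produce two closed curves rather than one, because the arcs of $\tour{\red}$ outside the neighborhood form a matching you are not allowed to touch. Reducing a monochromatic run of crossings to zero while staying local is precisely what Arora's Patching Lemma (Lemma~\ref{lem:patch}) \emph{cannot} do—it reduces to at most two, not zero, because at least one crossing must survive to hold the reconnected tour together. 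The paper therefore applies Lemma~\ref{lem:patch} to each monochromatic group with more than two crossings and combines this with the ``twins'' observation (Claim~\ref{clm:twins}, that every interior crossing has a same-color neighbor) to arrive at the alternating $\red\red\,\blue\blue\cdots$ pattern. Notice also that your stated procedure and your stated outcome disagree: if you could always eliminate adjacent same-color pairs, you would not be left with alternating \emph{pairs} but with at most one crossing per run.

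\textbf{Stage 3 is asserted, not proven.} You write that ``a careful case analysis on the cyclic order \ldots shows that a constant number (bounded by~6) of extra crossings suffice,'' and then close by calling this case analysis ``the main obstacle'' of the whole argument. That is exactly the heart of the lemma, and it is the part the paper spends the most effort on: it defines merging of \emph{consecutive} precise interfaces on a single side of $\segm$, processes all interfaces above $\segm$ and then all below $\segm$, argues that the resulting pairs stay laminar/parallel, and only at the end (if the above-pair and below-pair are distinct) crosses $\overline{\segmalt}$, contributing at most four crossings there; the remaining at most six crossings of $\segm\setminus\overline{\segmalt}$ come from the at most three crossings on either side of the quadruple region. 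Sketching that this ``should work'' does not establish that (i) the merges always leave the two colors disjoint, (ii) every color becomes a single closed curve (as opposed to several), and (iii) the total cost stays within the $20\wt(\segm)$ budget. Until you carry out and verify that reconnection order, the proof is incomplete. Your ``moreover'' clause inherits the same issue: moving all ten crossings to a single portal is only $\Oh(\wt(\segm))$ if the merge step has already produced a well-controlled set of crossings.
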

\importanttodo[inline]{KF: Say somewhere that if the input tour contains a subsegment of $\segm$, then we can locally modify that segment without notable cost increase such that it crosses~$\segm$ only in one point. In sensitive patching, we might run in such situations, therefore we can't just assume that this will not happen and thus this is important to write. AZP: we are actually assuming this does not happen by assuming infinitesimally close neighborhooods. If we want to explain it in more detail, we should do it when introducing the set I(S,tour). Right now we kind of just blame the confusion on Arora :)}

For a simple curve $\pi$ that contains two points $y,z$, 
we define $\rest{\pi}{y}{z}$ as the part of $\pi$ that goes from~$y$ to~$z$ and whose direction is counterclockwise in the case that~$\pi$ is closed. 
For the purpose of this proof, we assume  without loss of generality that $\segm$
is aligned with the $x$-axis of the coordinate system.  Let
$\is(\tour{\red},\segm) \cup \is(\tour{\blue},\segm) = \{ x_1, \ldots, x_m \}$
be the set of points in $\Reals^2$ through which $\tour{\red}$ and $\tour{\blue}$ intersect~$\segm$ \todo{KF: Is this a multi set or not? AZP: it is NOT a multiset}
\importanttodo[inline]{KF: What if they intersect with an segment, hence, have infinite many intersection points? We don't disallow such cases. Before, we said that we look at crossings points, hence, in such cases such an intersection segment corresponds to a single crossing point. AZP: that is true, the intersection segment gives only one crossing point (if the curves really cross there). Again, it this is not clear, it should be explained when defining I(segment,tour).}
where $x_1 < \dots < x_m$ \todo{KF:Since~$S$ is aligned with the~$x$-axis, we could also define~$x_1,\dots$ as the~$x$-coordinates of these crossings. See also below where we consider~$y_1,\dots$} is their order by the $x$-coordinate.  
Let $\col{x_i}=\red$ if $x_i$ is the intersection of~$\segm$ with
$\tour{\red}$ and $\col{x_i}=\blue$ otherwise. We sometimes\todo{KF: really? AZP: really, for instance right below in the claim.} refer to $\col{x_i}$
as the color of $x_i$.

\begin{claim}\label{clm:twins}
    For $1 < i < m$, it holds that $x_i$ has a neighbor of the same color, that is,
    $\col{x_{i+1}}=c(x_i)$ or $\col{x_{i-1}}=c(x_i)$.   
\end{claim}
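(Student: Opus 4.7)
The plan is to argue by contradiction via the Jordan curve theorem. Suppose, toward a contradiction, that some interior crossing $x_i$ has two neighbors of the opposite color. Without loss of generality (the roles of $\red$ and $\blue$ being symmetric), assume the three consecutive crossings are colored $\col{x_{i-1}}=\blue$, $\col{x_i}=\red$, $\col{x_{i+1}}=\blue$. I will derive a crossing between $\tour{\red}$ and $\tour{\blue}$, contradicting their noncrossing property.

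First I would observe that, because $x_{i-1}, x_i, x_{i+1}$ are consecutive in the ordering of $\is(\tour{\red},\segm)\cup\is(\tour{\blue},\segm)$ along $\segm$, the closed subsegment $\segm' \subseteq \segm$ from $x_{i-1}$ to $x_{i+1}$ meets $\tour{\red}$ only at the single interior point $x_i$. Indeed, any crossing of $\tour{\red}$ with $\segm'$ different from $x_i$ would be a crossing point of $\tour{\red}\cup\tour{\blue}$ with $\segm$ strictly between $x_{i-1}$ and $x_{i+1}$, contradicting consecutiveness; and the endpoints $x_{i-1},x_{i+1}$ lie on $\tour{\blue}$, which is disjoint from $\tour{\red}$ in our infinitesimal-perturbation convention.

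Next I would apply the Jordan curve theorem to the simple closed curve $\tour{\red}$, which separates $\Reals^2\setminus\tour{\red}$ into two connected components (the interior and exterior). Since $\segm'$ is a path from $x_{i-1}$ to $x_{i+1}$ that crosses $\tour{\red}$ transversally exactly once (at $x_i$), its endpoints must lie in different components of $\Reals^2\setminus\tour{\red}$. In particular, $x_{i-1}$ and $x_{i+1}$ are on opposite sides of $\tour{\red}$.

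Finally, I would use that $\tour{\blue}$ is a connected set contained in $\Reals^2\setminus\tour{\red}$ (because $\tour{\red}$ and $\tour{\blue}$ are noncrossing and hence, in the perturbation view of Section~\ref{sec:prelim}, disjoint). A connected subset of $\Reals^2\setminus\tour{\red}$ lies entirely in one of its two components, so every pair of points on $\tour{\blue}$ lies on the same side of $\tour{\red}$. This contradicts the conclusion of the previous paragraph that $x_{i-1}, x_{i+1}\in\tour{\blue}$ lie on opposite sides of $\tour{\red}$, completing the proof. The main (though minor) subtlety is purely notational: pinning down the "same side of $\tour{\red}$" argument under Arora's noncrossing convention, which is handled by working with the infinitesimally perturbed, disjoint representatives of the curves as established in Section~\ref{sec:prelim}.
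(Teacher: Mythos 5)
Your proof is correct and rests on the same topological fact as the paper's, namely a parity/Jordan-curve argument; the paper closes up the blue arc between $x_{i-1}$ and $x_{i+1}$ with the subsegment of $\segm$ to form a Jordan curve that $\tour{\red}$ would cross exactly once (impossible for a closed curve), whereas you apply the Jordan curve theorem directly to $\tour{\red}$ and observe that $x_{i-1},x_{i+1}$ would land on opposite sides while lying on the connected set $\tour{\blue}\subseteq\Reals^2\setminus\tour{\red}$. These are two phrasings of the same parity argument, and yours is, if anything, slightly more explicit about the topological lemma being invoked.
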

\begin{proof}
    Let us assume for the sake of contradiction that $\col{x_i}=\red$ and
    $\col{x_{i+1}}=\col{x_{i-1}}=\blue$ (the case when $\col{x_i}=\blue$ is
    symmetrical). The segment $\rest{\segm}{x_{i-1}}{x_{i+1}}$ and  
    the curve~$\rest{\tour{\blue}}{x_{i-1}}{x_{i+1}}$ together form a closed curve that is crossed by
    $\tour{\red}$ \importanttodo{KF: by~$\tour{\red}$!??} exactly once. This is not possible
    \todo{KF: We have to emphasize in the definition that crossing points are not touching points! AZP: if we really want to do this, the place for it is much earlier} because $\tour{\red}$ \importanttodo{KF: by~$\tour{\red}$!??} is also
    a closed curve. 
    \reviewertodo{I though this was part of the proof of Lemma~\ref{lem:patch2}; since you refer to adjusting a single curve, I suspect you mean to refer to Aurora’s patching in Lemma (4?). AZP: Skipping this comment.}
    \reviewertodo{"In many places there is too much detail (In an abstract, you don’t need to formally prove Claim \ref{clm:twins}, for example, just note that every other segment is inside one of the curves.)". AZP: skipping this comment.}
\end{proof}

\paragraph*{Simplification}

We now group the intersection points $\{ x_1, \ldots x_m \}$ \todo{KF: Shorten all lists of type~$1,2, \dots, n$ to $1, dots, n$.} into maximal
groups of consecutive monochromatic points. To be more precise, if
$\col{x_{i+1}}=\col{x_i}$, then $x_{i+1}$ and $x_i$ are greedily selected to the
same group. Let $H_1,\ldots , H_\ell$ be the resulting (nonempty) groups and 
let~$\segm_j$, for~$1 \le j  \le \ell$, be the segment spanning~$H_j$.
Without loss of generality, assume that the points of $H_j$ are colored $\red$ \todo[inline]{KF: In this section we use $\red$ to refer to "red". Maybe we should (if not already) do the same consistently in other sections?} 
for odd $j$ and colored $\blue$ for even $j$. Observe that the segments
in $\{\segm_j \mid 1 \le j \le \ell\}$ are pairwise disjoint. We now use
the patching procedure of Arora (Lemma~\ref{lem:patch}) 
\soda{It would be helpful to state Lemma 5.1 of Arora.}
 to modify the tours
$\tour{\red}$ and $\tour{\blue}$ into $\tourim{\red}$ and $\tourim{\blue}$. To
be more precise\todo{KF: repetition: to be more precise}, we first modify the tour $\tour{\red}$ by applying
Lemma~\ref{lem:patch} independently to each segment $\segm_j$ where $j$ is
odd and $|H_j|>2$. Analogously, we modify the tour~$\tour{\blue}$ on segments with even index. 
By Lemma~\ref{lem:patch}, the tour $\tourim{\red}$ intersects each segment $\segm_j$ of odd $j$ at most twice (and at least once because the
groups $H_1, \dots, H_\ell$ are nonempty). Similarly, the tour $\tourim{\blue}$
intersects each even segment~$\segm_j$ at most twice. Moreover, 
$\wt(\tourim{c}) \leq \wt(\tour{c})+3\cdot \wt(\segm)$ for~$c \in
\{\red,\blue\}$.

To summarize, we simplified our problem as follows.  
Let
$\is(\tourim{\red},\segm) \cup \is(\tourim{\blue},\segm) = \{ y_1, \ldots,
y_{m'} \}$ be the set of points in $\Reals^2$ through which $\tourim{\red}$ and
$\tourim{\blue}$ intersect $\segm$
where $y_1 < \ldots < y_{m'}$  is their order by the $x$-coordinate. 
By the discussion above and by Claim~\ref{clm:twins}, 
\soda{Also, using this lemma [Lemma 5.1 of Arora.] the crossing is at most twice. In the next paragraph, how did it change to exactly two? It is correct but needs an explanation.}
the color sequence~$\col{y_1}, \ldots, \col{y_{m'}}$ does not contain any monochromatic triplets, 
and, with a possible exception of the first and last element, it consists of alternating pairs~$\red\red$ and~$\blue\blue$.
In other words, $\col{y_1},\ldots, \col{y_{m'}}$ is an infix of a sufficiently long \infodone{KF: I said sufficiently long instead of $m'+2$ since then it looks less complicated and we don't need our assumption anymore that the sequence starts at~$\red$.} 
sequence~$\red,\!\red,\, \blue,\!\blue, \ldots$ of alternating pairs. 
We continue the proof on this simplified instance.

\paragraph*{Laminar and Parallel Tours}
\importantquestion[inline]{KF: We never seem to directly use the properties of being laminar/parallel in the following proof.s (We only prove that something is laminar/parallel. But why are we proving it?)}
Observe that, essentially, there are only two different topologies that a pair of
noncrossing tours can admit. We say that a tour $\sigma$ is \emph{parallel} to
\importantquestion[inline]{KF: Definition of laminar/parallel tours seems buggy: a pair of laminar tours is always also a pair of parallel tours: one tour lies always outside of the other one!! I think, we mean that two tours are parallel if each lies outside of the other one, and laminar, if one lies inside the other one!\\
If we define tours (not pairs) to be laminar/parallel with respect to some other tour, then the definition might work, but then we have also to define what we mean by a laminar/parallel pair of tours; see Fig.~\ref{fig:split}}
a tour $\mu$ if $\mu$ lies outside of $\sigma$, otherwise $\sigma$ is
\emph{laminar} to $\mu$. 
\importanttodo{KF: The segment~$\segmalt$ hasn't been defined yet. It's name is also confusing with the color~$\red$! Do we mean $S$? AZP: I am removing the mention of R here.} 
\reviewertodo{341: The segment R used here is not defined.  You’ve used colors R,B for points above, and fig 6 uses S.  (see below: 354: only now do we get the definition of R.  )}
For an illustration, see Fig.~\ref{fig:split}.
\importanttodo[inline]{KF: Instead of defining the property of laminar/parallel tours, we could make the following observations: AZP: if i understand correctly, we are leaving parallel laminar.
\\1) Each tour separates the plane into two parts, a finite inner part and an unbounded outer part.
\\2) When walking around the tour in counterclockwise direction, the inner part is thus always on the left-hand side.
\\3) We orient both tours in counterclockwise direction. 
\\4) Any two neighboring passes of the same color (independently of whether there is another color in between(!)) must be oriented in opposite direction given 2) and 3) (either the inner or the outer part lies in between them).
\\5) Thus, the~$i$-th and~$j$-th pass (of the same color) are oriented in opposite direction if and only if~$i$ and~$j$ have different parity.
\\6) Thus, by connecting passes of different parity, we again obtain cycles.
\\7) Consequently, the splitting step is fully safe!
\\8) The merging step also as there we greedily connect passes of different parity, ensure that we have degree 2 everywhere, and that every color is connected. We must only argue that if and only if we can connect one color, we can also connect the other one. 
I think we can show this iff as follows: each split increases the number of red and blue components by one (above and below). Hence, after the split step, we have the same number of red and blue cycles. If we merge two blue cycles, then we also merge the two red cycles in between as they were disjoint before given the disjointness of the blue cycles. Thus, if we merge, then we decrease the number of blue and red components by one (above or below, respectively). Consequently, when we merged all blue cycles, we necessarily also merged all red cycles. Hence, we end up with exactly one blue and one red tour.
}

\paragraph*{Split Operation}

Next, we define the operation of \emph{splitting} a pair of non-intersecting \todo{KF: noncrossing? disjoint? check also other places.}
tours by a segment (for an intuitive illustration see Fig.~\ref{fig:split}).
This operation is defined only for a segment that starts and ends in two
distinct points of the first tour, intersects the second tour twice and has no
other intersections with the two tours. To be more precise, let $\sigma$ and
$\mu$ be two non-intersecting tours. Let $\segmalt$ \todo{KF: Why don't we use~$S$. Is~$\segmalt$ as subsegment of~$S$? AZP: yes.} be a segment that connects
two different points on $\sigma$ and intersects both $\mu$ and $\sigma$ exactly
twice (so the intersections with $\sigma$ are precisely the endpoints of
$\segmalt$). For the purpose of this proof, the split operation will only be used for $\segmalt$ being a subsegment of $\segm$, but we define the split operation independently of $\segm$. 


\begin{figure*}[ht!]
    \centering
    \begin{subfigure}[t]{0.4\textwidth}
        \centering
        \includegraphics[width=\textwidth]{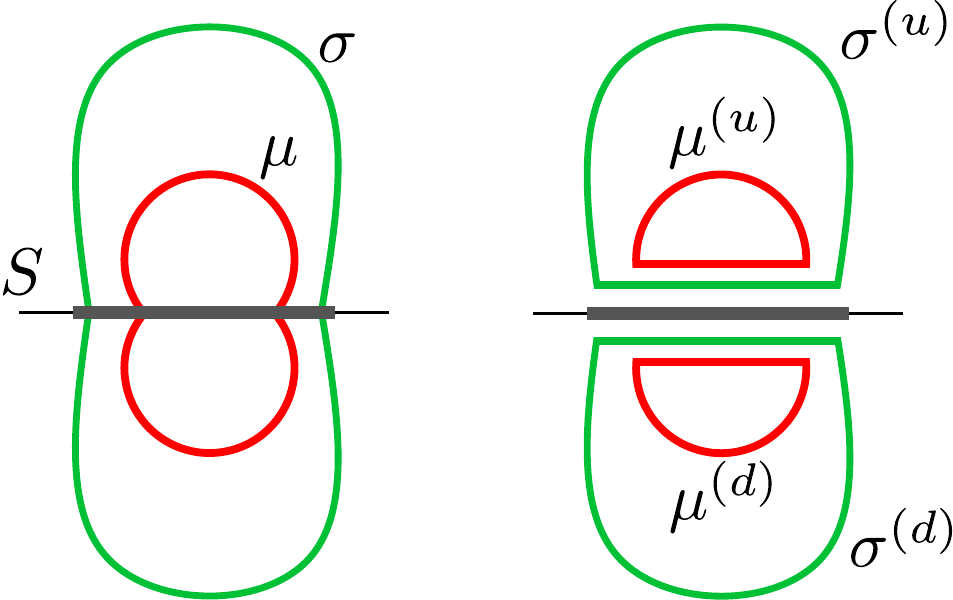}
    \end{subfigure}%
    ~ 
	\hspace{0.5cm}
    \begin{subfigure}[t]{0.4\textwidth}
        \centering
        \includegraphics[width=\textwidth]{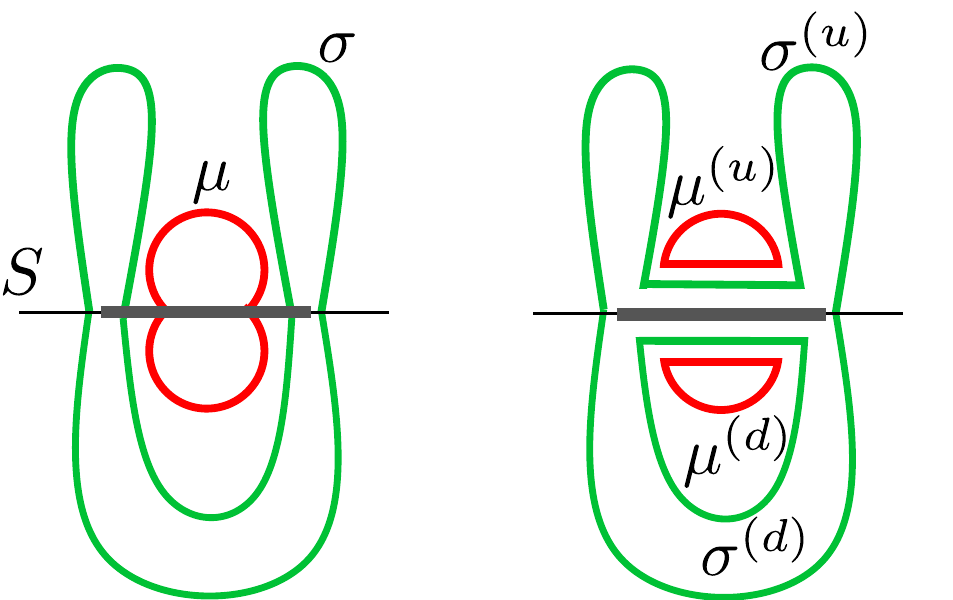}
    \end{subfigure}%
    \caption{A schematic view of the splitting procedure for laminar (left figure) and parallel (right figure) tours.
	\todo[inline]{KF: Inconsistency: in all figures we use green color whereas in the main text we write $B$ for blue. Since red-green blindness is the most common one (followed by blue-yellow), I suggest to use red and blue.}
	\todo[inline]{KF: Inconsistency2: In the figure we use~$S$, in the text we use~$R$.}
	\importanttodo[inline]{KF: Missing case (see main text): same crossing as in the first case, but red contains green.}
	\reviewertodo{Ditto for fig 6 caption: define S and explain the lighter line.  Define what distinguishes parallel and laminar. }}
	\reviewertodo{ Please use the notation of 356ff in Fig 6.}
\label{fig:split}
\end{figure*}

Let $p_1,p_2$ be the intersection points of~$\sigma$ with~$\segmalt$ (hence, the endpoints of~$\segmalt$) and 
let $q_1,q_2$ be the intersection points of~$\mu$ with~$\segmalt$. 
Without loss of generality, we assume that $\segmalt$ is aligned
with the $x$-axis and~$p_1 < q_1 < q_2 < p_2$ is the order of the points by the $x$-coordinate. \todo{KF: how to write: $\dots < \dots$ is their order by the~$x$-coordinate? check also other places!} Note that (excluding the endpoints) the segment $\segmalt$ lies entirely inside of $\sigma$ if $\sigma$ is laminar to $\mu$, and entirely outside otherwise (see Fig.~\ref{fig:split}). 
We split $\sigma$ and $\mu$ as follows. We create
two copies of~$\segmalt$, let us call them $\segmalt^{(u)}$ and
$\segmalt^{(d)}$, 
\soda{The notation $\sigma^{(u)}$ and $\sigma^{(d)}$ are a bit confusing, as they do not entirely lie in the up or down of S.}
and place them slightly above and slightly below $\segmalt$
respectively\footnote{The distance between $\segmalt^{(d)}$ and $\segmalt^{(u)}$
    is infinitesimally small (as in the original patching lemma of
Arora~\cite{Arora98,Arora2007}).}. Next, we split $\sigma$ into two tours, $\sigma^{(u)}$ and
$\sigma^{(d)}$, by first adding to~$\sigma$ the two segments $\segmalt^{(u)}$ and $\segmalt^{(d)}$
and then removing the parts of $\sigma$ that connect the endpoints of 
\importantquestion[inline]{KF: But then we remove the tour completely as a cycle always has two parts connecting the same pair of points! (In our case the situation is even more complicated as we have several cycles after adding the two segments.) Formally, we could write that we first remove from~$\sigma$ its four (or two if~$\segmalt$ is a point?) intersection points with $\segmalt^{(u)}$ and $\segmalt^{(d)}$, thus~$\sigma$ consists of four open curves. Remove the two (?) curves that cross/intersect~$\segmalt$. Then close one of the remaining paths with $\segmalt^{(u)}$, the other one with $\segmalt^{(d)}$ to obtain $\sigma^{(u)}$ and~$\sigma^{(d)}$.}
$\segmalt^{(d)}$ with the endpoints of $\segmalt^{(u)}$. 
Then, we create two
copies of the segment $\segmalt'=\rest{\segmalt}{q_1}{q_2}$, let us call them
$\segmalt'^{(u)}$ and $\segmalt'^{(d)}$, and we place $\segmalt'^{(u)}$ slightly
above $\segmalt^{(u)}$, and we place 
$\segmalt'^{(d)}$ slightly below
$\segmalt^{(d)}$. 
We split $\mu$ with the
segments $\segmalt'^{(u)}$ and $\segmalt'^{(d)}$ in an analogous way as we split~$\sigma$. 
The split operation is
presented in Fig.~\ref{fig:split} for a laminar pair (left side) and for a
parallel pair (right side). 


As a result of a split operation we obtain two pairs of closed curves: above \importanttodo{KF: But on the figure, $\sigma^{(u)}$ is everywhere (above \emph{and} below) at the same time! We should state more precisely what we mean by above/below: that their endpoints are above/below.} $\segmalt$ we
have a pair $\sigma^{(u)},\mu^{(u)}$, and below $\segmalt$ we have a pair
$\sigma^{(d)},\mu^{(d)}$. Observe that if the original tours were laminar, then both of these pairs are
laminar and $\sigma^{(u)}$ is parallel to $\sigma^{(d)}$\importanttodo{KF: Not true! we can redraw Fig.~\ref{fig:split}a) such that the red tour goes around the green one (but intersect~$S$ as in the figure. After the split, $\sigma^{(u)}$ is NOT parallel to $\sigma^{(d)}$.}. On the
other hand, if the original tours were parallel, then one pair is parallel, the
other is laminar and $\sigma^{(u)}$ is laminar to $\sigma^{(d)}$ (as in
Fig.~\ref{fig:split}\todo{KF: In the figure it seems to be the other way around ($\sigma^{(d)}$ is laminar to $\sigma^{(u)}$)!} ) or vice versa\todo{KF: What do we mean by vice versa? I think, I see several possibilities.}.\todo{KF: Check if the text is consistent of our definition of laminar/parallel.}

For both resulting pairs $(\sigma^{(a)},\mu^{(a)})$, $a \in \{ u,d \}$, we refer
to the pair of segments $(\segmalt^{(a)},\segmalt'^{(a)})$ as a \emph{precise
interface} of the pair $(\sigma^{(a)},\mu^{(a)})$, while the segment $\segmalt$ is a
\emph{rough interface} of $(\sigma^{(a)},\mu^{(a)})$. Note that the precise
interface is in a negligible distance to the corresponding rough interface.
Also observe that $\wt(\sigma^{(u)})+\wt(\sigma^{(d)})+\wt(\mu^{(u)})+\wt(\mu^{(d)}) \leq
\wt(\sigma)+\wt(\mu)+4 \wt(\segmalt)$.

\paragraph*{Splitting $\tourim{\red}$ and $\tourim{\blue}$}

\todo{KF: consistency: laminar/parallel}

Our goal is to transform $\tourim{\red}$ and $\tourim{\blue}$ as to make them
intersect $\segm$ at most ten \todo{KF: check if ten} times in total. 
Without loss of generality we
assume that $\tourim{\red}$ is either laminar to $\tourim{\blue}$ or it is
parallel to $\tourim{\blue}$ (if neither is the case\importantquestion{KF: Can it be neither case??}, we swap the names of the
two tours). 

Let $\segmalt_1, \ldots, \segmalt_k$ be segments that span all quadruples $y_i,
y_{i+1}, y_{i+2}, y_{i+3} \in \segm$ such that $\col{y_i}=\col{y_{i+3}}=\red$
and $\col{y_{i+1}}=\col{y_{i+2}}=\blue$ (see Fig.~\ref{fig:before}). We assume
that $\segmalt_1 < \ldots < \segmalt_k$ are ordered \todo{KF: right usage of ordered by?} 
by the $x$-coordinates of
their left endpoints.  Observe that \todo{KF: Also refer to the fact that the sequence consists of an infix of alternating pairs\dots} 
there are at most six \todo{KF: Explain why. Add something like this: at most three before and three after the segments.}\todo{KF: Can't we easily decrease six to four (or even two??) by recoloring (if we can't, shouldn't we say so?).}
intersections \todo{KF: crossing points?} among
$y_1, \ldots, y_{m'}$ which do not belong to the segments $\segmalt_1,\dots,\segmalt_k$. 
As the first step of our transformation, we split the pair of
tours $(\tourim{\red},\tourim{\blue})$ with the segments $\segmalt_1, \ldots,
\segmalt_k$ one by one \todo{KF: in any order, e.g. from left to right?} \todo{KF: each time splitting the tour pair (resulting from the previous splits) that crosses the segment} \importanttodo[inline]{KF: It is not obvious that such a splitting will work. After each single split, we need to guarantee that we don't come up with the situation that, for some unprocessed segment, two crossing points of the same color suddenly belong to different cycles.}
(Fig.~\ref{fig:before} shows an example of
$\tourim{\red}$ and $\tourim{\blue}$ before and after the split). Let $\{
(\sigma_i,\mu_i) \}_{i \in \{ 1, \ldots, k+1 \}}$ be the set of pairs
obtained after all splits are completed. 
\importanttodo[inline]{KF: Given our mistakes above, I don't have confidence that the 
following claims (until end of this paragraph) are correct. Maybe they are, but in any case we should prove them and not just write \enquote{Note that it holds.}}
Note that if $\tourim{\red}$ is
laminar to $\tourim{\blue}$, then, for every pair~$(\sigma_i,\mu_i)$, 
the tour~$\sigma_i$ is
laminar to $\mu_i$ and, for $j \neq i$, parallel to $\sigma_j$. If, on the
other hand, $\tourim{\red}$ is parallel to $\tourim{\blue}$, then among the
obtained pairs, there is precisely one pair $(\sigma_{j},\mu_j)$ where
$\sigma_j$ is parallel to $\mu_j$; hence, for $i \neq j$, the tour~$\sigma_i$ is
laminar to $\mu_i$. Moreover, for~$i \neq j$, the tour~$\sigma_j$ is 
laminar to~$\sigma_i$ and, hence, also laminar to~$\mu_i$.


\begin{figure}[ht!]
\   \includegraphics[width=0.47\textwidth]{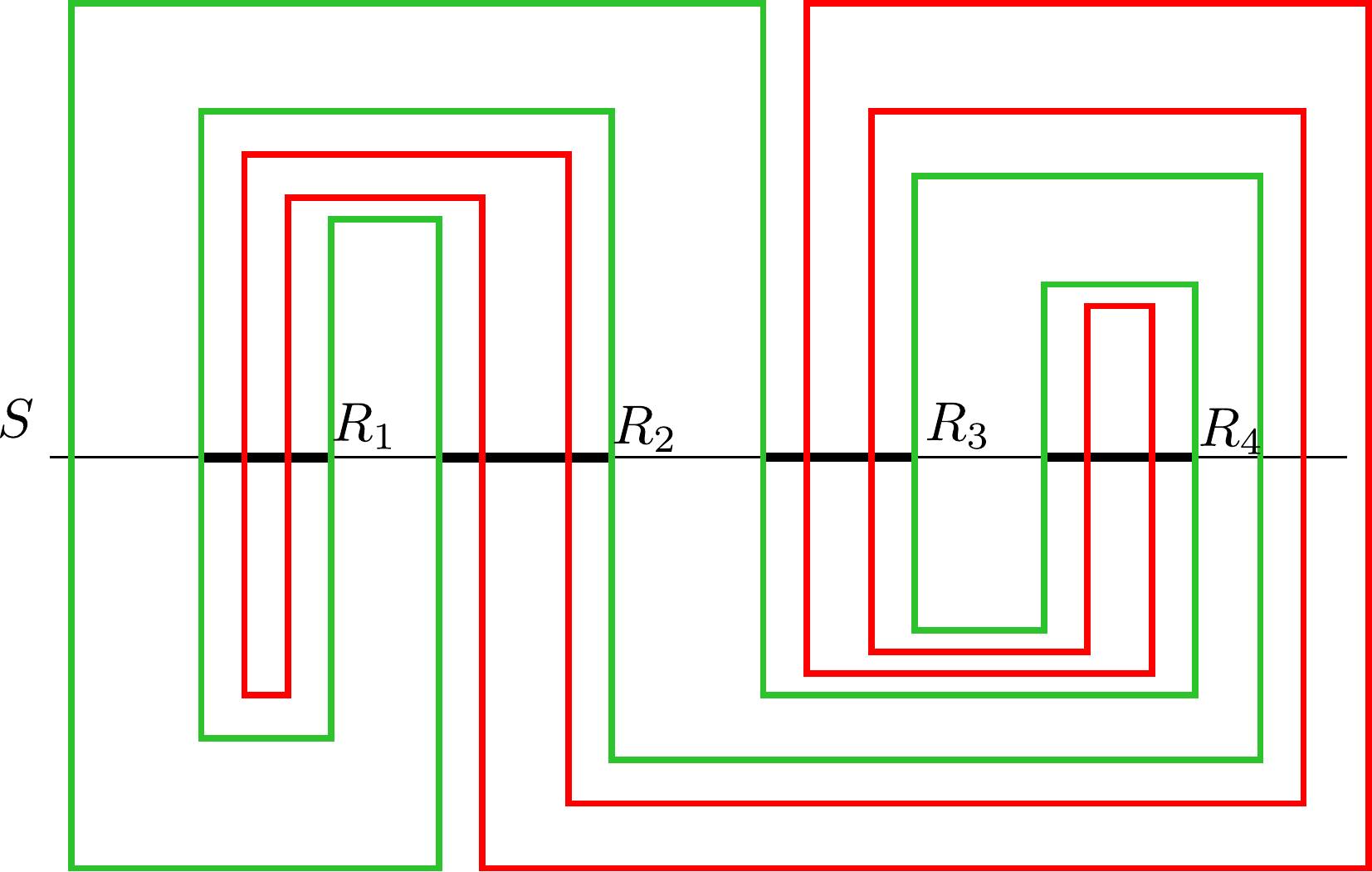} 
    ~ 
	\hspace{0.5cm}
\includegraphics[width=0.47\textwidth]{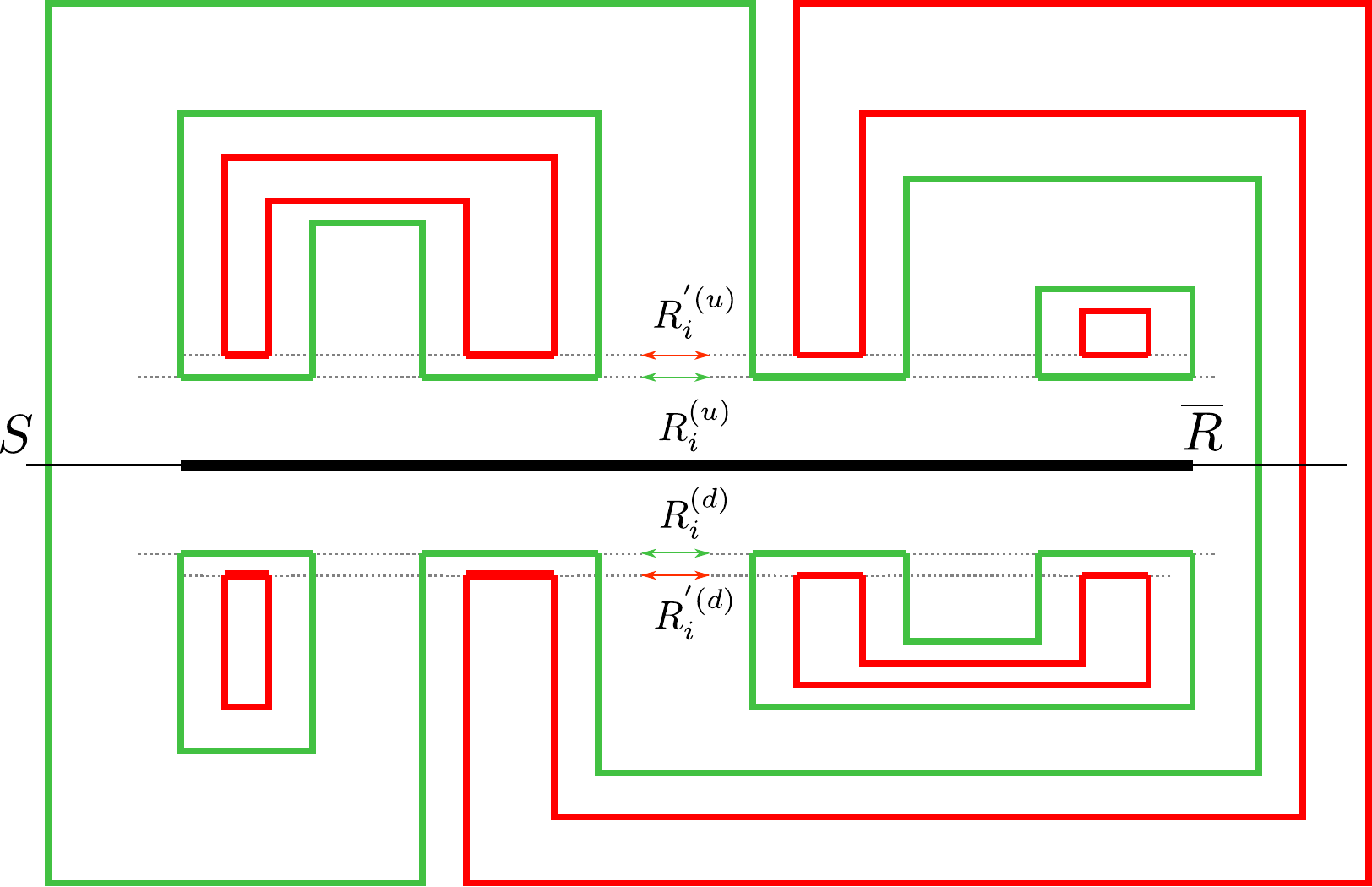} 
\caption{Before and after splitting\todo[inline]{KF: Colors in figure inconsistent with definition: a quadruple should start/end in red and not in green (blue).}\todo[inline]{KF: Some labels too small, whereas some labels in some other figures are too big. In the ideal case, all labels should have over all figures have the same size as the main text.}}\label{fig:before}
\end{figure}

Let $\overline{\segmalt}$ be the minimum segment that contains all segments
$\segmalt_i$ for $i \in \{1, \ldots, k\}$.  It is important to observe that,
after the splitting,~$\overline{\segmalt}$ is not crossed by any of the tours. \todo{KF: Write that now we have a \enquote{corridor} around it.}
\todo[inline]{KF: We should also argue about the total splitting cost somewhere as we refer to it at the final cost analysis.}

\paragraph*{Merging Precise Interfaces}
From now on, the goal of the transformation is to merge the pairs resulting from the splitting in order to obtain again two tours in total. \todo{KF: in total, right?}


To formally describe the process of merging, let us first define the operation
of merging consecutive precise interfaces that both lie on the same side of
the segment~$\segm$ (both above or both below). This operation is the reverse of
splitting and it is illustrated in Fig.~\ref{fig:merge}. 

\begin{figure}[ht!]
    \includegraphics[width=\textwidth]{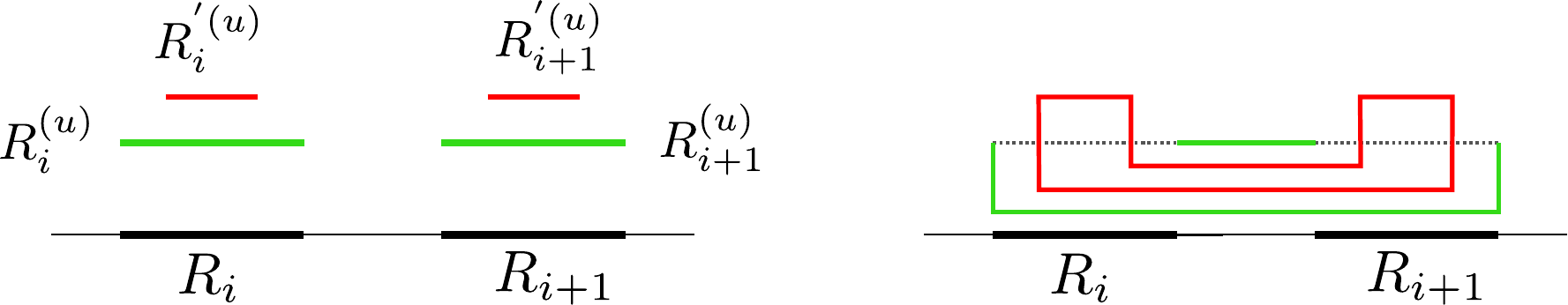}
\caption{%
\todo[inline]{KF: Labels way too big!}
\todo[inline]{KF: We should make visually more clear where the precise interfaces are on the right side (e.g., by increasing the thickness of the dashed part and also replace the $R'$s by the dashed part. The current variant of the figure wrongly suggests that the~$R'$s are still present and have not been remove; which is wrong. The thin red horizontal segments symbolize the red tour above the~$R'$s, but this is not clear).}
Merging two consecutive interfaces when the interfaces are above
$\overline{\segmalt}$. The left figure presents the interfaces before merging. The right
procedure depicts the result after merging. The segments are removed and
connected through the corridor \todo[inline]{KF: depict the corridor as a gray shaded area?} between them and \todo[inline]{KF: not so clear what we mean. In the main text, there are two different types of corridors.} $\overline{\segmalt}$ in the
noncrossing manner.}\label{fig:merge}
\end{figure}

Let $(\segmalt_i^{(a)},{\segmalt'}_{i}^{(a)})$, where $a \in \{ u,d \}$, and
$(\segmalt_{i+1}^{(a)},\segmalt_{i+1}'^{(a)})$ be two consecutive precise
interfaces lying on the same side of $\segm$. Let $(\sigma_i^{(a)},\mu_i^{(a)})$
and $(\sigma_{i+1}^{(a)},\mu_{i+1}^{(a)})$ be two distinct 
\importantquestion{KF: Can it happen that they are the same tours? Yes, it can! The word distinct wants to emphasize it, but this is not strong enough. It even wrongly suggests that the two pairs are always distinct. We should rephrase/add here something. And also, what do we do if the pairs are not distinct. We just keep and do not merge those precise interfaces?} 
pairs of tours whose
interfaces are $(\segmalt_i^{(a)},\segmalt_i'^{(a)})$ and
$(\segmalt_{i+1}^{(a)},\segmalt_{i+1}'^{(a)})$, respectively. 
To merge the interfaces $(\segmalt_i^{(a)},\segmalt_i'^{(a)})$ and
$(\segmalt_{i+1}^{(a)},\segmalt_{i+1}'^{(a)})$, we remove 
the segments~$\segmalt_i^{(a)}$,~$\segmalt_i'^{(a)}$,~$\segmalt_{i+1}^{(a)}$ and~$\segmalt_{i+1}'^{(a)}$ 
from the tours containing them. As a result, all four
tours~$\sigma_i^{(a)}$,~$\mu_i^{(a)}$,~$\sigma_{i+1}^{(a)}$ and~$\mu_{i+1}^{(a)}$
become paths. We connect the endpoints of~$\sigma_i^{(a)}$ with the endpoints of~$\sigma_{i+1}^{(a)}$ 
\importantquestion{KF: Very unclear: which endpoints are connected together? $\sigma_i^{(a)}$ has two endpoints and $\sigma_{i+1}^{(a)}$ has two endpoints.}
in a noncrossing manner, and this creates a corridor very
close to $\overline{\segmalt}$. We then connect the endpoints of $\mu_i^{(a)}$
with the endpoints of $\mu_{i+1}^{(a)}$\importantquestion{KF: Again, which endpoints are precisely connected together?}, in a noncrossing manner, via the
corridor that was just created. This operation is depicted in
Fig.~\ref{fig:merge}. Thus, by merging precise interfaces, the associated
pairs of tours are also merged into one pair $(\sigma^{(a)},\mu^{(a)})$. 
If both pairs that are merged are laminar, then observe that $\sigma^{(a)}$ \todo{KF: other way around?} is also laminar
to $\mu^{(a)}$. If, on the other hand, one of the pairs is parallel, then the
resulting pair is also parallel. 
\importanttodo{KF: Check correctness of the claims concerning laminar/parallel!}
\todo{KF: check consistency of usage of laminar/parallel, also w.r.t. tours/w.r.t. pairs}

\paragraph*{Merging All Pairs}

\begin{figure}[ht!]
    \centering \includegraphics[scale=0.5]{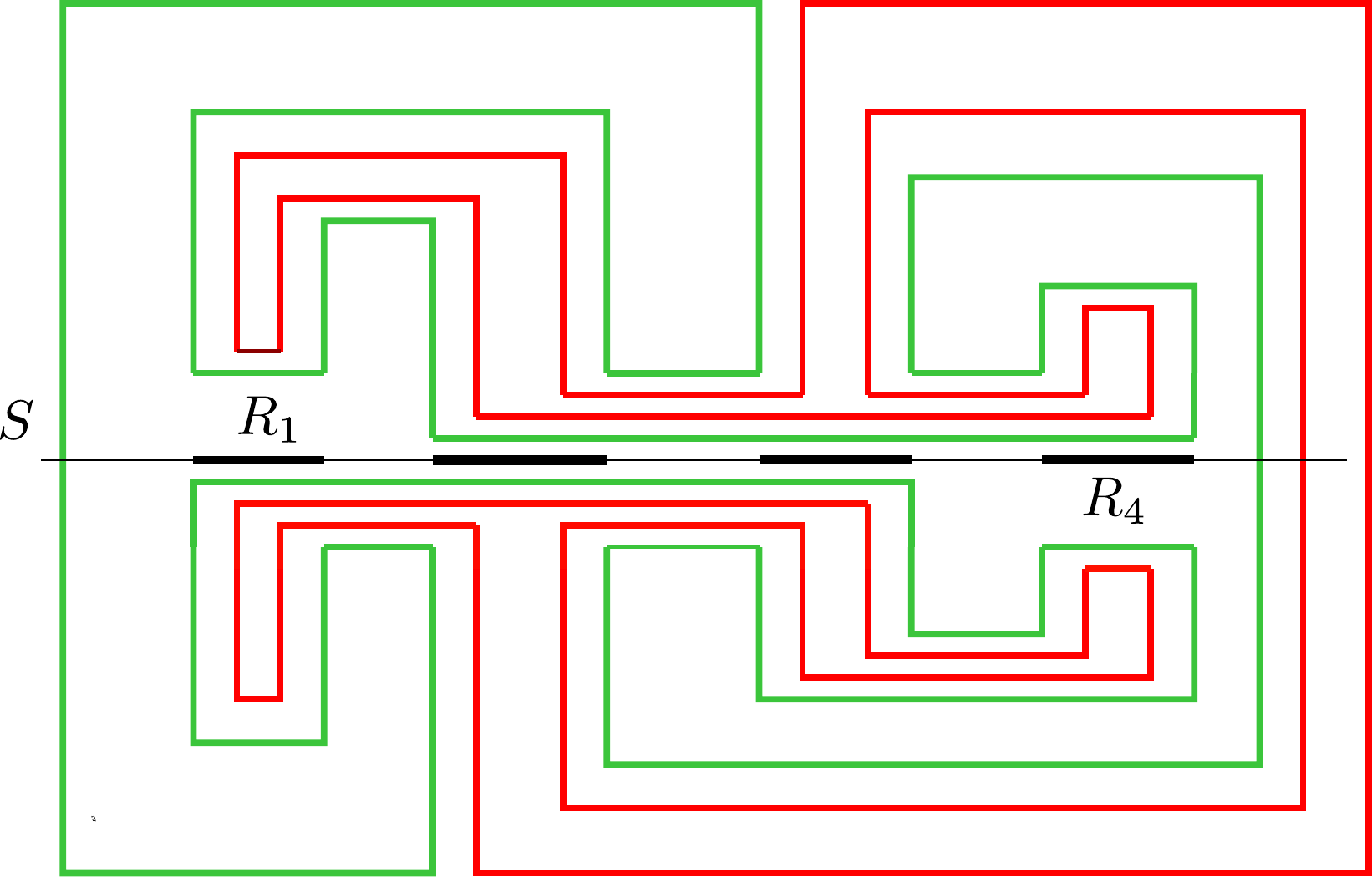}
\caption{Our example after the transformation}\label{fig:aftermerge}
\end{figure}

We now describe the process of merging all pairs of the tours that were created during
the splitting procedure. 
The idea is it to start, by merging pairs of tours whose precise interfaces lie above $\segm$.
Subsequently, we want to merge pairs of tours whose precise interfaces lie below $\segm$. 
After this procedure, we may still obtain
a constant number of distinct pairs of tours, one \todo{KF: one or some? If one, then we shouldn't write constant number but just \emph{at most two}.} above and one below $\segm$.
In that case we merge them by crossing $\segm$. 

In more detail, we first iterate through the rough interfaces~$\segmalt_2, \dots, \segmalt_k$ (skipping $\segmalt_1$). 
For each $\segmalt_i$, we look at the pair~$(\sigma^{(u)}_i,\mu^{(u)}_i)$ whose
precise interface is $(\segmalt_i^{(u)},\segmalt_i'^{(u)})$. If $\sigma^{(u)}_i$ is
not the same tour as $\sigma^{(u)}_{i-1}$\importantquestion{KF: But what if $\mu^{(u)}_i$ equals $\mu^{(u)}_{i+1}$, and what about the case that $\mu^{(u)}_i$ and $\mu^{(u)}_{i+1}$ differ but $\sigma^{(u)}_i$ and $\sigma^{(u)}_{i-1}$ not?}, we merge the two precise interfaces
$(\segmalt_{i-1}^{(u)},\segmalt_{i-1}'^{(u)})$ and
$(\segmalt_i^{(u)},\segmalt_i'^{(u)})$\importantquestion{KF: What happens to their precise interfaces after the merge? They get destroyed, right? But then what happens to the sequence of the precise interfaces? Do we update it? Can it now happen that two precise interfaces become neighbors (consecutive) that weren't consecutive before?}.
Thus, at the end, all the precise interfaces above~$\segm$ belong to the same pair of tours. 
In an analogous way, we proceed with the interfaces below $\segm$ and obtain a single pair of tours below~$\segm$.
If both pairs of tours, above and below~$\segm$, are the same, we are done.
Otherwise, if the pairs are distinct\importantquestion{KF: do we include the case that two tours from the pairs are the same while two others are distinct?}, 
we merge\todo{KF: How do we merge?} the interface~$(\segmalt_1^{(u)},\segmalt_1'^{(u)})$
with~$(\segmalt_1^{(d)},\segmalt_1'^{(d)})$ which results in a single pair of tours crossing $\overline{\segmalt}$ four times. 
For an example, see Fig.~\ref{fig:merge} \todo{KF: Fig.~\ref{fig:aftermerge}?} that depicts the situation after merging the interfaces of Fig.~\ref{fig:before}. 
Note that merging all interfaces costs at most $8 \wt(\segm)$\importantquestion{KF: Why factor~$8$? Why a constant? Can't it happen during the merging that repeatedly pairs of large distance that before were not consecutive, become consecutive (after pairs in between have been merged together and their precise interfaces got removed)?}\todo{KF: Does the factor include the cost for splitting and everything before, too?}.


Summarized, the resulting tours $\touralt{\red}$ and
$\touralt{\blue}$ cross~$\segm$ at most ten times in total: at most six times \importantquestion{KF: Can we decrease it to four? (See comment in the beginning.)} along~$\segm\setminus\overline{\segmalt}$ as noted above, and at most four times along~$\overline{\segmalt}$ after the merging.
Splitting the tours
incurs an additional cost of at most $8 \wt(\segm)$, while merging the tours
also incurs an additional cost of at most $8 \wt{\segm}$. The initial
transformation of Arora costs at most $3 \wt(S)$. Thus, $\wt(\tour{\red}) +
\wt(\tour{\blue}) \leq \wt(\touralt{\red}) + \wt(\touralt{\blue}) +
20\wt(\segm)$ as promised. 

\section{Euclidean Bicolored Noncrossing Traveling Salesman Tours}
\label{sec:noncrossing-tsp}

In the following, we show that the problem of finding two noncrossing Euclidean traveling salesman tours, one for each color, admits an EPTAS.

\thmPTAStsp* 

In Section~\ref{sec:structureTheorem}, we prove a helpful theorem that allows us to focus on restricted tours, which is a key ingredient for our PTAS in Section~\ref{sec:alg-2noncross}.

\subsection{Our Structure Theorem}
\label{sec:structureTheorem}
The structure theorem shows that tours obeying certain restrictions are not much more expensive than unrestricted ones. 
Kisfaludi{-}Bak et al.~\cite{focs21} called such restricted tours \emph{$r$-simple}.  
Below, we extend this notion to our setting and introduce \emph{$r$-simple pairs of tours}. 
We say that a pair of tours $(\tour{1},\tour{2})$ crosses a line segment $\segm$ if $\tour{1}$ crosses $\segm$ or $\tour{2}$ crosses $\segm$. 
In this sense, we define $\is((\tour{1},\tour{2}),\segm)=\is(\tour{1},\segm) \cup \is(\tour{2},\segm)$ as the set of points at which the pair $(\tour{1},\tour{2})$ crosses $\segm$.

\begin{definition}[$r$-simple pair of tours]\label{def:simple}
Let $\mathbf{a}$ be a random shift vector. A pair of tours $(\tour{1},\tour{2})$ is \emph{$r$-simple} if it is noncrossing\kint{defined to be noncrossing. remove noncrossing where we write r-simple.} and, for any boundary~$F$ of the dissection $D(\mathbf{a})$ crossed by the pair, 
	\begin{enumerate}[label=(\alph*)]
		\item\label{case:Arora} it crosses $F$ entirely through one or two portals belonging to~$\grid(F, \lfloor r\log L \rfloor)$, or 
        \item\label{case:Sensitive} it crosses $F$ entirely through portals belonging to~$\grid(F, g)$, for~$g\le r^2/m$ where~$m$ is the number of times~$F$ is crossed.
	\end{enumerate}
	Moreover, for any portal~$p$ on a grid line $\ell$, the pair~$(\tour{1},\tour{2})$ crosses~$\ell$ at most ten times through~$p$. 
\end{definition}

Our structure theorem is an extension of the structure theorem of Kisfaludi{-}Bak et al.~\cite{focs21} to noncrossing pairs of tours.

\begin{theorem}[Structure Theorem]\label{thm:struct}
    Let $\mathbf{a}$ be a random shift vector, 
	 and let $(\pi_1,\pi_2)$ be a pair of
    noncrossing tours 
    consisting only of line segments whose endpoints lie in an infinitesimal neighborhood of~$\mathbb{Z}^2$. 
    For any large enough integer $r$, there is an $r$-simple pair
    $(\touralt{1},\touralt{2})$ of noncrossing tours 
    that differs from~$(\pi_1,\pi_2)$ only in an infinitesimal neighborhood around the grid lines
    and that satisfies 
    \[\mathbb{E}_{\mathbf{a}}[\wt(\touralt{1} \cup \touralt{2})-\wt(\tour{1} \cup \tour{2})] = \Oh((\wt(\tour{1}) +\wt(\tour{2})) / r)~.\]
\end{theorem}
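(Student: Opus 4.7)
The plan is to adapt the structure theorem of Kisfaludi-Bak et al.~\cite{focs21} by replacing every invocation of Arora's single-curve patching (Lemma~\ref{lem:patch}) with our noncrossing patching lemma for pairs of tours (Lemma~\ref{lem:patch2}). Since Lemma~\ref{lem:patch2} only modifies the tours in an infinitesimal neighborhood of the patching segment and allows us to relocate the (at most ten) resulting crossings to any prescribed portal, the surgery can be performed independently on each boundary without interfering with operations on other boundaries: a pass at boundary~$F$ touches the pair only in an infinitesimal tube around~$F$, so noncrossingness is a purely local invariant and is preserved globally throughout the construction.

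I would process the grid lines one by one. For a grid line $\ell$ of level~$i$ with boundaries $F_1,\ldots,F_t$, I handle each boundary~$F_j$ according to its current crossing count $m_j := |\is((\pi_1,\pi_2),F_j)|$. The dichotomy is chosen as follows: if $m_j \leq r$ (small), I fall into case~\ref{case:Arora} and use the fine portal set $\grid(F_j,\lfloor r\log L\rfloor)$. I iteratively apply Lemma~\ref{lem:patch2} on consecutive groups of crossings to coalesce them all onto one or two portals of $\grid(F_j,\lfloor r\log L\rfloor)$, paying at most $\mathcal{O}(\wt(F_j))$ per patching invocation and $\mathcal{O}(m_j \cdot \wt(F_j)/(r\log L))$ total for the portal-snapping step (by choosing portals closest to each bunched cluster). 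If $m_j > r$ (large), I use case~\ref{case:Sensitive}: partition the $m_j$ crossings into roughly $g = \lfloor r^2/m_j\rfloor$ contiguous blocks and apply Lemma~\ref{lem:patch2} to each block so that the block is reduced to at most ten crossings sharing one portal of $\grid(F_j,g)$, paying $\mathcal{O}(\wt(F_j)\cdot m_j/g) = \mathcal{O}(\wt(F_j)\cdot m_j^2/r^2)$ in total for the block. The ``at most ten crossings per portal'' clause of Definition~\ref{def:simple} is delivered directly by the conclusion of Lemma~\ref{lem:patch2}.

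The expected cost analysis follows the template of~\cite{focs21}. For each grid line~$\ell$, let $m(\ell)$ be the number of crossings of~$(\pi_1,\pi_2)$ with~$\ell$. In case~\ref{case:Arora} the patching cost on~$\ell$ is~$\mathcal{O}(|\ell|\cdot m(\ell)/(r\log L))$; in case~\ref{case:Sensitive} it is~$\mathcal{O}(|\ell|\cdot m(\ell)^2/r^2)$. Here~$|\ell|$ stands for the combined length of boundaries on~$\ell$, which is bounded by~$2L/2^i$ times the number of relevant cells at level~$i$. Conditioning on~$\ell$ having level~$i$ (probability at most~$2^i/L$ by Lemma~\ref{lem:levelprob}), the $|\ell|$ factor telescopes with the level factor, and summing over all grid lines and combining with Lemma~\ref{lem:crossingsvslength} (which gives $\sum_{\ell} m(\ell) = \mathcal{O}(\wt(\pi_1)+\wt(\pi_2))$) yields expected extra cost $\mathcal{O}((\wt(\pi_1)+\wt(\pi_2))/r)$, exactly as required.

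The main obstacle, and the reason the proof is not a black-box port of~\cite{focs21}, is verifying that the block-wise patching in case~\ref{case:Sensitive} stays compatible with noncrossingness. In~\cite{focs21} one can patch each block by Arora's single-curve lemma, but here two tours are intertwined and the blocks contain crossings of both colors in an a priori arbitrary pattern. The key is that Lemma~\ref{lem:patch2} already handles this: it operates on a segment spanning any prefix of crossings of both colors and bounds the outcome uniformly by ten crossings and a factor-$20$ length overhead, regardless of the colour interleaving. A secondary subtlety is that the ten residual crossings of each block must be placed on a \emph{single} portal of $\grid(F_j,g)$; the ``furthermore'' clause of Lemma~\ref{lem:patch2} guarantees exactly this free placement at $\mathcal{O}(\wt(\segm))$ extra cost, which is absorbed into the block-patching estimate above. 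With these two ingredients in hand, the rest of the analysis is the same sparsity-sensitive bookkeeping as in~\cite{focs21}, and the constants in the final expectation only change by an $\mathcal{O}(1)$ factor absorbed into the big-$\mathcal{O}$.
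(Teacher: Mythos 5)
Your high-level plan---replace Arora's single-curve patching by Lemma~\ref{lem:patch2} everywhere in the sparsity-sensitive framework of~\cite{focs21}, and use the ``furthermore'' clause to slide the ten residual crossings onto prescribed portals---matches the paper. However, the cost analysis you sketch is not the paper's and, taken at face value, does not close.

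The central gap is that your dichotomy and your per-boundary cost estimates ignore \emph{proximity}, which is the heart of the sparsity-sensitive argument. You split on whether the raw crossing count $m_j$ exceeds~$r$. The paper instead partitions the crossings on a boundary into a ``dense'' set~$N$ (those within distance $L/(2^i r)$ of their left neighbor) and a ``sparse'' set~$G$, and branches on $|G|\le 1$. This is not cosmetic. In case~\ref{case:Arora}, coalescing $m_j\ge 3$ crossings onto one or two portals requires patching a segment that spans all of them; if they are spread out, that costs $\Theta(\wt(F_j))$, \emph{not} $\mathcal{O}(m_j\wt(F_j)/(r\log L))$ as you claim. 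Similarly in case~\ref{case:Sensitive}, the blocks are contiguous intervals and their total span is $\mathcal{O}(\wt(F_j))$ irrespective of~$m_j$, so neither the $\mathcal{O}(\wt(F_j)m_j^2/r^2)$ estimate you wrote nor the more realistic $\mathcal{O}(\wt(F_j))$ estimate decays with~$r$. Conditioning on the level of $\ell$ and summing, a per-boundary cost of $\Theta(\wt(F_j))$ yields expected cost $\Theta(\log L)$ per affected grid line (since $\sum_i \Pr[\text{level}=i]\cdot 2L/2^i = \Theta(\log L)$), and summing over the $\Omega(\opt)$ grid lines with $m(\ell)\ge 3$ gives $\Omega(\opt\log L)$, not $\mathcal{O}(\opt/r)$. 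The paper escapes this precisely by charging the $N$-crossings their proximity, bounding the connection cost of the sparse set $G$ by $L|G|^2/(2^i r^2)$, and controlling $|G|^2$ via Cauchy--Schwarz in terms of $\sum_{x\in G}1/\pro(x)$; the resulting per-crossing charge is then $\mathcal{O}(1/r)$ in expectation after summing the geometric series over levels. None of this machinery appears in your sketch, and without it the expected-cost bound you state in your third paragraph does not follow from your second.

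A secondary omission: you assert that surgeries on distinct boundaries do not interfere. They do---a patching segment inserted along $F$ will cross the perpendicular, strictly shorter boundaries meeting~$F$ near the patching site and thereby add new crossing points there. The paper handles this by processing boundaries in non-increasing order of length and by explicitly tracking the auxiliary set~$H$ of crossings inherited from earlier patchings (kept infinitesimally close to an endpoint of the boundary). Your proposal needs some version of this ordering and bookkeeping; without it, ``already processed'' boundaries can be silently invalidated. Finally, a small quantitative point: with $g=\lfloor r^2/m_j\rfloor$ and up to ten residual crossings per portal, the resulting number of crossings $m$ satisfies $m\le 10g$, so to guarantee $g\le r^2/m$ in Definition~\ref{def:simple}\ref{case:Sensitive} you actually need $g\lesssim r/\sqrt{10}$, which your threshold $m_j>r$ does not ensure; the paper avoids this by baking the factor ten directly into $g=\lfloor r^2/(10(|G|+1))\rfloor$.
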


In the remainder of this section, we prove Theorem~\ref{thm:struct}. The proof is based on
the sparsity-sensitive patching technique of Kisfaludi{-}Bak et al.~\cite{focs21} that we discuss here for completeness.
The sparsity-sensitive patching transforms two noncrossing tours $\pi_1,\pi_2$ into an~$r$-simple pair~$(\pi'_1,\pi_2')$ of noncrossing tours
by patching the tours along each boundary~$F$ that does not satisfy the conditions of Definition~\ref{def:simple}. 
As shown later, by considering the boundaries one by one in non-increasing order of their length, 
the patching of one boundary will never affect the crossings of boundaries already considered; thus, in the end, all boundaries will satisfy the desired conditions. 

Let~$F$ be the boundary whose turn is now to be patched. 
Let~$(\hat{\tour{1}},\hat{\tour{2}})$ be the pair of tours obtained from~$(\tour{1},\tour{2})$ by the previous patching steps (initially, $(\hat{\tour{1}},\hat{\tour{2}})=(\tour{1},\tour{2})$). 
By choosing the~$x$-axis parallel to~$F$ and orienting it appropriately, we can assume that~$F$ is horizontal with its open endpoint (if any) on the right side (in the direction of increasing~$x$-coordinate).
Inductively, we assume that~$\is(({\tour{1}},{\tour{2}}),F) \subseteq \is((\hat{\tour{1}},\hat{\tour{2}}),F)$ and that the remaining crossing points, $H:=\is((\hat{\tour{1}},\hat{\tour{2}}),F)/\is(({\tour{1}},{\tour{2}}),F)$, lie together infinitesimally close to the right endpoint of~$F$ and to the right of all the other crossing points.
Let~$i$ be the level of the grid line containing~$F$.

First, we partition the crossings in~$\is(({\tour{1}},{\tour{2}}),F)$ (thus, without~$H$) into two sets~$G$ and~$N$. 
If their number, $k=|\is(({\tour{1}},\tour{{2}}),F)|$, is~$0$, we set~$G=\emptyset$ and~$N=\emptyset$.
Otherwise, let~$c_1< \ldots < c_k$ denote their $x$-coordinates. 
The \emph{proximity} of the $j$-th crossing~$x$ in this set is defined
as~$\pro(x) = c_{j}-c_{j-1}$ (for $j=1$, use $c_{0}=-\infty$). 
We set $N$ as the set of
all crossings in~$\is(({\tour{1}},{\tour{2}}),F)$ with proximity at most $L/(2^i r)$,
and~$G$ as the set of the remaining crossings.
If~$|G|\le 1$, we set~$g=\lfloor r \log L \rfloor$, otherwise~$g=\lfloor r^2/(10(|G|+1)) \rfloor$. 

Next, based on~$N$,~$G$ and~$H$, we create a set~$\mathcal{S}$ of disjoint line segments as follows:
we connect each point in~$N$ to its left neighbor, and each point in~$G$ and all points in~$H$ to their closest portals in $\grid(F,g)$.
The union of all these connections yields a set,~$\mathcal{S}$, of at most~$|G|+1$ maximal segments (if not empty, the set~$H$ is connected via an infinitesimally short segment, possibly belonging to a longer segment with points from~$N$ and~$G$).

Instead of using the standard patching procedure of Arora~\cite{Arora98,Arora2007} (Lemma~\ref{lem:patch}), we apply Lemma~\ref{lem:patch2} to each line segment of $\mathcal{S}$ to obtain a new pair of tours $(\touralthat{1},\touralthat{2})$ crossing~$F$ at no more than~$|G|+1$ portals and each portal at most ten times. 

If~$|G|\le 1$, then we use at most two portals and they belong to~$\grid(F,\lfloor r \log L \rfloor)$, which satisfies Case~\ref{case:Arora} of Definition~\ref{def:simple}.
Otherwise, if~$|G|>1$, 
observe that,~$F$ is crossed~$m\le 10(|G|+1)$ times in total, which implies that~$g=\lfloor r^2/(10(|G|+1)) \rfloor$ satisfies the bound in Case~\ref{case:Sensitive} of  Definition~\ref{def:simple}.
Thus, we conclude with Lemma~\ref{lem:patch2} that the resulting pair~$(\touralthat{1},\touralthat{2})$ is noncrossing and satisfies the conditions of Definition~\ref{def:simple} for the boundary~$F$.
Note that the new line segments that we introduced for patching may cross other boundaries (perpendicular to~$F$) and thus 
introduce new crossing points on each of them infinitesimally close to~$F$, that is, infinitesimally close to their respective open endpoints.
Moreover, by our construction of the dissection, any affected boundaries are shorter and thus haven't been considered yet by our patching procedure. Conversely, we can conclude that~$(\touralthat{1},\touralthat{2})$ satisfies the conditions of Definition~\ref{def:simple} also for any boundaries considered so far (whose length is equal or larger than~$|F|$).

By Lemma~\ref{lem:patch2}, the total expected patching cost for~$F$ (that is, $\wt(\touralthat{1} \cup \touralthat{2})-\wt(\hat{\tour{1}} \cup \hat{\tour{2}})$) is proportional to $\wt(\mathcal{S})$. 
Note that the attribution of~$H$ is only infinitesimal to the total cost as all crossing points in~$H$ are infinitesimally close to each other and readily lie on the last portal of~$\grid(F,g)$ (furthermore, the cost can be charged to the patching cost of the boundary that caused the crossing points of~$H$). Since~$H$ has also no influence on the proximity of the other crossing points, we will bound the expected patching cost assuming that~$H$ is empty; that is, in the following cost analysis, we assume~$\is(\hat{\tour{1}} \cup \hat{\tour{2}})=\is({\tour{1}} \cup {\tour{2}})$. 
Following the analysis of Kisfaludi{-}Bak et al.~\cite{focs21}, we first bound this cost for~$F$ in terms of proximity. Subsequently, we 
show that each crossing point contributes in expectation~$\Oh(1/r)$ to the total cost. This fact allows us to bound the total cost in terms of the number of crossing points and, finally, in terms of~$\wt(\tour{1}) + \wt(\tour{2})$.


To simplify the discussion, we overestimate the costs and charge every crossing point in~$\is(({\tour{1}},{\tour{2}}),F)$ with the cost of being connected to the closest portal in~$\grid(F,\lfloor r \log L \rfloor)$ (independently of wherwhere the point really is). Note that this grid corresponds to the portal placement of Arora~\cite{Arora98,Arora2007}. Thus, following Arora's arguments~\cite{Arora98,Arora2007}, this expected (amortized) charging cost amounts to~$\Oh(1/r)$ for every crossing point and is thus within the desired bound (see above).
For the case~$|G|=1$, we are therefore left with bounding the total cost~$\sum_{x \in N} \pro(x)$ of connecting the points in~$N$ to their left neighbors. 
Since this sum is encompassed in a bound that we establish below for the case~$|G|>1$, it will immediately follow that we can charge this sum with the expected value~$\Oh(1/r)$ to each crossing point. Thus, from now on, we assume that~$|G|>1$.

For each point in~$G$, we pay no more than~$|F|/(2(g-1))$ to connect it to its closest portal in~$\grid(F,g)$. 
Since the definition of $G$ implies~$|G|=\Oh(r)$, the total connection cost of the points in~$G$ is bounded by
\begin{align}\label{eq:wtf}
	\sum_{x \in G}\frac{\wt(F)}{2(g-1)} 
	= \Oh\left( \sum_{x \in G}\frac{L|G|}{2^i r^2}\right) 
	= \Oh\left(\frac{L|G|^2}{2^i r^2}\right)~.
\end{align}
To further bound $|G|^2$ in terms of proximity, 
let $\rho$ denote the point in~$G$ with the minimum $x$-coordinate. 
We apply Cauchy-Schwartz to the vectors $\big(\sqrt{\pro(x)}\big)_{x\in G\setminus \{\rho\}}$ and $\big(\sqrt{1/\pro(x)}\big)_{x\in G\setminus \{\rho\}}$, noting that $1/\pro(\rho)=0$ and~$\sum_{x \in G \setminus \{\rho\}} \pro(x) \leq \wt(F) = 2L/2^i$.
Therefore, we have 

\begin{equation}\label{eq:hmam}
	|G \setminus \{\rho\}|^2 \leq \left(\sum_{x \in G \setminus
    \{\rho\}} \pro(x) \right) \left( \sum_{x \in
    G}\frac{1}{\pro(x)}\right) \leq \frac{2L}{2^i}\sum_{x \in
    G}\frac{1}{\pro(x)}~.
\end{equation}

Since $|G|^2 \le 4 |G \setminus \{\rho\}|^2$ (as~$|G|>1$), we can combine \eqref{eq:hmam} with~\eqref{eq:wtf} and, altogether, obtain the bound 

\[
	\wt(\mathcal{S}) = \Oh \left( \sum_{x \in N} \pro(x) +  \left(\frac{L}{2^ir}\right)^2 \sum_{x \in G}\frac{1}{\pro(x)}\right)~.
\]
(The right side also bounds the connection cost for~$N$ of the case~$|G|=1$.)
Thus, each crossing point contributes a specific amount to the total bound depending on its proximity and the level~$i$ of the grid line containing~$F$. Recall that the proximity also depends on~$i$ as the level determines whether a crossing point is the left-most one of its boundary and therefore whether its proximity is fixed to~$\infty$. To remove this dependence on the level, we define, for each grid line~$\ell$, the \emph{relaxed proximity} of each crossing point $x\in\is((\tour{1},\tour{2}),\ell)$ as the distance to the left neighbor of~$x$ in $\is((\tour{1},\tour{2}),\ell)$, and~$\infty$ if there is no left neighbor. Thus, the relaxed proximity equals the old proximity for all but possibly the left-most point~$\rho$ of~$\is((\tour{1},\tour{2}),F)$. 
Similarly, let~$N'$ be the set of all crossing points in $\is((\tour{1},\tour{2}),F)$ with relaxed proximity at most~$L/(2^i r)$ and let~$G'$ contain all the other points. 
Thus, either~$N'=N$ and~$G'=G$ (if~$\pro'(\rho)=\pro(\rho)$), or~$N'=N\cup\{\rho\}$ and~$G'=G\setminus\{\rho\}$. Using that~$1/\pro(\rho)=0$, we have
\[
	\wt(\mathcal{S}) = \Oh \left( \sum_{x \in N'} \pro'(x) +  \left(\frac{L}{2^ir}\right)^2 \sum_{x \in G'}\frac{1}{\pro'(x)}\right)~.
\]
Note that~$x\in N'$ if and only if the level~$i$ is at most~$\theta(x) := \log (L/(r \cdot \pro'(x)))$. Thus, 
the contribution of each crossing point~$x$ 
to the total bound is~$\alpha_i(x) := \Oh(\pro'(x))$ if~$i\le \theta(x)$, and $\alpha_i(x) := \Oh\left((L/(2^ir))^2 / \pro(x')\right)$ otherwise. 

Now, for a fixed grid line $\ell$ and a crossing point~$x \in \is((\tour{1},\tour{2}),\ell)$, Lemma~\ref{lem:levelprob} allows us to bound the expected patching cost
due to $x$ by 
\[ 
    \sum_{i=0}^{1+\log L} \Pr[\ell \text{ has level } i] \cdot \alpha_i(x) =
\Oh\left(\sum_{i=0}^{\theta(x) } \frac{2^i}{L} \pro'(x) + \sum_{i=\theta(x)+1}^{1+\log L}
\frac{L}{2^i} \frac{1}{r^2 \pro'(x)}\right) = \Oh\left(\frac{1}{r}\right)~,
\]
where the right side follows by the convergence of sums of geometric progressions. 
%
%
%
Consequently, the expected total patching cost along~$\ell$ is bounded by $\Oh\left(|\is((\tour{1},{\tour{2}}),\ell)|/r \right)$.
%
%
Adding everything up, the total expected patching cost along all grid lines is at most 
\begin{align*}
	\sum_{\ell} \Oh(|\is((\tour{1},\tour{2}),\ell)|/r)  
	&= \sum_{\ell} \Oh(|\is(\tour{1}),\ell)|/r)+ \sum_{\ell}\Oh(|\is(\tour{2}),\ell)|/r) 
	\\&= \Oh((\wt(\tour{1}) + \wt(\tour{2}))/r)
\end{align*}
by Lemma~\ref{lem:crossingsvslength}, as required.

\subsection{EPTAS for Noncrossing Euclidean Tours}
\label{sec:alg-2noncross}

In this section, we are going to use our structure theorem to give an approximation algorithm for \BTSP. For a fixed $\eps >0$, our algorithm will provide a~$(1+\eps)$-approximation for this problem. 

The input consists of two terminal sets~$P_1,P_2 \subset \mathbb{R}^2$, each colored in a different color. 
A \emph{bounding box} of a point set is the smallest axis-aligned square containing that set. 
If the bounding boxes of~$P_1$ and~$P_2$ are disjoint, then we can treat them as two independent
(uncolored) instances of TSP and solve them using the algorithm of Kisfaludi{-}Bak et al.~\cite{focs21}. 
Thus, in the end, we assume that the two bounding boxes intersect. 
Let~$L$ be the side length of the bounding box of~$P_1 \cup P_2$ and observe that~$L \ge \opt$.
By scaling and translating the instance, we assume that~$L$ is a power of~$2$ and of order~$\Theta(n / \eps )$, and that the corners of the bounding box of~$P_1 \cup P_2$ are integer. 

\paragraph*{Perturbation} 
As mentioned in Section~\ref{sec:arorasTools}, we first perturb the instance such that all input terminals lie in~$\{0,\ldots,L\}^2$ on disjoint positions. 
For this, we follow related work~\cite{Arora98,red-blue,isaac15}, and move every terminal in~$P_1$ to the closest position in~$\mathbb{Z}^2$ with even~$x$-coordinate, and every terminal in~$P_2$ to a closest position in~$\mathbb{Z}^2$ with odd~$x$-coordinate. Thus, no terminals of the same color end up in the same position. If there are terminals of the same color on the same position, we treat them from now on as a single terminal.  Later, we argue that
there is an optimum solution 
to this perturbed instance that is only negligibly more costly
than an optimum solution to the original instance. 
From now on, we assume that~$P_1,P_2 \in \{0,\ldots,L\}^2$.

Given~$L$, the perturbed instance and a shift vector~$\mathbf{a}$, we construct a quadtree~$QT(P,\mathbf{a})$ as described in Section~\ref{sec:arorasTools}. Let~$D(\mathbf{a})$ be the corresponding dissection. 
Our idea is to compute the cost of a noncrossing \kt{noncrossing pair of tours or pair of noncrossing tours?} pair~$(\touralt{1},\touralt{2})$ of tours that solves the perturbed instance optimally and that crosses each boundary of the dissection only through carefully selected portals. 
Later, we show that the solution is not much more expensive than the optimum solution to the original instance.

Each boundary of the dissection is allowed to be crossed only through portals belonging to a so-called \emph{fine} portal set that we define as follows.
\begin{definition}\label{def:fine}
Let~$\mathbf{a}$ be a random shift vector. 
Set~$B$ of portals is called \emph{fine for a boundary~$F$} of~$D(\mathbf{a})$ if 
	\begin{enumerate}[label=(\alph*)]
		\item \label{def:fineArora} $|B\cap F|\le 2$ and $B\cap F \subset \grid(F,\lfloor (\log L)/\eps\rfloor)$, or
        \item \label{def:fineSparsity} $B \cap F \subset \grid(F,1/(\eps^2 k))$ for some $k\ge|B\cap F|$. 
	\end{enumerate}
Set~$B$ of portals is called \emph{fine for a cell~$C$} of the quadtree $QT(P,\mathbf{a})$ if~$B$ is fine for each of the four boundaries of~$D(\mathbf{a})$ that contain a border edge of~$C$ and~$B$ is contained in the union of the four boundaries.  
A set of paths (open nor closed\kt{in sense of cycles. define it somewhere?}) is called \emph{fine-portal-respecting} if there is a portal set~$B$ that is fine for each boundary and the tours cross each boundary only through the portals of~$B$ and each portal at most ten times, and the paths are noncrossing.
%
\end{definition}
Note that it is no coincidence that the definitions of fine-portal-respecting pairs of tours and~$r$-simple pairs of tours are very similar. Indeed, later we observe that any~$(1\/\eps)$-simple pair of tours is fine-portal-respecting. 
Also, note that the conditions of the definition imply that a fine portal set has no more than~$\Oh(1/\eps)$ portals.

Now, our overall goal is to compute the cost of an optimum fine-portal-respecting pair of tours solving the perturbed instance. \ktp{Can we underestimate the cost as said in Section~\ref{sec:prelim}?}
%
%
%
%
%
%
%
The strategy is the same as in the standard approximation
scheme for the traveling salesman problem~\cite{Arora98}. 
%
%
We use the quadtree to guide our algorithm based on dynamic programming.
Starting from the lowest levels of the
quadtree, we 
compute partial solutions that we combine together to 
obtain solutions for the next higher levels. 
More concretely, 
for each cell of our quadtree, we define a set of subproblems, 
in each of which, we look for a collection of paths that connect neighboring cells in a prescribed manner
while visiting all terminals inside. Depending on an input parameter of the subproblem, the paths of each color will be disjoint or form a cycle. 

%
%
To define our subproblems, fix some cell~$C$ and consider a fixed fine portal set $B$ for~$C$. 
Let~$\partial C$ be the set of border edges of~$C$, and let~$B' = B \cap \partial C$. 
We want to guess how exactly each portal in~$B'$ is crossed by the fine-portal-respecting pair~$(\touralt{1},\touralt{2})$ (recall that each portal can be crossed at most ten times according to Definition~\ref{def:fine}). 
For this reason, we define~$V_{B'}$ as a matrix of size $|B'| \times 11$ all whose entries are numbers from the set~$\{0,1,2\}$; 
thus,~$V_{B'}[b]$ is a vector of size $11$ and, for~$b \in {B'}$ and $i \in \{ 1, \ldots, 11\}$, we have~$V_{B'}[b,i] \in \{ 0,1,2 \}$.
The interpretation of $V_{B'}[b,i]$ is simple as follows. If $l_b$ is the minimum $i$ such that $V_{B'}[b,i]=0$, then $l_b-1$ is the number of times the tours cross the portal~$b$. For $i < l_b$, the value~$V_{B'}[b,i]$ determines whether the $i$-th crossing of $b$ is due to $\touralt{1}$ (value $1$) or due to $\touralt{2}$ (value $2$). 
Let $\cpy({B'},V_{B'})$ be a new set of colored portals, that is obtained 
by subdividing each portal~$b \in {B'}$ in~$l_b$ shorter portals~$b_1,\dots,b_{l_b}$ (ordered along~$F$ in a globally fixed direction), and by setting the color of~$b_i$ to $\col{b_i}=V_{B'}[b,i]$. \kt{Should we say how we represent infinitesimality in our algorithm?}
For $j \in \{ 1,2 \}$, let ${B'}_j=\{ b \in \cpy({B'},V_{B'}) : \col{b}=j \}$. 
The subproblem for the cell $C$ is additionally defined by two perfect matchings,~$M_1$ on~${B'}_1$ and~$M_2$ on~${B'}_2$ whose union is noncrossing. 
For~$i\in\{1,2\}$, we say that a collection~$\mathcal{P}$ of~$|B'_i|$ \emph{realizes}~$M_i$ if for each $(p,q) \in M_i$ there is a path $\pi_i \in \mathcal{P}$ with $p$ and $q$ as endpoints.
The formal definition of a subproblem for a cell $C$ is as follows.

\newcommand{\doCycle}{\mathrm{Cycle}}
\defproblem{Noncrossing Multipath Problem}
{%
A nonempty cell $C$ of the quadtree, 
a fine portal set $B$ for~$C$ with ${B'}:=B \cap \partial C$, 
a matrix $V_{B'} \in \{0,1,2\}^{|{B'}| \times 11}$, and two perfect matchings, $M_1$ on ${B'}_1=\{b \in \cpy({B'},V_{B'}): \col{b}=1 \}$, and~$M_2$ on ${B'}_2=\{ b \in \cpy({B'},V_{B'}): \col{b}=2 \}$, such that~$M_1 \cup M_2$ is noncrossing, and a (possibly empty) subset~$\doCycle$ of~$\{1,2\}$.}
{Find two 
path collections
    $\mathcal{P}_{{B'}_1,M_1},\mathcal{P}_{{B'}_2,M_2}$ of minimum total length
    such that their union is fine-portal-respecting and that 
    satisfy the following properties for all $i \in \{1,2\}$:
\vspace{-0.5em}
\begin{itemize}
    \setlength\itemsep{0.0em}
    \item every terminal in~$P_i \cap C$ is visited by a path from~$\mathcal{P}_{{B'}_i,M_i}$, 
    \item the paths in~$\mathcal{P}_{{B'}_i,M_i}$ are pairwise noncrossing and entirely contained in $C$, and 
    \item $\mathcal{P}_{{B'}_i,M_i}$ realizes the matching $M_i$ on ${B'}_i$, and
    \item if~$i \in \doCycle$, then the paths in~$\mathcal{P}_{{B'}_i,M_i}$ form a cycle and ${B'}_i=\emptyset$.
\end{itemize}
}

Though our subproblems are similar to those used in approximation schemes for TSP in the literature~\cite{Arora98,focs21},
there is one main difference:
here, we have
two types of paths that correspond to $\pi_1$ and $\pi_2$ in the solution. 
Exactly as in Arora's approximation scheme~\cite{Arora98}, our dynamic programming fills a lookup
table with the solution costs of all the multipath problem instances that
arise in the quadtree. The details follow next.

\paragraph*{Base Case} We start with the base case, where the cell $C$ is a
leaf of the quadtree and contains at most 
one terminal. Without loss of
generality, assume that if there is a terminal inside, then it is from $P_1$ (the algorithm for the other
case is analogous). 
We will generate all subproblems for~$C$ and solve each of them.
Consider every fine portal set~$B$ for~$C$ and, for~$B'=B\cap \partial C$, every possible matrix~$V_{B'}$. Each matrix defines an instance~$\cpy({B'},V_{B'})$ of at most~$10 |B|$ portals that are colored with~$1$ and~$2$. 
For each possible noncrossing pair of perfect matchings, $M_1$ on~${B'}_1$ and $M_2$ on ${B'}_2$, and each subset~$\doCycle\subseteq\{1,2\}$ that is compatible with~${B'}_1$ and~${B'}_2$ (in the sense that if~$i\in\doCycle$, then~${B'_i}=\emptyset$) any path collection realizing~$M_i$ can be connected in the portals together to a cycle without intersecting any other path collection realizing the other matching), we compute the cost of an optimum multipath solution. 
We use dynamic programming to enumerate all possible such pairs of perfect matchings together with their solutions.
To be more precise, let us fix $B$ and $V_{B'}$, then ${B'}_1$ and ${B'}_2$ are implied by the choice of $B$ and $V_{B'}$. Let $p\in P_1$ be the
only terminal inside $C$ (if it exists).  
We define~$\mathtt{BaseCase}$ as our lookup table as follows.
For every $X_1 \subseteq {B'}_1$, every~$X_2 \subseteq {B'}_2$, and every set~$X'$ that contains a pair of elements from~$X_1$ if~$p$ exists, and is empty otherwise, the entry~$\mathtt{BaseCase}[X_1,X_2,X']$ is a set containing every triple~$(M_1, M_2, \wt(M_1 \cup M_2)$, where~$M_1$ is a perfect matching on~$X_1$ with~$X'\subseteq M_1$,~$M_2$ a perfect matching on~$X_2$,~$M_1\cup M_2$ is noncrossing, and~$\wt(M_1 \cup M_2)$ is the cost of a minimum path collection realizing the matchings within the cell where~$p$ (if exists) is connected via the path realizing the pair in~$X'$. 
%
Initially, if~$p$ does not exist, we set
$\mathtt{BaseCase}[\emptyset,\emptyset,\emptyset] = \{(\emptyset,\emptyset,0)\}$.
Otherwise, for every $a,b \in {B'}_1$, we set 
$\mathtt{BaseCase}[\{a,b\},\emptyset,\{(a,b)\}] := \{ (\{(a,b)\},\emptyset, \dist(a,p) + \dist(p,b)) \}$ (which intuitively means that $p$ is connected to the portals
$a,b \in {B'}_1$ and there are no other paths in~$C$).  
\newcommand{\pathcost}{\mathrm{pathcost}}
We define a helpful operator to determine the cost of connecting a pair of portals:
for~$X'\subseteq X_1$ defined as above and any pair of portals~$(u,v)$ from ${B'}_1$ or ${B'}_2$, let~$\pathcost(X',(u,v)):=\dist(u,v)$ if~$X'$ is empty ($p$ does not exist) or, for~$\{a,b\}=X'$, the segment~$uv$ does neither cross the segment~$ap$ nor the segment~$pb$; otherwise~$\pathcost(X',(u,v)):=\dist(u,p)+\dist(p,v)$.  
Next, for every $X_1 \subseteq
B_1$, $X_2 \subseteq B_2$, $X'\subseteq X_1$ with~$|X'|=2$ if~$p$ exists, and~$X'=\emptyset$ otherwise, we compute $\mathtt{BaseCase}[X_1,X_2,X']$ 
with the following dynamic programming formula:
\begin{align*}
    \mathtt{BaseCase}[X_1,X_2,X'] :=
		\bigg\{ ~~&\Big(M_1 \cup
            \{(u,v)\},M_2, \wt(M_1\cup M_2) + \pathcost(X',u,v)\Big)
\\ \Big|~&\{u,v\}\subseteq X_1 \textrm{ and } 
        \\&(M_1,M_2,\wt(M_1\cup M_2)) \in \mathtt{BaseCase}[X_1 \setminus \{u,v\},X_2 ]
        \text{ and } 
        \\&(u,v) \text{ is noncrossing with } M_1 \cup M_2
        ~\bigg\}
        \\\bigcup~&
        \\\bigg\{ ~~&\Big(M_1 ,M_2 \cup
                            \{(u,v)\}, \wt(M_1\cup M_2) + \pathcost(X',u,v)\Big)
        \\ \Big|~&\{u,v\}\subseteq X_2 \textrm{ and } 
                \\&(M_1,M_2,\wt(M_1\cup M_2)) \in \mathtt{BaseCase}[X_1 ,X_2 \setminus \{u,v\} ]
                \text{ and } 
                \\&(u,v) \text{ is noncrossing with } M_1 \cup M_2
                ~\bigg\}      
\end{align*}
\importanttodo[inline]{KF: This is the place to note that our computed cost is not the actual cost of a feasible solution but we should argue why we are arbitrarily close to it.}
Note that if~$1\in\doCycle$ and~$p$ exists, then we assume that~$p$ is visited by an inifitesimal short cycle of length~$0$.
For fixed $B$ and $V_{B'}$, this algorithm runs in $\Oh(2^{\Oh(|B|)})$ and computes the set of all perfect matchings and the corresponding partial solution costs  (cf. the PTAS of Bereg et al.~\cite{isaac15} to see how the base case is handled).

\paragraph*{Algorithm} 
To enumerate and solve all subproblems of a non-leaf cell~$C$, we enumerate all compatible subproblems of its four children~$C_1,\ldots,C_4$ in the quadtree, lookup their solutions and combine them to the implied subproblem for~$C$.
We do it as follows. 
For $i \in \{ 1, \dots, 4 \}$, we iterate 
over every possible fine set~$B_i$ for~$C_i$ and set~${B'}_i= B_i \cap \partial C_i$, over every corresponding matrix $V_{{B'}_i}$. 
For each such pair $B_i,V_{{B'}_i}$, we iterate through all pairs of matchings $M^{(i)}_1$ and $M^{(i)}_2$, and every subset~$\doCycle_i \subseteq\{1,2\}$. 
For every such obtained quadruple~$\left((B_i,V_{{B'}_i},M^{(i)}_1,M^{(i)}_2,\doCycle_i)\right)_{i\in\{1,\dots,4\}}$ (each consisting of four quintuples), we first check 
whether~$B_1 \cup B_2 \cup B_3 \cup B_4$ restricted to the four boundaries around~$C$ implies a fine set~$B$ for~$C$, whether $V_{{B'}_1},\dots,V_{{B'}_4}$ imply, for~$B'=B \cap \partial C$, a consistent matrix~$V_{{B'}}$, and whether the portals, matchings and $\doCycle_1,\dots,\doCycle_4$ are \emph{compatible}. By compatible, we mean 
that (i) for every border edge shared by two neighboring cells (among~$\{C_1, \dots, C_4\}$), both cells define exactly the same portals of the same color,
(ii) $\doCycle_1,\dots,\doCycle_4$ are pairwise disjoint and if~$c\in \Cup_{i=1}^4 \doCycle_i$, then there are no portals of color~$c$ on~$\partial C_1, \dots, \partial C_4$, and (iii) for each color~$c\in\{1,2\}$, the graph in which the portals of color~$c$ are vertices and the matchings are edges is either a cycle or contains no cycles at all.
The last point implies that~$\doCycle$ for~$C$ contains not only all colors from~$\Cup_{i=1}^4 \doCycle_i$ but also all colors for which the graph was a cycle. 
%
Next, if the above conditions hold, we \emph{join} the matchings by contracting degree two vertices in the graphs above.
For graphs that were no cycles, this operation results in (new) matchings with endpoints on~$\partial C$ in the portals $\cpy(B',V_{{B'}})$. 
Finally, if the resulting perfect matchings~$M_1$ and~$M_2$ are noncrossing, we obtained a complete description of a subproblem for~$C$. 
We sum up the solutions costs of the respective subproblems of the four children and store them in our lookup table if this cost is the best found so far for this subproblem for~$c$.

For every quadruple~$\left((B_i,V_{{B'}_i},M^{(i)}_1,M^{(i)}_2)\right)_{i\in\{1,\dots,4\}},\doCycle_i$, it takes time polynomial in~$1/\eps$ to check whether it defines a valid subproblem for~$C$ (and to compute a solution to it). 

After completing the lookup table, our algorithm returns the cost for any subproblem for the root cell of the quadtree with~$\doCycle=\{1,2\}$.

\paragraph*{Analysis}\todo{KF: Check al formulas whether we use/don't use parentheses after $\log$ and after $\polylog$.}
Next, we show that this algorithm runs in~$2^{\Oh(1/\eps)} n \polylog(n)$ time.  
For each boundary, if we want to place a set of~$k$ portals, we have~$\binom{\Oh((\log L)/\eps)}{k}=\binom{\Oh((\log n)/\eps)}{k}$ possibilities in Case~\ref{def:fineArora} of Definition~\ref{def:fine} (assuming~$k\le 2$), and, for every feasible~$k'\ge k$ such that~$k \le 1/(\eps^2 k')$, we have~$\binom{\frac{1}{\eps^2 k'}}{k}$ possibilities in Case~\ref{def:fineSparsity}.
Thus, for every cell, the number of possible fine portal sets is
\begin{displaymath}
    \left(
	\sum_{k=0}^{2} \binom{\Oh((\log n)/\eps)}{2} + \sum_{2 \le k \le k' \textrm{ s.t. } k\le1/(\eps^2 k')} \binom{\frac{1}{\eps^2 k'}}{k}
    \right)^4 = 2^{\Oh(1/\eps)} \polylog (n)~,
\end{displaymath}
using $\sum_{2 \le k \le k' \textrm{ s.t. } k\le1/(\eps^2 k')} \binom{\frac{1}{\eps^2 k'}}{k} = 2^{\Oh(1/\eps)}$ (see Claim~3.4
by Kisfaludi{-}Bak et al.~\cite{focs21}).
Note that the number of possible noncrossing matchings on~$k$ fine portals is known to be bounded by the~$k$-th Catalan number whose value is of order of~$\Oh(2^k)$.
Since~$k\le 1/\eps$ (as noted above) and given that the number of cells in the quadtree is $\Oh(n \log n)$, we conclude that
the number of states in our dynamic programming algorithm is bounded by
$2^{\Oh(1/\eps)} n \polylog(n)$. Observe that to get one level up in the dynamic
programming, we combine the solutions computed for the children by
iterating through all the states of the children cells. This iteration takes
$2^{\Oh(1/\eps)}\polylog (n)$ time as there are~$2^{\Oh(1/\eps)}\polylog (n)$ states for each child and
it takes time polynomial in~$1/\eps$ to check if the states are compatible. 

Finally, we analyze the approximation ratio of our algorithm. \todo{KF: We should give an overview of how we do it: First we analyze the cost of OPT-perturbed, then of OPT-Arora-portals respecting, then OPT-our-portals-respecting, \dots} Assume that
$\pi_1,\pi_2$ is an optimal solution of cost~$\opt$.
First, we argue that snapping points to~$\mathbb{Z}^2$
perturbs the solution by at most $\Oh(\eps\cdot\opt)$. 
Consider a point $p \in P_1$ (the case when $p \in
P_2$ is analogous) that got perturbed. Let $c$ be the position to which $p$ is moved. We
are going to add a segment $cp$ and $pc$ (infinitesimally close to each other) to the curve $\pi_1$ in order to guarantee that the point $p$
is visited after the perturbation. This, however, may cause that $\pi_2$ becomes
intersected. To avoid that, we are going to use Lemma~\ref{lem:patch2} on both
curves to guarantee that segment $cp$ is crossed~$O(1)$ times by the curve $\pi_2$. 
We route all these crossing points around~$cp$ in~$O(1)$ \enquote{layers}. 
This patching increases the total lengths of the curves by $\Oh(|cp|)$ which
is bounded by~$O(1)$.
Since there are only $\Oh(n)$
nonempty cells, the snapping to the grid increases the cost of the
solution by at most $\Oh(n) = \Oh(\eps L) = \Oh(\eps \opt)$. 

Thus, there exists a solution of cost~$\opt+\Oh(\eps\opt)$ to the perturbed instance. Without loss of generality, we can assume that it consists only of line segments with all endpoints lying in an infinitesimal neighborhood around the terminals. Since the terminals have integer coordinates, we can apply Theorem~\ref{thm:struct} for~$r=1/\eps$ and obtain a new solution  $\pi_1^A$ and $\pi_2^A$ that (i) is~$1/\eps$-simple, and (ii) of length bounded by~$\opt+\Oh(\eps\opt)$.

We claim that this solution is also fine-portal-respecting. If~$(\pi_1^A,\pi_2^A)$ uses~$u$ portals on a boundary~$F$ from~$\grid(F,g)$, then we set~$k=r^2/g$ (recall~$r=1/\eps$) in Case~\ref{def:fineSparsity} of Definition~\ref{def:fine} and obtain exactly the same grid for our fine portals. It remains to observe~$k\ge u$ as required by the definition. 
Since~$u$ is not larger than the number~$m$ of times~$F$ is crossed, Definition~\ref{def:simple} implies~$g\le r^2/m \le r^2/u$ and therefore~$k=r^2/g \ge r^2\cdot u/r^2 = u$. 
%
%

Finally, recall that our algorithm computes the cost of an optimum fine-portal-respecting pair of tours. By the discussion above this cost is bounded by~$\wt(\pi_1^A)+\wt(\pi_2^A) = \opt + \Oh(\eps\opt)$ which concludes the proof of Theorem~\ref{thm:eptas-noncrossingtsp}.

It is easy to see that the algorithm can be derandomized by trying all
possibilities for $\mathbf{a}$ (but the cost increases by a polynomial in $n$ factor).



\section{Red-Blue-Green Separation}
\label{sec:rgbseparation}
\todo[inline]{KF: The problem has been already defined; thus we should remove this paragraph or rephrase it in the style \enquote{Recall that \dots}}
We present an EPTAS for the Euclidean \RBGS{} problem. In this problem, we
are given three sets of different points $R,G,B \subseteq \mathbb{R}^2$\note{KF: $\mathbb{R}$ is correct as only in the perturbed instances we have integer coordinates. The input is in reals.}. We say
that points in $R$ are ``red'', points in $G$ are ``green'' and points in
$B$ are ``blue''. We desire two simple noncrossing polygons $P_1$ and $P_2$
of smallest total length such that all classes of points are separated by
them, that is, for any two points~$p_i$ and~$p_j$ of different color, 
any path
from $p_i$ to $p_j$ must cross $P_1$ or $P_2$.\todo{KF: The sets are non-empty, right? Otherwise, we have as a special case the Red-blue-separation problem with two(!) cycles (one color can reside in two different cycles).}
\todo{KF: Does the connectivity requirement automatically follow? Independently of the answer, we should mention the connectivity requirement!}

\importantquestion[inline]{KF: Shouldn't this part be moved to the introduction? Is it even not already there?}
The problem is NP-hard\todo{KF: citation missing!}. To the best of our knowledge, only a special case\todo{KF: If we want it really to be a special case of our problem, then we formally have to allow that the colored sets can be empty and consequently require either that all points of the same color are in the same region or that the number of cycles is equal to the number of non-empty sets minus $1$.} 
of two\resolvedtodoNotDoneYet{KF: What "two"? two-colored}{Yes.}
red-blue separation has been previously considered. For the red-blue separation problem, Mata
and Mitchell~\cite{mata-michell} gave an $\Oh(\log n)$-approximation algorithm. Later, Arora
and Chang~\cite{red-blue} presented a PTAS for the red-blue separation problem. 

Our contribution is twofold. First, we generalize the result of Arora and
Chan~\cite{red-blue} to three colors. Second, we give an EPTAS 
with~$2^{\Oh(1/\eps)}n \polylog(n)$ running time.

\thmPTASrbg*

\todo{KF: Rephrase that sentence as in my opinion it sounds strange to say Now, \dots}
Now, we discuss the innovations that enabled us to generalize the result of
Arora and Chang~\cite{red-blue}. Their algorithm is based on the quadtree
framework that was introduced\todo{KF: added citations?} to give a PTAS for Euclidean TSP. 
We replace it
with a more efficient \emph{sparsity-sensitive patching} already introduced in
Section~\ref{sec:noncrossing-tsp}. Apart from minor modifications, this framework and the
proof of the existence of an EPTAS is almost exactly the same as in
Section~\ref{sec:noncrossing-tsp}. The main innovation is the introduction
of a new patching procedure that allows us to extend the result of Arora and
Chang~\cite{red-blue} to three colors. In the next subsection, we analyze
the red-blue-separation problem with the sparsity-sensitive patching framework. Next \todo{KF: next proper section?} we
will show our patching procedure for three colors.

\subsection{EPTAS for Red-Blue-Green Separation}\todo{KF: Since we have only one subsection, I think we should either split it in two subsections or merge it with its parent section.}

As a first step, we guess a topology of the optimum solution. We need to guess
whether one of the polygons is fully contained within the other or whether they are
separated\todo{KF: \enquote{separated} seems to me to be the wrong word. \enquote{disjoint} would be better, but one could misunderstand that we mean that only the border are disjoint. \enquote{interiors disjoint} is better but could suggest that their borders might touch.}. Moreover we need to guess a color of the points contained within each
polygon. Without loss of generality, let us assume that $P_1 \setminus P_2$ contains red
points and $P_2 \setminus P_1$ contains green points. Our algorithm consists of
several steps that replicate the steps from Section~\ref{sec:noncrossing-tsp}
(and mostly \todo{KF: In what sense mostly? That most of our steps actually come from that paper?} from~\cite{red-blue}).

\paragraph*{Perturbation} \todo{KF: We repeat the same text from Section 3! We should somehow refer to that section and only highlight the differences.} This step is analogous to the perturbation step introduced
in~\cite{red-blue} and the goal is to guarantee three properties: each point has
integral coordinate, the maximum internode distance is $\Oh(n/\eps)$ and
the distance between each pair of points is at least~$8$\todo{KF: Do we really need these three properties?}. 
The perturbation procedure
described in~\cite{red-blue} places a grid of \emph{small} granularity on the
instance and moves each point to the closest grid point. To separate points of
the same color we move red points to the closest north-east grid point $g_r$
with $x(g_r) \equiv 1 \; (\text{mod } 4)$, blue to north-west grid point $g_b$
with $x(g_b) \equiv 2\;(\text{mod } 4)$ and green points to the closest south
east grid point $g_g$ with $x(g_g) \equiv 3 \; (\text{mod }4)$.  Later, we argue
that the optimum solution to this perturbed instance is only negligibly more
costly than the solution to original instance.

Arora and Chang~\cite{red-blue} define the \emph{red bounding box} as a \importantquestion{KF: the smallest axis-aligned?} square
of length $L_r$ that contains all the red points such that the red bounding box
contains any polygon that separates red and blue points. They need to consider
it to have a lower bound on optimum, because it could be the case that blue
points are spread very far from red points. Because we assume that $P_1$
contains all the red points and $P_2$ contains all the green points we need to
also define \emph{green bounding box} of side-length $L_g$ that contains all
green points. Observe, that if green-bounding-box does not intersect
red-bounding-box then we can assume that we are given two independent dent
instances of instance with two colors (and use $L_r$ and $L_g$ as a lowerbounds
on $\opt$ in these instances). On the other hand, if red and green bounding boxes
intersect, we have a lower-bound $\opt > \min\{L_r,L_g\}$ and we
can use $L = \min\{L_r,L_g\}$ to lower-bound the $\opt$ and determine the granularity
(see~\cite{red-blue} for details). 

Next, in the step 2 we construct a randomly shifted quadtree. This step is
identical to the quadtree construction in Section~\ref{sec:noncrossing-tsp}
(note that no input point lies on a boundary of a quadtree after this procedure). We
draw a random $1 \le a_1,a_2 \le L$ \todo{KF: a random what? shifting vector?} and create a dissection $D(a_1,a_2)$.
\paragraph*{Structure theorem}
Our structure theorem here is completely analogous to the structure theorem introduced in Section~\ref{sec:alg-2noncross}. \todo{KF: Add some words that now instead of tour pairs we have polygon pairs?}
\importanttodo[inline]{KF: Rephrase definition to make it consistent with new definiton for TSP.}
\begin{definition}[$r$-simple pair of polygons]
Let $\mathbf{a}$ be a random shift vector. A pair of polygons $(P_1,P_2)$ is \emph{$r$-simple} if it is noncrossing and, for any boundary~$F$ of the dissection $D(\mathbf{a})$ crossed by the pair $(\bd P_1, \bd P_2)$, 
	\begin{enumerate}[label=(\alph*)]
		\item it crosses $F$ entirely through one or two portals belonging to~$\grid(F, \lfloor r\log L \rfloor)$, or 
        \item it crosses $F$ entirely through portals belonging to~$\grid(F, g)$, for~$g\le r^2/m$ where~$m$ is the number of times~$F$ is crossed.
	\end{enumerate}
	Moreover, for any portal~$p$ on a grid line $\ell$, the pair~$(\bd P_1, \bd P_2)$ crosses~$\ell$ at most ten times through~$p$. 
\end{definition}
\begin{theorem}[Structure Theorem]
    \label{thm:str-polygon}
    Let $\mathbf{a}$ be a random shift vector, 
	 and let $(P_1,P_2)$ be a pair of simple noncrossing polygons
    consisting only of line segments whose endpoints lie in an infinitesimal neighborhood of~$\mathbb{Z}^2$. 
    For any large enough integer $r$, there is an $r$-simple pair
    $(\hat{P}_1,\hat{P}_2)$ of noncrossing simple polygons 
    whose borders that differ from~$(\bd P_1, \bd P_2)$ only in an infinitesimal neighborhood around the grid lines
    and that satisfies 
    $$\mathbb{E}_{\mathbf{a}}[\wt(\bd \hat{P}_1) + \wt(\bd \hat{P}_2)-\wt(\bd P_1) - \wt(\bd P_2)] = \Oh((\wt(\bd P_1) +  \wt(\bd P_2) )/ r).$$

    Moreover if $(P_1,P_2)$ are separating points $R,G,B$ then so are $(\hat{P}_1,\hat{P}_2)$.
\end{theorem}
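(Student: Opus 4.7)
My plan is to mirror the proof of Theorem~\ref{thm:struct} almost verbatim, viewing the polygon borders $(\partial P_1, \partial P_2)$ as the pair of noncrossing closed curves to which sparsity-sensitive patching is applied. As before, I would process the boundaries of $D(\mathbf{a})$ one by one in non-increasing order of length, maintaining the invariant that the crossings introduced by previous patching steps lie infinitesimally close to the open endpoint of the current boundary and therefore do not perturb the proximity-based partition $N \cup G$ of the ``original'' crossings on the current boundary $F$. For each $F$, I would run exactly the same classification (distinguishing the $|G|\le 1$ case which uses $\lfloor r\log L\rfloor$ Arora-style portals from the $|G|>1$ case which uses $\lfloor r^2/(10(|G|+1))\rfloor$ sparsity-sensitive portals), form the same segment family $\mathcal{S}$, and invoke Lemma~\ref{lem:patch2} (the noncrossing patching lemma) on each segment of $\mathcal{S}$. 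The output is a new pair of simple closed curves, which by Lemma~\ref{lem:patch2} cross $F$ through at most $|G|+1$ portals with at most ten crossings per portal, matching the $r$-simple definition for polygon pairs.

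For the expected cost, nothing changes: the bound $\wt(\mathcal{S}) = \Oh(\sum_{x\in N}\pro(x) + (L/(2^i r))^2 \sum_{x\in G}1/\pro(x))$ is identical, the replacement of $\pro$ by the relaxed proximity $\pro'$ on a grid line $\ell$ works the same way, and Lemma~\ref{lem:levelprob} gives the same per-crossing expected charge $\Oh(1/r)$. Summing over all grid lines and using Lemma~\ref{lem:crossingsvslength} applied to the union of line segments forming $\partial P_1$ and $\partial P_2$ yields the desired bound $\Oh((\wt(\partial P_1)+\wt(\partial P_2))/r)$. The ``locality'' clause of Lemma~\ref{lem:patch2} guarantees that $\partial \hat{P}_c$ differs from $\partial P_c$ only in an infinitesimal neighborhood of the grid lines.

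The genuinely new piece, which is the only part not covered by Theorem~\ref{thm:struct}, is the preservation of the \textbf{R}/\textbf{G}/\textbf{B} separation. Here I would argue as follows. After the perturbation step all terminals lie in $\mathbb{Z}^2$, while grid lines are located at half-integer coordinates; hence every terminal is at Euclidean distance at least $\tfrac{1}{2}$ from every grid line. Because every application of Lemma~\ref{lem:patch2} modifies the two curves only within an infinitesimal neighborhood of a subsegment of a boundary $F$, none of the modifications reaches any terminal. Consequently, for every terminal $p$ and every color $c\in\{1,2\}$, there is a continuous deformation from $\partial P_c$ to $\partial \hat{P}_c$ inside an infinitesimal tube around the grid lines that at no moment passes through $p$. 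A standard winding-number (Jordan-curve) argument then shows that $p$ has the same inside/outside status with respect to $\hat{P}_c$ as it has with respect to $P_c$. Since every red, green, or blue terminal retains its containment pattern in $(P_1,P_2)$, the pair $(\hat{P}_1,\hat{P}_2)$ still separates the three color classes.

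The main technical obstacle I anticipate is purely topological: ensuring that after each patch the two curves remain \emph{simple} closed curves and therefore continue to bound well-defined polygonal regions so that the inside/outside argument applies. This is already encompassed in Lemma~\ref{lem:patch2}, which explicitly yields simple noncrossing closed output curves, but one should confirm that the inductive invariant survives all patching steps: because every step restricts its modifications to an infinitesimal strip around one boundary and leaves the remaining curve structure topologically intact, simplicity is preserved throughout. Together with the locality and cost bounds inherited from the tour-case proof, this establishes the theorem.
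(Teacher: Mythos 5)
Your approach is exactly what the paper intends: the paper's entire treatment of Theorem~\ref{thm:str-polygon} is the one-line remark that it ``follows the framework used in Theorem~\ref{thm:struct},'' and you reproduce that transfer faithfully, including the $N/G$ classification, Lemma~\ref{lem:patch2}, and the expected-cost accounting via relaxed proximity and Lemma~\ref{lem:levelprob}. You also supply an argument for the ``Moreover'' clause about preserving the $R/G/B$ separation that the paper leaves implicit, which is a welcome addition.

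One small phrasing caveat on that extra clause. The claim ``there is a continuous deformation inside an infinitesimal tube around the grid lines'' is not literally justified by Lemma~\ref{lem:patch2} as stated, which only asserts that the input and output curves \emph{coincide} outside an infinitesimal neighborhood of one segment; it does not construct a homotopy, and the grid-line tube is not simply connected, which could trip up a careless reader. The cleaner way to make the same point is step-by-step and purely combinatorial/homological: each single patching step replaces a curve by another simple closed curve that agrees with it outside the infinitesimal neighborhood $U$ of one boundary subsegment; since terminals lie in $\mathbb{Z}^2$ and grid lines at half-integer coordinates, every terminal is at distance at least $\tfrac12$ from $U$; hence for any such terminal $p$, a ray from $p$ to infinity chosen to avoid $U$ crosses the old and new curve the same number of times, so the inside/outside status of $p$ is unchanged. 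Iterating this over all boundary-patching steps preserves the containment pattern and hence the separation. This is the same idea you have, just stated in a form that does not presuppose a global homotopy.
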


The proof of Theorem~\ref{thm:str-polygon} follows the framework used in the
Theorem~\ref{thm:struct}.

\paragraph*{Dynamic Programming} Now we are going to briefly sketch the
dynamic programming assuming Theorem~\ref{thm:str-polygon}. The dynamic
programming will enable us to exactly \todo{KF: In what sense "exactly"?} find two $\Oh(1/\eps)$-simple and noncrossing
polygons in the given quadtree. This fact combined with the properties
of Theorem~\ref{thm:str-polygon} will guarantee that these polygons are
$(1+\eps)$ approximation.

At a high level, the states of the dynamic programming are very similar to
ones defined in Section~\ref{sec:alg-2noncross}. The subproblems of the dynamic
programming are defined as follows:

\defproblem{Separating $s$-simple polygons}
{A nonempty cell $C$ in the shifted quadtree,  a fine portal set $B \subseteq
\partial C$, a matrix $V_B$, and (ii) two perfect noncrossing matchings $M_1,M_2$ on $B_1=\{b \in \cpy(B,V_B): \col{b}=1 \}$ and $B_2=\{ b \in \cpy(B,V_B): \col{b}=2 \}$
respectively (iii) a $3$-coloring which indicates whether each region
    defined by matchings $M_1$ and $M_2$ is inside $P_1 \setminus P_2$, inside
$P_2\setminus P_1$ or neither}
{Find two $r$-simple path collections
    $\mathcal{P}_{B_1,M_1},\mathcal{P}_{B_2,M_2}$ of minimum total length that
    satisfies the following properties for all $i \in \{1,2\}$
\vspace{-0.5em}
\begin{itemize}
    \setlength\itemsep{0.0em}
    \item All points of class $R$ (respectively $G,B$) in cell $C$ are inside the region colored $R$ (respectively $G,B$),
    \item Paths of $\mathcal{P}_{B_i,M_i}$ are entirely contained in cell $C$
    \item $\mathcal{P}_{B_i,M_i}$ realizes the matching $M_i$ on $B_i$.
\end{itemize}
}

The size of the lookup table is the number of subproblems. Number of possible colourings (iii) for a fixed matchings and
portals is $2^{\Oh(1/\eps)}$. We showed in Section~\ref{sec:alg-2noncross} that the number of fine sets $B$, the corresponding matrices $V_B$ and non-crossing matchings $M_1$ and $M_2$ ((i) and (ii)) is $2^{\Oh(1/\eps)} n \polylog(n)$. Thus, the total number of states for the dynamic programming here is also $2^{\Oh(1/\eps)} n \polylog(n)$.


Each subproblem is solved by combining the previous one in the usual bottom-up
manner. For a fixed state in the quadtree cell we iterate through all the
solutions in 4 children in the quadtree. For each tuple we check if
selected portals match, if corresponding matchings are connected and whether
coloring across cell is consistent. If so, we remember the one with the smallest
total length. 
\importanttodo{KF: Analogous to my comment in TSP section, each 
subproblem should, for each color, also store the information whether it fully 
contains the polygon of that color.}

It remains to describe the base case (which is analogous to the dynamic
programming procedure described in Section~\ref{sec:alg-2noncross}). For a quadtree
with a single node. Recall that all the input vertices are infinitesimally close
to the corner of the cell. If the cell is empty, the solution is valid if the
pairs of connected portals can be connected with a noncrossing matching. If the
cell contains single point, we need to check that the region selected in the
state matches the color of the point. Otherwise, our polytopes may need to
``bend''\importantquestion{KF: And how do we compute the correct bend in polynomial time?} 
on the points in order to separate the region. Observe that if the
points of different colour have distance 
\importantquestion{KF: How can this happen? By perturbation (and alone, our assumption of 
the base case), we have at most one terminal in the cell!} 
$0$, then all we need to do is make
polygon go through it.

Now we argue that this procedure returns a $(1+\eps)$ approximation to the
original problem. During a perturbation step we shifted the points of the
different colors to the different corners of the grid of granularity $\Oh(\eps
\opt/n)$. Consider a point $p$ inside a grid cell $C$. Let $c \in C$ be
the corner of the grid towards which the point $p$ is moved to by a snapping
procedure.  We add to optimum solution to the original instance the segments
$cp$ and $pc$. This ensures that point $p$ is inside a proper polygon.  Note,
that this may create an additional crossings. To avoid that, we use
Lemma~\ref{lem:patch2} on segment $cp$. This guarantees that there exist two
noncrossing polygons that does cross segment $p c$ only on point $c$. This
operation increases the lengths of $P_1$ and $P_2$ by at most $\Oh(|p c|)$.
In total this is bounded by the side-length of the grid which is $\Oh(\eps
\opt/n)$.  There are at most $\Oh(n)$ cells on which we do this
operation. Hence after perturbation the length increases by at most $\Oh(\eps
\opt)$.

The analysis of the approximation factor due to the dynamic programming follows
analogously to the analysis in Section~\ref{sec:noncrossing-tsp}.

\section{Approximation Schemes in Planar Graphs}
\label{sec:ptas-graphs}

In this section, we complement our results\todo{KF: which ones? In what sense do we complement them? By giving PTASes or by giving hardness proofs? The second part of the sentence suggests the former.}, and demonstrate that our new patching lemma (Lemma~\ref{lem:patch2})
gives approximation algorithms in planar graphs.

\thmPTAStsp*

We use a framework proposed by Grigni et al.~\cite{papa95} that enabled them to
give a PTAS for TSP in planar graphs. Grigni et al.~\cite{papa95} proposed the
following binary decomposition tree $\Tt$ of planar graph $G$. The decomposition
is parametrized with $f,d \in \nat$ that intuitively control approximation and
depth of decomposition. Each node of $\Tt$ contains a subgraph of $G$. At the
root of $\mathcal{T}$ is $G$ itself. For each node $H \in \mathcal{T}$,  there
exist $\Oh(f)$ vertex disjoint paths in $H$ (we call these paths \emph{portal
paths}). Removal of this paths partitions $H$ into $H_1$ and $H_2$, such that
$H_1,H_2 \subseteq H$, and $|H_1|,|H_2| <\frac{5}{6} |H|$. These subgraphs
$H_1,H_2$ are then put recursively as two children of $H$ in the decomposition
$\Tt$.  The construction continues until at the leaves of $\mathcal{T}$ are
subgraphs of size $s = \Oh(f^2)$. The depth of tree $\mathcal{T}$ is $d$ and the
total length of all \emph{portal paths} in $\Tt$ is $\Oh(\frac{d}{f} |G|)$ (see
Fig.~\ref{fig:portal-paths} for one step of decomposition).

\begin{figure}[ht!]
    \centering
    \includegraphics[width=0.3\textwidth]{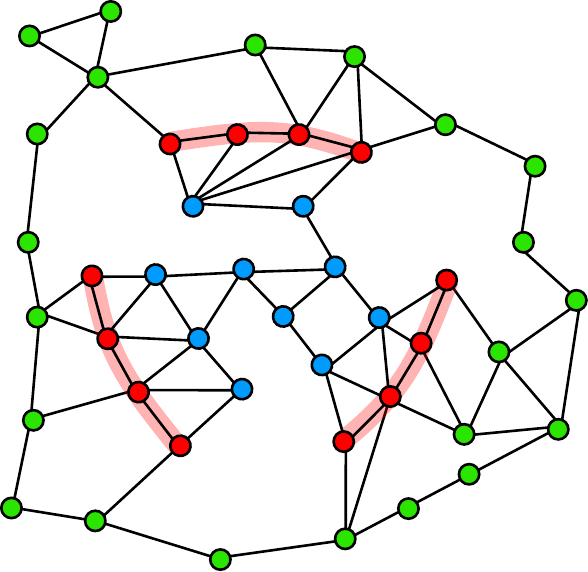}
    \caption{Figure depicts the graph $H$ as a node in a decomposition of Grigni et
    al.~\cite{papa95}. We have three portal paths (drawn in red) for a
    balanced planar separator that partitions graph $H$ into $H_1$ (drawn in blue) and
graph $H_2$. The graph is recursively decomposed until subgraph is of size $s$.
The patching lemma is applied on portal-paths to guarantee that solution is
crossing $H_1$ and $H_2$ few times.}
    \label{fig:portal-paths}
\end{figure}

\begin{lemma}[cf,~\cite{papa95}]
    \label{lem:decomposition}
    Given a planar graph $G$, in polynomial time one can compute a decomposition
    $\mathcal{T}$ with the following properties parameters $f :=
    \Theta((\log n)/\eps)$ and $d := \Theta(\log n)$. 
    \todo[inline]{KF: Write the time more accurately, say polynomial in what parameter(s)? That~$\eps$ treated as a constant?}
\end{lemma}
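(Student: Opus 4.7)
The plan is to invoke the recursive decomposition of Grigni et al.~\cite{papa95} essentially verbatim, instantiated with the parameters $f=\Theta((\log n)/\eps)$ and $d=\Theta(\log n)$. The workhorse is a weighted planar separator lemma which, given an edge-weighted planar graph $H$, produces in polynomial time a set of $\Oh(f)$ vertex-disjoint paths (the \emph{portal paths}) of total weight $\Oh(|H|/f)$ whose removal splits $H$ into two subgraphs $H_1,H_2$ with $|H_1|,|H_2|<\frac{5}{6}|H|$.

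First, I would apply this separator to $G$ to create the root of $\mathcal{T}$ together with its portal paths and two children $H_1,H_2$, and then recurse in the obvious way, equipping every internal node with its own portal paths. The recursion stops as soon as the current subgraph has at most $s=\Oh(f^2)$ vertices, which becomes a leaf.

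For the depth, since every non-leaf step shrinks the subgraph by a factor of at least $5/6$, the height of $\mathcal{T}$ is at most $\log_{6/5}(n/s)=\Oh(\log n)$, matching the choice of $d$. For the total length, I would use that the subgraphs lying at any fixed level of $\mathcal{T}$ are vertex-disjoint and have total size at most $|G|$; hence by the separator lemma the portal paths created at that level have total weight $\Oh(|G|/f)$. Summing over all $d$ levels yields the advertised bound $\Oh((d/f)|G|)$.

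The main obstacle is producing the short-path weighted separator itself. Here one uses the classical BFS-layer construction: perform a BFS from an arbitrary root of $H$, observe that among any $f$ consecutive BFS layers at least one has weight at most $|H|/f$ by an averaging argument, and combine two such low-weight layers into a balanced separator consisting of $\Oh(f)$ vertex-disjoint paths (the portions of those layers). This is precisely the separator used in~\cite{papa95}, and it yields portal paths of the required length. All steps (BFS, selection of low-weight layers, and the recursion of depth $\Oh(\log n)$) run in polynomial time, giving the overall polynomial running time.
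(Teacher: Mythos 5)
The paper cites Grigni, Koutsoupias, and Papadimitriou~\cite{papa95} for this lemma and gives no proof of its own, so there is no internal proof to compare against; you are reconstructing the cited construction. Your recursive skeleton and your accounting are correct and match the role the lemma plays in the paper: the $5/6$-balance gives height $\Oh(\log n)$, vertex-disjointness of the subgraphs at a fixed level gives a per-level portal-path cost of $\Oh(|G|/f)$, and summing over $d$ levels gives the advertised $\Oh((d/f)|G|)$ total length.

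The gap is in the short-path separator, which you invoke but do not actually justify. You claim that among any $f$ consecutive BFS layers one has weight at most $|H|/f$, and that two such low-weight layers can be combined into a balanced separator consisting of $\Oh(f)$ vertex-disjoint paths, ``the portions of those layers.'' Neither half of this is right as stated. A BFS layer is merely the set of vertices at a fixed distance from the root and in general does not induce a union of few paths: in a wheel the first layer is a single long cycle, and in a grid each layer is an independent set, so ``portions of those layers'' need not be paths at all, let alone $\Oh(f)$ of them. Moreover two sparse layers need not be balanced; the annulus between them can contain almost all of $H$, so removing them does not by itself split $H$ into pieces of size less than $\frac{5}{6}|H|$. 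What~\cite{papa95} (in the Lipton--Tarjan spirit) actually does is use the sparse layers only to delimit a middle annulus of bounded BFS depth and then separate that annulus by a fundamental cycle of the BFS tree --- two root-to-vertex tree paths plus one non-tree edge --- whose length is controlled by the annulus depth. It is exactly this fundamental-cycle / tree-path step that turns sparse BFS layers into a separator that \emph{is} a union of a few short paths; your proposal omits it, so the weighted short-path separator on which everything else rests is left unproved.
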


We build an approximate solution to \BTSP with a dynamic
programming in the similar fashion as~\cite{papa95}. For a sake of presentation
let us focus For each graph $H \in \mathcal{T}$ and information about
connectivity of cycles we store a partial
solution in the dynamic programming table. More precisely, each entry of dynamic programming
contains a following states: (i) a current node $H$ from $\Tt$, (ii) set $\Pp$ of at most
$\Oh(f)$ portal paths in $H$, (iii) set $\Cc(H,\Pp)$ of configurations on
portal paths in node $H$ that intuitively stores the
information about the connectivity of the solution and how it looks ``outside''
$H$. More precisely, in $\Cc(H,\Pp)$ we store a set of $\Oh(|\Pp|)$ ``noncrossing pairings''
between portal paths. Each pairing has either blue or red color and is
connecting exactly two portal paths. Additionally, a single portal path can be a part of at
most $\Oh(1)$ pairings. Moreover, pairing can be represented as the planar
graph (i.e., pairings are noncrossing). 

\begin{remark}
The description of the subproblem is exactly the same as in
Section~\ref{sec:noncrossing-tsp}, however we cannot artificially increase the
number of portals (hence each portal would need to be a part of $\Oh(1)$ number
of portals).
\end{remark}

To solve the problem, we build our dynamic table bottom up. We start with the
description of base case.

\paragraph*{Base Case}

Let $H$ be a leaf of $\mathcal{T}$. Based on the planar separator theorem, we give an $(1+\eps)$-approximation
algorithm in $f(\eps) \cdot 2^{\Oh(\sqrt{|H|})}$ time to solve base case. On the
input, we are given a set $\Pp$ of portal paths and a information
about the connectivity~$\Cc(H,\Pp)$ between them. First, we compute a balanced
separator of $H$ of size $\Oh(\sqrt{H})$~\cite{planar-sep}. For each vertex on
the separator, we guess how many times the cycle of each color is crossing it. By
Lemma~\ref{lem:patch2}, we have a guarantee that there exists a feasible solution that
crosses each vertex at most $\Oh(1)$ times. After that, we recurs into
subgraphs determined by the planar separators until the size of the graphs is
$\Oh(1/\eps)$ (where we use an exhaustive algorithm in $f(\eps)$ time). The
running time is:

\begin{displaymath}
    T(|H|) = 2^{\Oh(\sqrt{|H|})} \cdot T(2|H|/3) \le f(\eps) 2^{\Oh(\sqrt{|H|})}
    .
\end{displaymath}

We need to analyse the approximation factor of our algorithm. Observe, that on a
single call
our patching cost is $\Oh(\sqrt{H})$, because we use Lemma~\ref{lem:patch2}
to bound number of crossings of vertices on the separator. Therefore, the
approximation ratio is $E(|H|) = \Oh(\sqrt{|H|}) + 2T(2|H|/3)$ and $E(1/\eps) =
0$. This recursive equation results in the cost $E(|H|) = \Oh(\eps|H|)$, which concludes the
correctness of the base case.

\paragraph*{Combination of Subproblems}

We combine two solutions from the lower levels of $\Tt$ by checking if the
states are \emph{compatible} and selecting the solution of the lowest cost. We
define the states to be compatible as in
Section~\ref{sec:noncrossing-tsp}. For the running time, it is enough to compute
the number of states in the dynamic programming table. The main contribution comes from the number of
possibilities for $\Cc(H,\Pp)$. Note, that the number of noncrossing pairings is
$2^{\Oh(|\Pp|)} \le n^{\Oh(1/\eps)}$ (because each pairing can be represented
as \todo{KF: a?} noncrossing matching on $\Oh(|\Pp|)$ vertices). This matches the runtime \todo{KF: use consistently the term running time everywhere} 
of the computation of the base case.

\paragraph*{Approximation Ratio} 

It remains to bound the approximation cost. The approximation error comes
with the assumption that our solution is crossing each portal path $\Oh(1)$
number of times and with \todo{KF: I don't understand. Is this also an assumption?} 
solving each base case. For the base case error observe that
the total size of each leaves in $\Tt$ is $(1+\Oh(1/\eps)) n$ and in the base
case we incur $\Oh(\eps)$ error per vertex.  Therefore, we are left to
analyse the approximation incurred by invoking patching lemma. Observe, that
Lemma~\ref{lem:patch2} given an optimum tour $\pi$, guarantees that there exists
a tour $\pi'$ that crosses each portal path $\Oh(1)$ and the cost of $\pi'$ is only
constant times longer than the total length of all portal paths. The
decomposition lemma (see Lemma~\ref{lem:decomposition}) guarantees that the
total length of portal paths is $\Oh(\eps n)$. Because optimum is at least
$n/2$, we can select constant in front of $\eps$ to guarantee $(1+\eps)$
approximation. This concludes the correctness of our algorithm.

\begin{remark}
    Essentially, the same arguments can be used to give a polynomial time
    approximation scheme for \BST\resolvedtodoNotDoneYet{KF: in planar graphs, right?}{Yes, and add this info everywhere where it is missing.}. \todo{KF: Is the problem already defined, or is it defined in the last section only? Then we should give some reference?} 
    The main difference is that we need\todo{KF: can use?} to use
    a significantly simpler patching Lemma\todo{KF: Lemma with small letter, check also other such occurrences.} for two colored trees
    of Bereg et al.~\cite{isaac15} that allows additional Steiner vertices in portals.
\end{remark}

\thmPTASbst*

\section{NP-hardness of Bicolored Noncrossing Spanning Trees}
\label{sec:bst-np}

In this section we consider the following problem. \todo{KF: We already considered it before in a remark concerning a PTAS. We should say that we consider it here with respect to a hardness result.}

\BST{}

INPUT: a plane graph $G$ and a bipartition $(B,R)$ of $V(G)$, and an integer $k$.

PROBLEM: Are there two noncrossing trees $\Tt_B$ and $\Tt_R$ such that $\Tt_B$
spans $B$, $\Tt_R$ spans $R$ and $||\Tt_B|| + ||\Tt_R|| \le k$.

\thmNPbst*

\begin{proof}

The problem is in NP, since given the trees as a witness, it is easy to check
that they form a solution. Let us show that it is NP-hard.

We reduce from {\sc (Unweighted) Steiner Tree} in planar
graphs~\cite{steiner-tree-np-hard}. Consider an instance of {\sc Steiner Tree}, that is a graph $G$, a set $P$ of vertices called \emph{terminals}, and an integer $k$. We can assume that $G$ is connected.

Consider a planar embedding of $G$ in the Euclidean plane. Let $G'$ be the graph obtained from $G$ by subdividing every edge once. One can easily get an embedding of $G'$ by adding the new vertices anywhere on their edges (besides their endpoints).

Let $f$ be a face of $G$ of degree $d$, and $C_f$ be the closed curve bounding
$f$. Let $C'_f$ be a cycle obtained from $C_f$ as follows: 
\begin{enumerate}[nosep=0pt, label=(\roman*)]
\item Choose a point in $C_f$ that is not a vertex. Follow the curve $C_f$, and every time you encounter a vertex, add a new copy of that vertex. Let $v_1$, \dots, $v_d$ be the vertices created this way, in this order.
\item For $i \in \{1,\ldots,d-1\}$, let $v_i v_{i+1}$ be an edge. Also let $v_1 v_d$ be an edge.
\item For $i \in \{1,\ldots,d\}$, add one edge connecting $v_i$ to the vertex of $C_f$ it was created as the copy of.
\end{enumerate}
Note that for any face $f$, $C'_f$ can be embedded inside $f$. 

Let $G''$ be the graph $G'$ such that, for every face $f$ of $G'$, the cycle $C'_f$ is added to the graph. Note that $G''$ is still a planar graph.  Note that every vertex of $G'$ of degree $d$ has exactly $d$ copies in $G''$.
Let $B=P$ and $R = V(G'')\setminus P$. Let $\ell = 2k + |G''| - |P| - 1$. This concludes the description of our reduction.

\paragraph*{Correctness}  Let us show that $(G,P,k)$ is a
yes-instance to {\sc Steiner Tree} if and only if $(G'',B,R,\ell)$ is a yes-instance to \BST{}.

Suppose that $(G,P,k)$ is a yes-instance to {\sc Steiner Tree}. Let $\Tt$ be a Steiner tree of size $k$ in $(G,P)$. 
Let $\Tt'$ be $\Tt$ with every edge subdivided once in $G'$. Note that $\Tt'$ is a tree using only vertices of $G'$ spanning $B = P$ in $G''$. 

Consider the dual $G^\star$ of $G$, and a spanning tree $\Ss$ of $G^\star$ that does not
intersect $\Tt$. Let us build a tree $\Ss'$ in $G''$ as follows:
\begin{enumerate}[nosep=0pt, label=(\roman*)]
\item Let $V(\Ss') = R$
\item For every face $f$ of $G'$ of degree $d$, for $i \in \{1,\ldots,d-1\}$, add
    $v_iv_{i+1}$ to $\Ss'$.
\item For every edge $e$ of $\Ss$, let $u_e$ be the vertex of $G'$ subdividing
    the dual edge of $e$ in $G$, and let $u_e'$ and $u_e''$ be the two copies of
    $u_e$ in $G''$. Add $u_eu_e'$ and $u_eu_e''$ to $\Ss'$.
\item For other vertex of in $V(G') \setminus P$, add one edge connecting it to one of its copies.
\end{enumerate}

Note that $\Ss'$ is well defined since every edge that we added to $\Ss'$ has both of its endpoints in $R$ (either it is not a vertex of $G$ or it is in $V(G) \setminus P$).

\begin{claim}
$\Ss'$ is a tree.
\end{claim}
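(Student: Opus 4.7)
The plan is to build $\Ss'$ in three stages—corresponding to steps (ii), (iii), and (iv) of its definition—and to verify after each stage that the current subgraph is a tree; at the end I will check that its vertex set equals $R$, giving the desired tree on $R$. First, after step (ii) the graph is the vertex-disjoint union of face-paths $v_1^f v_2^f \cdots v_{d_f}^f$, one per face $f$ of $G'$: since the copies $v_i^f$ are brand-new vertices created for face $f$ only, different face-paths share no vertices, so we have a forest with one tree per face.

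Next, for step (iii) the key observation is that for each $e\in\Ss$ the subdivision vertex $u_e$ of $G'$ lies on the boundaries of exactly the two faces of $G'$ incident to $e$, so its two copies $u_e',u_e''$ belong to the two corresponding face-paths. Thus step (iii) attaches the new vertex $u_e$, via two edges, to two distinct components from stage (ii). To see tree-ness I would contract each face-path to a single meta-vertex and suppress each degree-two $u_e$; the resulting meta-graph is exactly $\Ss$ on the vertex set $F(G)=F(G')$, which is a spanning tree of $G^\star$. Un-contracting the face-paths and the $u_e$-subdivisions—all of which are themselves trees—preserves tree-ness. A quick edge count, $\sum_f (d_f-1)+2|\Ss|$ edges on $\sum_f d_f+|\Ss|$ vertices, confirms the expected $|E|=|V|-1$ relation.

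Finally, step (iv) treats every vertex of $V(G')\setminus P$ not yet present—namely the original non-terminals $V(G)\setminus P$ and the subdivision vertices $u_e$ for $e\notin\Ss$. For each such $w$, the unique edge added connects $w$ to one of its copies, which already lies in the tree built in stages (ii) and (iii); thus $w$ is attached as a new leaf and the tree property is preserved. A simple inventory confirms $V(\Ss')=R$: every copy appears by (ii); every subdivision vertex by (iii) or (iv); every remaining non-terminal of $G$ by (iv). The main obstacle is to make the contraction argument of stage (iii) rigorous: one must carefully justify that un-contracting the face-paths and $u_e$-subdivisions—while reattaching the external edges to the correct internal vertices of each face-path—still yields a tree, which reduces to the elementary fact that replacing a vertex of a tree by any tree preserves the tree property.
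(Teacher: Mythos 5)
Your proof is correct and rests on the same key insight as the paper's — that contracting each face-path and suppressing the degree-two $u_e$'s recovers $\Ss$, a tree in $G^\star$. The difference is organisational: the paper splits the argument into a connectedness half (the face-paths are linked because $\Ss$ spans $G^\star$) and an acyclicity half (a cycle in $\Ss'$ would, after contraction, yield a subdivision of a cycle in $\Ss$, contradiction), whereas you build $\Ss'$ in stages and certify tree-ness after each stage via the "replace a vertex of a tree by a tree" operation plus an edge count. Your staged approach is slightly more constructive and delivers connectedness and acyclicity in one stroke at stage (iii), at the modest cost of having to spell out that the un-contraction operation is tree-preserving — which you correctly identify as the crux and correctly reduce to the elementary substitution lemma. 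One small phrasing issue: "attaches $u_e$ to two distinct components from stage (ii)" is only literally true for the first $u_e$ processed (later ones may already be joined), but your contraction argument does not actually rely on that claim, so the proof stands.
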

\begin{proof}
Let us first prove that $\Ss'$ is connected.
For every face $f$ of $G'$, $\Ss'[V(C'_f)]$ is a path. Since $\Ss$ is a tree of
$G^\star$, it connects every face of $G$. Thus $\Ss'$ connects all of the $\Ss'[V(C'_f)]$, and thus every vertex of $V(G'')\setminus V(G')$. Lastly, every vertex of $V(G') \setminus P$ is connected to at least one of its copies.

 Now, let us prove that $\Ss'$ has no cycle. The fourth point in the
 construction of $\Ss'$ adds vertices of degree $1$ in $\Ss'$, so they can not
 be in a cycle and we do not need to consider them. Consider for contradiction a
 cycle $C$ in $\Ss'$. Since for every face $f$ of $G'$, $\Ss'[V(C'_f)]$ is a path,
 if we contract each $C'_f$, the graph resulting from $C$ still has a cycle. By
 definition, this cycle is a subdivision of a cycle in $\Ss$, a contradiction.
\end{proof}

Since $\Ss'$ is a tree, it has exactly $|R| - 1 = |G''| - |P| - 1$ edges. The only
vertices of degree at least $2$ in $\Ss'$ that are vertices of $G$ are vertices
subdividing dual edges of edges of $\Ss$. Since $\Ss$ does not cross $\Tt$, it
follows that $\Ss'$ does not cross $\Tt'$. Lastly, $|\Tt'| = 2|\Tt| \le 2k$.
Therefore $\Ss'$ and $\Tt'$ are witnesses that $(G'',B,R,\ell)$ is a yes-instance to \BST{}.

Now assume that $(G'',B,R,\ell)$ is a yes-instance to \BST{}. Let $\Tt'$ be the
tree covering $B$ and $\Ss'$ be the tree covering $R$. Since $\Ss'$ needs to cover
$R$, it follows that $||R|| \ge |R| - 1 = |G''| - |P| - 1$, therefore $|\Tt'| \le 2k$. 

Let us modify $\Tt$ as follows: for every copy of a vertex in $\Tt'$, replace it
by the vertex of $G'$ it is a copy of (both as a vertex and as an endpoint of
its edges). If this would lead to loops, remove them, and if this would lead to
multi-edges, replace it with a single edge. This leads to a subgraph $H$ of
$G'$, since two copies of vertices are adjacent only if they are copies of
adjacent vertices. This may lead to reducing the number of edges of $\Tt'$ (since we may remove some loops or multi-edges) but may not increase it. The graph $H$ is a connected graph spanning $B$.
 In $H$, contract every vertex of $V(G') \setminus V(G)$ to one of its neighbors
 in $H$. This divides the number of edges of $H$ by at least two. Lastly, take a
 spanning tree $\Tt$ of $H$. We have $|\Tt| \le \frac{|\Tt'|}{2} \le k$. Therefore $(G,P,k)$ is a yes-instance of {\sc Steiner Tree}.

This concludes the correctness proof of our reduction.
\end{proof}

\section{\MSP~on the Plane}
\label{sec:msp-np}

In this section we consider the following problem.

\MSP{} \todo{KF: We should have the same style of problem definition everywhere I guess.}

INPUT: a set of pairs of points on the plane, a multi-set of pairs of points $P \subseteq \mathbb{R}^2$ called terminals, and a real number $\ell \in \mathbb{R}$. 

PROBLEM: Are there $|P|$ noncrossing paths, each one linking \todo{KF: connecting?} a different pair of points in $P$, such that their total length is a most $\ell$?

\thmNPpaths*

In the rest of this section, we will describe a proof of Theorem~\ref{thm:MSPhard}.
\todo{KF: I guess we should (again) mention the issue with the paper of Fekete et al.}

\subsection{Preliminary Remarks} \label{prem_rem}

First note that we can consider instances where several terminals are identical.
As in our definition section \todo{KF: refer and rephrase \emph{definition section}} 
we require paths of different colors to be
disjoint, also the terminals need to be disjoint. \todo{KF: I don't understand the following sentence.} An actual instance to consider
would be to actually consider disjoint terminals, but infinitesimally close to
each other, and infinitesimally close to the positions described in the
remainder of this proof. In such an instance, the solutions would be arbitrarily
close to the one \todo{KF: But such a solution does not exists as it would contain non-disjoint paths!} 
for the instance where terminals have the same position.
Therefore, we can select the terminals in such a way that the optimal solution
will still be as described, and our NP-hardness reduction will still work. 

The next remark is that an optimal solution never uses a non-terminal as an
inflexion point. If it did, then among the paths using non-terminal $v$, choose
the one with the sharpest angle, say $p$.  By shifting one of the paths $p$ that
visits $v$ slightly toward the interior of the angle (keeping the other
inflexion points identical), one would reduce the length of $p$. If this does
not lead to a feasible solution (because it needs to be noncrossing), it means
that other paths used $v$ with the exact same angle, and we can change the
inflexion point $v$ for all of them similarly, and get a better solution.

\subsection{General Strategy} \label{gen_strat}

We will reduce from {\sc Max 2-SAT}. In this problem, an instance is a formula
in conjunctive normal form where every clause has exactly two literals, and we
want to know the largest number of clauses we can satisfy. This problem is known
to be NP-hard~\cite{complexity-book}.

Let $k$ be the number of clauses and $n$ be the number of variables in the instance.

We want to partition pairs of terminals into four levels (or groups) such that
these levels essentially do not interfere with each other. More formally, we
want that in an optimal solution, all paths of level $i$ are exactly those of an
optimal solution of the instance where terminals of higher levels are
removed.

Let us assume that we have an instance of the construction that will be defined in the following section, with $n_i$ pairs of terminals of level $i$ for $i \in \{1,2,3,4\}$. Assume that all of the $n_i$'s are polynomial in term of the size of the original instance of {\sc Max 2-SAT}.

In Section~\ref{my_analysis}, we make sure that there exists a constant $c_4 >
0$ such that any path of level 1, 2, or 3 varying from the optimal solution
where vertices of level 4 are removed will increase the sum of the lengths of
those paths by at least $c_4$. Note that, as argued in the previous subsection,
we only need to consider changes that make a path go around some terminal, as
other changes can not lead to an optimal solution.

Now we analyse cost incurred by paths of level $4$ by making them avoid
paths on the remaining levels.
Note that, each path of level 4 may be increased by at most a constant $C_4$ (from those of a solution without paths of level 1, 2, or 3). Namely, $C_4$ is at most twice the sum of the lengths of all of the paths of level 1, 2, or 3 in an optimal solution where terminals of level 4 are removed, in order to go around those paths. Now, we will take several copies of each pair of terminals of level 1, 2, and 3. Note that taking copies of already existing paths of levels 1, 2, and 3 may not further perturb the paths of level 4, that already avoid them. Thus, by taking at least $\lceil\frac{C_4n_4}{c_4}\rceil + 1$ copies of each pair of terminals of level at most 3, we make sure that it is more costly for paths of level at most $3$ to be altered in any way than for paths of level $4$ to avoid those paths. By doing that, we make sure that every optimal solution of the whole instance coincides, on paths of levels at most $3$, with an optimal solution of the instance where terminals of level 4 are removed.
Let $n_1' = (\lceil\frac{C_4n_4}{c_4}\rceil + 1)n_1$, $n_2' = (\lceil\frac{C_4n_4}{c_4}\rceil + 1)n_2$, and $n_3' = (\lceil\frac{C_4n_4}{c_4}\rceil + 1)n_3$ be the new number of terminals of levels 1, 2, and 3, respectively.

Now consider an optimal solution of the instance where terminals of level 4 are removed.
Similarly to what is above, in Section~\ref{my_analysis}, we will make sure that there exists a constant $c_3 > 0$ such that any path of level 1 varying from the optimal solution where vertices of level 3 are removed will increase the sum of the lengths of those paths by at least $c_3$. By being made to avoiding paths of levels 1 and 2, each path of level 3 may be increased by at most a constant $C_3$ (from those of a solution without paths of level 1 or 2). By taking at least $\lceil\frac{C_3n_3'}{c_3}\rceil + 1$ copies of each pair of terminals of level 1 and 2, we make sure that it is more costly for paths of level $1$ and $2$ to be altered in any way than for paths of level $3$ to avoid those paths. 

We do that one last time, increasing the number of copies of pairs of terminals of level 1 according to a constant $c_2$. Note that the number of terminals of each level is still a polynomial in terms of the size of the original instance of {\sc Max 2-SAT}.

The idea is that paths of level 1 up to 3 will depend only on whether each variable is positive or negative in a solution, and always have exactly the same sum of lengths. The paths of level 4 will be used for each of the clauses, and will be slightly longer for the clauses that are not validated (and always have the same length otherwise). That way, the sum of the lengths will be an affine function of the number of non-validated clauses.

\subsection{Description of the Gadgets} \label{my_analysis}

We will need a series of gadgets. 
\begin{figure}[h!]
    \begin{center}
      \begin{tikzpicture}[scale = 1.5]
        \coordinate (u) at (0,0);
        \coordinate (x) at (1,1);
        \coordinate (w) at (1,-1);
        \coordinate (v) at (2,0);

	\draw (u) node [left] {$u$} ;
	\draw (v) node [right] {$v$} ;
	\draw (w) node [below] {$w$} ;
	\draw (x) node [above] {$x$} ;

        \draw[blue, very thick] (w) -- (x);
        \draw[red,thick] (w) -- (v);
        \draw[red,thick] (w) -- (u);
        \draw[red, dashed,thick] (x) -- (v);
        \draw[red, dashed,thick] (x) -- (u);
      \end{tikzpicture}
\end{center}
\caption{Gadget $G1$. The level 1 path $wx$ (in blue) has to be a straight line. Therefore the level 2 path $uv$ needs to eitheir go above it (dashed red line) or below it (full red line). \label{G1}}
\end{figure}
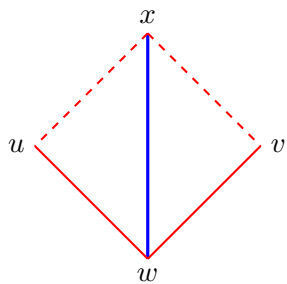

\paragraph*{First Gadget ($G1$) (Fig.~\ref{G1}):} \todo{KF: Use subscript, i.e., $G_1$ etc.?}

A pair of level 2 terminals $u$ and $v$, placed horizontally at a distance $1 \le d \le 2$, say at coordinates $(0,0)$ and $(d,0)$ respectively; and a pair of level 1 terminals $w$ and $x$ at distance $1$ above and below the axis $(u,v)$ at mid-distance, say $w$ at coordinates $(\frac{d}{2}, -1)$ and $x$ at coordinates $(\frac{d}{2},1)$.
Note that we will be able to vary $d$ for each instance of the gadget.

Assume that the closest terminal to the segment $[w,x]$ that is not on $[w,x]$ is at distance at least $\frac{1}{4}$ from $[w,x]$. We will make sure that every pair of terminals of level 1 is part of a gadget $G1$. Thus they do not interfere with each other, and can all be straight lines together.
We know, because of the discussion in Section~\ref{prem_rem}, that only
terminals are used as inflexion points. Therefore, if we do not use a straight
line between $w$ and $x$, we know that we must at least use the closest
terminal, which will be at distance at least $\frac{1}{4}$ from the segment
$[w,x]$. Therefore in that case, we would increase the length of this path by at
least $2\sqrt{\frac{1}{16} + 1}  - 2 > 0.06$. Note that the important part is
that this is a positive constant. In the similar reasoning later on, we will not
be explicit about the constants.
We can take $c_2 = 0.06$, $c_3 \le 0.06$, and $c_4 \le 0.06$, and we know that if any level 1 path is not a straight line, it will worsen the solution by at least $c_2$, $c_3$ and $c_4$, as required in Section~\ref{gen_strat}. Therefore we can assume that paths of level $1$ are straight lines, and other paths go around them.
In particular, in one instance of gadget $G1$, the path $uv$ can either go above or below $[w,x]$, as in Fig.~\ref{G1}. Note that the two possibilities have the same length.



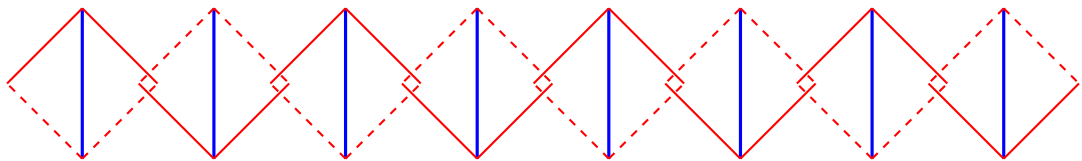
\begin{figure}[h!]
    \begin{center}
      \begin{tikzpicture}
	\def\n{8}

	\foreach\x in {1,...,\n}{ 	\coordinate (u\x) at (1.75*\x,0);
        						\coordinate (x\x) at (1.75*\x+1,1);
        						\coordinate (w\x) at (1.75*\x+1,-1);
        						\coordinate (v\x) at (1.75*\x+2,0);

        						\draw[blue, very thick] (w\x) -- (x\x);
						\pgfmathparse{int mod(\x,2)}
  						\ifnum\pgfmathresult=0{
                                \draw[red,thick] (w\x) -- (v\x);
                                \draw[red,thick] (w\x) -- (u\x);
                                \draw[red, dashed,thick] (x\x) -- (v\x);
                                \draw[red, dashed,thick] (x\x) -- (u\x);}
						\else{
                                \draw[red, dashed,thick] (w\x) -- (v\x);
                                \draw[red, dashed,thick] (w\x) -- (u\x);
                                \draw[red,thick] (x\x) -- (v\x);
                                \draw[red,thick] (x\x) -- (u\x);
						}
						\fi}

      \end{tikzpicture}
\end{center}
\caption{Variable gadget. The levels are such that the level 2 paths (in red) either all go along the full lines, which corresponds to the variable being true, or all go around the dashed lines, which corresponds to the variable being false. \label{Gvar}}
\end{figure}

\paragraph*{Variable Gadget (Fig.~\ref{Gvar}):}

On a line, put enough, say $20(k+1)$, instances of $G1$, each initially with $d
= \frac{3}{2}$. Align them so that the terminal $u$ for the first gadget is at
coordinates $(0,0)$ for the variable gadget; then for each $i\in\{2,...,10c\}$,
the terminal $u$ for the $i$-th instance of $G1$ is $\frac{1}{4}$ to the left of
the terminal $v$ for the $(i-1)$-st instance of $G1$. Every other terminal is translated similarly to the right.

Note that since $d \ge 1$, we add no terminal at distance less than $\frac{1}{4}$ from $[w,x]$ in any instance of $G1$.

Informally, in this gadget, if in one instance of $G1$ we go around the segment $[w,x]$ via the top side, then in the next gadget, we will go around the segment $[w,x]$ via the bottom side, and vice versa. We want the path from $u$ to $v$ either to be exactly the segment $[u,x]$ followed by $[x,v]$, or the segment $[u,w]$ followed by $[w,v]$.
To ensure this, we further require that no terminal on the segment $[u,v]$ is closer to $u$ or $v$ than $\frac{1}{8}$, and that no other terminal is closer to the two possible paths from $u$ to $v$ described above than $\frac{1}{8}$.

If the path from $u$ to $v$ does not behave as stated above, we lose at least some positive constant. Therefore, as in the previous analysis, we can take $c_3$ and $c_4$ to be lower than this constant, and make sure that every path of level $2$ behaves as we want.


In other words, in a variable gadget, either in the first copy of $G1$, we go around the segment $[w,x]$ via the top side, or we go around this segment via the bottom side. In the first case, we will say that the corresponding variable is positive, and in the second case, we will say that it is negative. The rest of the variable gadget is completely defined by whether the variable is positive of negative (we alternate for each gadget $G1$). Note that both cases, the variable being positive or negative, yield the same path lengths for the variable gadget. See Fig.~\ref{Gvar}.

We say that a gadget $G1$ is \emph{positive} if we go around $[w,x]$ on the same side as the first one of the gadget, and \emph{negative} otherwise. For a variable gadget, the copies of $G1$ will thus alternate between positive and negative.

In variable gadgets, we can modify the value of $d$ for each copy of $G1$ independently, thus increasing or decreasing its length by at most $\frac{1}{2}$.
By doing so, over eight copies of $G1$, one may move the following copy of $G1$
by up to four either to the right or to the left, and go back to the normal positions of the copies of $G1$ eight copies later. This will enable us to assume that some specific vertical positions will correspond to any arbitrary position according to a copy of $G1$, as long as they have at least eighteen copies of $G1$ between them. We can even choose if the corresponding copy of $G1$ will be positive or negative.

\paragraph*{Gadget $G2$:} 

The gadget $G2$ is very similar to the gadget $G1$, except that it is turned by 90 degrees, and the levels are increased by 1. It consists of a pair of level 3 terminals $u$ and $v$, placed vertically at a distance $1 \le d \le 2$, say at coordinates $(0,0)$ and $(0,d)$ respectively; and a pair of level 2 terminals $w$ and $x$ at distance $1$ left and right the axis $(u,v)$ at mid-distance, say $w$ at coordinates $(-1, \frac{d}{2})$ and $x$ at coordinates $(1, \frac{d}{2})$.

This gadget behaves very similarly to the gadget $G1$. Just like gadget $G1$, we will combine several of them together, and make (vertical) chains. Informally, it will be used to transmit vertically the information of whether a given variable is true or false.

\paragraph*{Fork Gadget (Fig.~\ref{Gfork})}
\begin{figure}[h!]
    \begin{center}
      \begin{tikzpicture}[scale = 1.5]
	\def\n{2}

	\foreach\x in {-1,0}{ 	\coordinate (u\x) at (1.75*\x,0);
        						\coordinate (x\x) at (1.75*\x+1,1);
        						\coordinate (w\x) at (1.75*\x+1,-1);
        						\coordinate (v\x) at (1.75*\x+2,0);

        						\draw[blue, very thick] (w\x) -- (x\x);
						\pgfmathparse{int mod(\x,2)}
  						\ifnum\pgfmathresult=0{
                                \draw[red,thick] (w\x) -- (v\x);
                                \draw[red,thick] (w\x) -- (u\x);
                                \draw[red,dashed,thick] (x\x) -- (v\x);
                                \draw[red,dashed,thick] (x\x) -- (u\x);}
						\else{
                                \draw[red,dashed,thick] (w\x) -- (v\x);
                                \draw[red,dashed,thick] (w\x) -- (u\x);
                                \draw[red,thick] (x\x) -- (v\x);
                                \draw[red,thick] (x\x) -- (u\x);
						}
						\fi}

	\coordinate (y) at (0.125,0);
	\coordinate (z) at (0.125,-2);

        \coordinate (u) at (0.125,-1.75);
        \coordinate (x) at (1.125,-2.75);
        \coordinate (w) at (-0.875,-2.75);
        \coordinate (v) at (0.125,-3.75);

    \draw[black!30!green,thick] (y) -- (u0) -- (z);
    \draw[black!30!green,dashed,thick] (y) -- (v-1) -- (z);

        \draw[red, very thick] (w) -- (x);
        \draw[black!30!green,dashed,thick] (w) -- (v);
        \draw[black!30!green,dashed,thick] (w) -- (u);
        \draw[black!30!green,thick] (x) -- (v);
        \draw[black!30!green,thick] (x) -- (u);

      \end{tikzpicture}
\end{center}
\caption{Fork gadget. The previous gadget $G1$ is also represented.
The paths in blue have level 1, the paths in red have level 2, and the paths in green have level 3. The paths that are not thick either all need to follow the plain lines, or all need to follow the dashed lines. \label{Gfork}}
\end{figure}
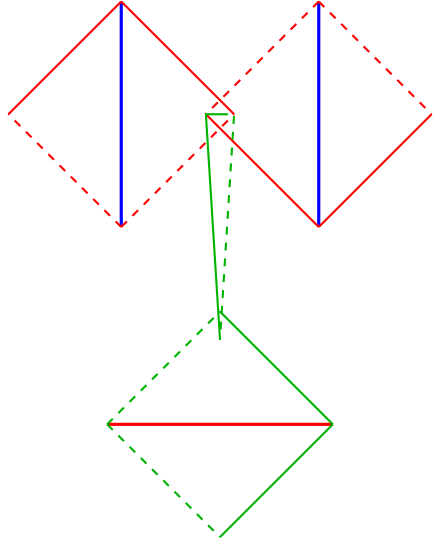

Another gadget we need is the fork gadget. This gadget is based on a $G1$ gadget that is part of a variable gadget, and adds to it\importantquestion{KF: What does it add?}. We will consider the coordinates according to that $G1$ gadget. We will further assume that the $G1$ gadget is not the first of its clause (therefore it has the point $v$ from the previous gadget at coordinates $(\frac{1}{4}, 0)$, that we will call $v'$).
Let $y$ at coordinates $(\frac{1}{8},0)$, and $z$ at coordinates $(\frac{1}{8}, -2)$, be a pair of level 3 terminals. Note that $y$ is at distance $\frac{1}{8}$ from vertex $u$ and from $v'$, and that $z$ is far enough away from any terminal not to make conflicts with the level system. 

As a level 3 path, the path $yz$ will need to go around the level 2 paths of the variable gadget. It will therefore need either to go through vertex $v'$ (in case our gadget $G1$ is positive) or through vertex $u$ (in case it is negative). It will branch with a gadget $G2$, whose vertex $u$ will be placed at coordinates $(\frac{1}{8}, -\frac{7}{4})$. Informally, it transmits the information whether the variable gadget is true or false from the horizontal variable gadget to a vertical chain of $G2$ gadgets. See Fig.~\ref{Gfork} for an illustration of this gadget.

We will make sure that no other terminal is close enough to interfere, and thus that any change in the level 3 paths would result in more loss that whatever can be gained from the level 4 paths.
Therefore, as previously, we can assume that the paths of level 3 will behave as mentioned in the previous paragraph.

\paragraph*{Crossing Gadget (Fig.~\ref{Gcross}):}
\begin{figure}[h!]
    \begin{center}
      \begin{tikzpicture}[scale = 1.5]

	\foreach\x in {-1,0,1}{ 	\coordinate (u\x) at (1.75*\x,0);
        						\coordinate (x\x) at (1.75*\x+1,1);
        						\coordinate (w\x) at (1.75*\x+1,-1);
        						\coordinate (v\x) at (1.75*\x+2,0);

        						\draw[blue, very thick] (w\x) -- (x\x);
						\pgfmathparse{int mod(\x,2)}
  						\ifnum\pgfmathresult=0{
                                \draw[red,thick] (w\x) -- (v\x);
                            \draw[red,thick] (w\x) -- (u\x);}
						\else{
                                \draw[red,thick] (x\x) -- (v\x);
                                \draw[red,thick] (x\x) -- (u\x);
						}
						\fi}

	\coordinate (y) at (1,2);
	\coordinate (z) at (1,-2);

    \draw[black!30!green,thick] (y) -- (v-1)--(u0) -- (z);
    \draw[black!30!green, dashed,thick] (y) -- (u1) -- (v0) -- (z);

      \end{tikzpicture}
\end{center}
\caption{Crossing gadget. The $G1$ gadgets before and after the current one are also represented.
The paths in blue have level 1, the paths in red have level 2, and the paths in green have level 3. 
Only one of the two possibilities is represented for the variable gadget, the other one is the symmetric one. The four possible green paths have the same length.
 \label{Gcross}}
\end{figure}
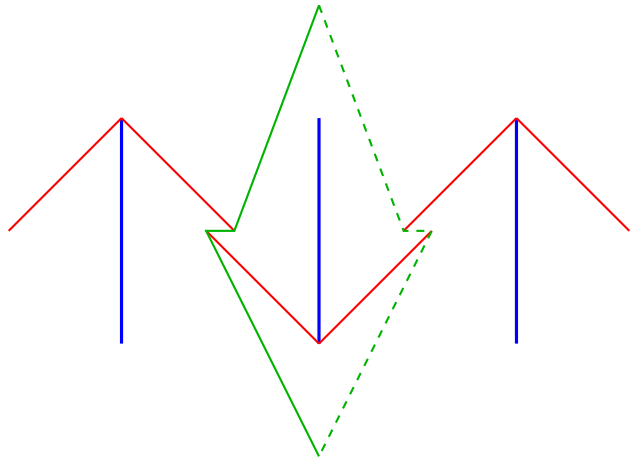

Our vertical chains of $G2$ gadgets need to be able to cross our horizontal chains of $G1$ gadgets. To do this, we will use the crossing gadget. It essentially behaves as a $G2$ gadget that uses a level~2 path from a $G1$ gadget that is part of a variable gadget as its middle path $wx$.

More formally, from a gadget $G1$ that is part of a variable gadget, we add a pair of level 3 vertices at coordinates $(\frac{d}{2}, 2)$ and $(\frac{d}{2}, -2)$. We then branch them to $G2$ gadgets normally (one with its vertex $u$ at coordinates $(\frac{d}{2}, -\frac{7}{4})$, and one symmetric above).

The two new vertices are far enough not to interfere with anything, and, as before, we can assume that the level 3 path is one of four paths with the same length, going either left or right of the level 2 path, independently from whether the $G1$ gadget is positive or negative. Therefore this gadget indeed enables our vertical and horizontal chains to intersect without interfering.

\paragraph*{Clause Gadget (Fig.~\ref{Gclo}):}

We are now ready to describe our clause gadgets. We will put them around some
horizontal position, far enough from each other. Let us take the $i$-th clause to
be placed at the middle of the $(10 + 20i)$-th gadget $G1$ of the variable gadgets (before changing the values of $d$). As argued above, this leaves enough space between two clause gadgets to make one coincide with any point in either a positive or a negative instance of $G1$.

Now consider the two variables that appear in the clause, $v_i$ and $v_j$. 
Let us assume without loss of generality that the gadget corresponding to $v_i$
is above our position. 
Let us modify the values of $d$ for the variable gadget corresponding to $v_i$
(resp. $v_j$) such that the vertical position corresponds to $\frac{1}{8}$ for
the coordinate of some gadget $G1$, that is positive if $v_i$(resp. $v_j$)
appears positively in the clause, and negatively otherwise. Now this vertical position is correct to have a fork gadget with the variable gadgets of $v_i$ and $v_j$ (the one in the bottom being the symmetric of the one presented above), and to have a crossing gadget with every other variable gadget. 

Now as we mentioned in the part about gadget $G2$, we chain the $G2$ gadgets similarly to how we chain the gadgets $G1$ to get a variable gadgets. Also similarly to the variable gadgets, by modulating the values of $d$ and taking the variable gadgets sufficiently far apart, we can make our $G2$ gadgets coincide with the other gadgets. 

The last part is to place, between two variable gadgets, for instance right below the gadget for variable $v_i$, the clause gadget, that will cost more if the clause is not verified by the solution. For this clause gadget, we take a pair of level 4 terminals $u$ and $v$ at coordinates $(0,0)$ and $(0,-2)$ respectively. 
Between them we put not one but two pairs of level 2 terminals, at vertical positions $-\frac{3}{4}$ and $-\frac{5}{4}$, respectively. For the top pair, we put the left vertex at coordinates $(-1,-\frac{3}{4})$, and the right one at coordinates $(\alpha, -\frac{3}{4})$. For the bottom pair, we put the left vertex at coordinates $(-\alpha,-\frac{5}{4})$, and the right one at coordinates $(1,-\frac{5}{4})$. We do this for a value of $0 <\alpha<1$ such that the three black paths depicted in Fig.~\ref{Gclo} have the same length. This value of $\alpha$ exists by continuity.

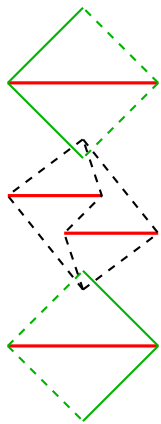
\begin{figure}[h!]
    \begin{center}
      \begin{tikzpicture}

	\foreach\x in {-1,1}{ 	\coordinate (u\x) at (0,1.75*\x);
        						\coordinate (x\x) at (1,1.75*\x+1);
        						\coordinate (w\x) at (-1,1.75*\x+1);
        						\coordinate (v\x) at (0,1.75*\x+2);

        						\draw[red, very thick] (w\x) -- (x\x);
  						\ifnum\x=1{
                                \draw[black!30!green,thick] (w\x) -- (v\x);
                                \draw[black!30!green,thick] (w\x) -- (u\x);
                                \draw[black!30!green, dashed,thick] (x\x) -- (v\x);
                            \draw[black!30!green, dashed,thick] (x\x) -- (u\x);}
						\else{
                                \draw[black!30!green, dashed,thick] (w\x) -- (v\x);
                                \draw[black!30!green, dashed,thick] (w\x) -- (u\x);
                                \draw[black!30!green,thick] (x\x) -- (v\x);
                                \draw[black!30!green,thick] (x\x) -- (u\x);
						}
						\fi}

	\coordinate (u) at (0,0);
        	\coordinate (x) at (1,0.75);
        	\coordinate (w) at (-0.25,0.75);
        	\coordinate (y) at (-1,1.25);
        	\coordinate (z) at (0.25,1.25);
        	\coordinate (v) at (0,2);

        	\draw[red, very thick] (w) -- (x);
        	\draw[red, very thick] (y) -- (z);
            \draw [dashed,thick] (u) -- (x) -- (v);
            \draw [dashed,thick] (u) -- (y) -- (v);
            \draw [dashed,thick] (u) -- (w) -- (z) -- (v);

      \end{tikzpicture}
\end{center}
\caption{The clause gadget. Level 2 paths are red, level 3 paths are green, and level 4 paths are dashed black. The three black paths have the same length, and at least one of them is possible unless both of green paths use dashed lines, which corresponds to the two literals being false in the clause.
\label{Gclo}}
\end{figure}

Now the three black paths have the same length, and one of them is possible unless both of the literals that appear in the clause are false. Note that we can suppose that we altered the value of $d$ for $G2$ gadgets until the $G2$ gadgets right above and below our clause gadget are in the right direction, depending on whether the literals appear positively or negatively in the clause.

The level 4 terminals are far enough from everything else not to interfere with
our level reasoning. By this reasoning, the level 4 paths cannot get enough
length to counterbalance any change to other paths. Moreover, the descriptions
above enable us to get an optimal solution for terminal up to level 3 from any
truth assignment of the literals. Then each pair of terminal of level 4, that
corresponds to a clause, will have the same length unless the clause is not
verified by the assignment, in which case it will have to go around an
additional level 3 path, and lose a constant value per non-verified clause.
Therefore our instance of \MSP{} is equivalent to the instance of {\sc Max 2-SAT}.


\bibliographystyle{abbrv}
\bibliography{bib}

\end{document}